\newtheorem{theorem}{Theorem}
\newtheorem{lemma}{Lemma}
\newtheorem{definition}{Definition}
\newtheorem{proposition}{Proposition}
\newtheorem{corollary}{Corollary}
\newtheorem{example}{Example}
\newtheorem{remark}{Remark}
\newtheorem{note}{Note}
\newtheorem{case}{Case}
\begin{document}

\title{Fast-Decodable MIDO Codes with Large Coding Gain}
\vspace{1.00cm}
\author{K. Pavan Srinath and B. Sundar Rajan, {\it Senior Member, IEEE}
\thanks{K. Pavan Srinath is with Broadcom Communication Technologies Pvt. Ltd., Bangalore. This work was carried out when he was with the Department of Electrical Communication Engineering, Indian Institute of Science, Bangalore. Email: srinath.pavan@gmail.com.}
 \thanks {B. Sundar Rajan is with the Department of ECE, Indian Institute of Science, Bangalore - 560012. Email: bsrajan@ece.iisc.ernet.in.}
\thanks {Part of the content of this manuscript will appear in the proceedings of IEEE Int. Symp. Inf. Theory (ISIT 2013), Istanbul, Turkey, July 07-12, 2013.}
}

\maketitle
\vspace{-15mm}
\begin{abstract}
In this paper, a new method is proposed to obtain full-diversity, rate-$2$ (rate of 2 complex symbols per channel use) space-time block codes (STBCs) that are full-rate for multiple input, double output (MIDO) systems. Using this method, rate-$2$ STBCs for $4\times2$, $6 \times 2$, $8\times2$ and $12 \times 2$ systems are constructed and these STBCs are fast ML-decodable, have large coding gains and STBC-schemes consisting of these STBCs have a non-vanishing determinant (NVD) so that they are DMT-optimal for their respective MIDO systems. It is also shown that the Srinath-Rajan code [R. Vehkalahti, C. Hollanti, and F. Oggier, ``Fast-Decodable Asymmetric Space-Time Codes from Division Algebras,'' IEEE
Trans. Inf. Theory, Apr. 2012] for the $4\times2$ system, which has the lowest ML-decoding complexity among known rate-2 STBCs for the $4\times2$ MIDO system with a large coding gain for $4$-$/16$-QAM, has the same algebraic structure as the STBC constructed in this paper for the $4\times2$ system. This also settles in positive a previous conjecture that the STBC-scheme that is based on the Srinath-Rajan code has the NVD property and hence is DMT-optimal for the $4\times2$ system.
\end{abstract}

 \begin{keywords}
Cyclic division algebra, fast-decodability, Galois group, MIDO system, non-vanishing determinant, space-time block codes. 
 \end{keywords}

\begin{table*}
\begin{center}
\begin{threeparttable}
\begin{tabular}{|c|c|c|c|c|} \cline{1-5}
\multirow{3}{*}{$\#$ Tx antennas} & \multirow{3}{*}{STBC $\mathcal{S}$}  & Constellation &\multirow{3}{*}{$\delta_{min}(\mathcal{S})$} & ML-decoding \\ 
& & (average energy $E$) & & complexity\\ 
& & & & (Worst case)\\ \hline \hline
 \multirow{10}{*}{$4$} & \multirow{2}{*}{$\mathcal{S}_{4\times 2}$} & \multirow{2}{*}{QAM} &   \multirow{2}{*}{$ \frac{1}{25E^4}$}  &  \multirow{2}{*}{$ \mathcal{O}\left(M^{4.5}\right)$ }\\ 
& & & & \\ \cline{2-5}
 & Punctured\tnote{{\bf\textdollar}}  & \multirow{2}{*}{QAM} & \multirow{2}{*}{$\frac{16}{1125E^4}$} & \multirow{2}{*}{$\mathcal{O}\left(M^{5.5}\right)$} \\ 
& perfect code \cite{ORBV} & & & \\ \cline{2-5}
& \multirow{2}{*}{$\mathcal{C}_1$ \cite[Sec. VIII-B]{roope}, \cite{spcom}} & \multirow{2}{*}{QAM} & \multirow{2}{*}{$ \frac{1}{25E^4}$} & \multirow{2}{*}{$ \mathcal{O}\left(M^{6.5}\right)$} \\
& & & & \\ \cline{2-5}
& \multirow{2}{*}{$A_4$ \cite[Sec. VIII-A]{roope}} & \multirow{2}{*}{QAM} & \multirow{2}{*}{Not Available} & \multirow{2}{*}{$ \mathcal{O}\left(M^{5.5}\right)$} \\
& & & & \\ \cline{2-5}
& \multirow{2}{*}{Punctured $\mathcal{C}_4$ \cite{pav_modified}} & \multirow{2}{*}{QAM} & \multirow{2}{*}{$ \frac{1}{16E^4}$} & \multirow{2}{*}{$ \mathcal{O}\left(M^{7}\right)$} \\
& & & & \\ \hline \hline
 \multirow{10}{*}{$6$} & \multirow{2}{*}{$\mathcal{S}_{6\times 2}$} & \multirow{2}{*}{HEX} &  \multirow{2}{*}{$\frac{1}{7^4E^6}$} & \multirow{2}{*}{$\mathcal{O}\left(M^{8.5}\right)$} \\ 
& & & & \\ \cline{2-5}
 & Punctured & \multirow{2}{*}{HEX} & \multirow{2}{*}{$\frac{1}{7^5E^6} \leq \delta_{min}\leq \frac{1}{7^4E^6}$} & \multirow{2}{*}{$\mathcal{O}\left(M^{11.5}\right)$} \\
 & perfect code \cite{ORBV} &  & & \\ \cline{2-5}
 & \multirow{2}{*}{Punctured $\mathcal{C}_6$ \cite{pav_modified}} & \multirow{2}{*}{HEX} & \multirow{2}{*}{$\frac{1}{(3E)^6}$} & \multirow{2}{*}{$\mathcal{O}\left(M^{11.5}\right)$} \\
 &  &  & & \\ \cline{2-5}
 & VHO-Code & \multirow{2}{*}{QAM} & \multirow{2}{*}{Not available\tnote{{\bf\pounds}}} & \multirow{2}{*}{ $ \mathcal{O}\left(M^{8.5}\right)$} \\
& \cite[Sec. X-C]{roope} & & & \\ \cline{2-5}
& VHO-Code & \multirow{2}{*}{QAM} & \multirow{2}{*}{Not available\tnote{{\bf\textyen}}} & \multirow{2}{*}{ $ \mathcal{O}\left(M^{7}\right)$} \\
& (Change of Basis) & & & \\ \hline \hline
 \multirow{6}{*}{$8$}& \multirow{2}{*}{$\mathcal{S}_{8\times 2}$} & \multirow{2}{*}{QAM} &  \multirow{2}{*}{$\frac{1}{25(15)^4E^8}$} & \multirow{2}{*}{$ \mathcal{O}\left(M^{9.5}\right)$} \\ 
& & & & \\ \cline{2-5}
  & Punctured  & \multirow{2}{*}{QAM} & \multirow{2}{*}{$\frac{1}{5^72^{16}E^8}$} & \multirow{2}{*}{$\mathcal{O}\left(M^{15.5}\right)$} \\
&perfect code \cite{new_per} & & & \\ \cline{2-5}
 & \multirow{2}{*}{VHO-Code \cite{roope}} & \multirow{2}{*}{QAM} &  \multirow{2}{*}{Not Available\tnote{{\bf\textyen}}} & \multirow{2}{*}{$ \mathcal{O}\left(M^{9.5}\right)$} \\ 
& & & & \\    \hline \hline
\multirow{4}{*}{$12$} & \multirow{2}{*}{VHO-Code \cite{roope}} & \multirow{2}{*}{QAM} &  \multirow{2}{*}{Not Available\tnote{{\bf\textyen}}} & \multirow{2}{*}{$ \mathcal{O}\left(M^{14.5}\right)$} \\ 
& & & & \\   \cline{2-5}
 & \multirow{2}{*}{$\mathcal{S}_{12\times 2}$} & \multirow{2}{*}{HEX} &  \multirow{2}{*}{$\delta_{min} \geq  \frac{1}{(14E)^{12}}$} & \multirow{2}{*}{$ \mathcal{O}\left(M^{17.5}\right)$} \\ 
& & & & \\ \hline
\end{tabular}
\begin{tablenotes}
\item[{\bf \textdollar}] Punctured STBCs for $n_r <n_t$ refer to rate-$n_r$ STBCs obtained from rate-$n_t$ STBCs (which transmit $n_t^2$ complex information symbols in $n_t$ channel uses) by restricting the number of complex information symbols transmitted to be only $n_tn_r$.
\item[{ \bf \pounds}] The exact minimum determinant of this STBC has not been calculated, but it has been shown that the STBC has the NVD property \cite{roope}.
\item[{ \bf \textyen}] These STBCs are not available explicitly in \cite{roope}. However, it is possible to construct such STBCs with the ML-decoding complexities shown in corresponding row.
\end{tablenotes}
\end{threeparttable}
\end{center}
\caption{Comparison of our STBCs with known best STBCs.}
\label{tab1}
\hrule
\end{table*}

\section{Introduction and Background}\label{sec_intro}
Space-time block coding \cite{TSC} has been continually evolving over the last decade. Beginning with the simple Alamouti code \cite{SMA} for 2 transmit antennas, the evolution of space-time coding theory has resulted in the development of sophisticated full-diversity codes from cyclic division algebras (CDAs) \cite{sethuraman}-\cite{new_per} for any number of transmit antennas. At one end are the rate-1 (see Definition \ref{code_rate}) STBCs that are multi-group decodable (see, for example, \cite{KhR}-\cite{4gp2} for a definition of multi-group decodable STBCs) and have a relatively low maximum likelihood (ML)-decoding complexity while at the other end are rate-$n_t$ (for $n_t$ transmit antennas) full-diversity STBCs obtained from CDAs which have a very high ML-decoding complexity. The usage of powerful tools from number theory has resulted in rate-$n_t$ (for $n_t$ transmit antennas) STBCs with high coding gains, and STBC-schemes (see Definition \ref{scheme}) employing these codes have a non-vanishing determinant (see Definition \ref{nvd_def}) so that they are diversity-multiplexing gain tradeoff (DMT)-optimal \cite{elia} for any number of receive antennas. Examples of such codes are the perfect codes \cite{ORBV}, \cite{new_per}.

Recent interest has been towards asymmetric MIMO systems where the number of receive antennas $n_r$ is less than the number of transmit antennas $n_t$. Such a scenario occurs, for example, in the downlink transmission from a base station to a mobile phone, and in digital video broadcasting (DVB) where communication is between a TV broadcasting station and a portable TV device (see, for example, \cite{dvb}). Of particular interest is the $4\times 2$ MIDO system for which a slew of rate-$2$ STBCs have been developed \cite{roope}, \cite{pav_rajan2}, \cite{BHV}-\cite{markin}, with the particular aim of allowing fast-decodability (see Definition \ref{fast_dec}), a term that was first coined in \cite{BHV}. Among these codes, those in \cite{roope}, \cite{FVC}-\cite{markin} have been shown to have a minimum determinant that is bounded away from zero irrespective of the size of the signal constellation and hence STBC-schemes that consist of these codes have the NVD property and are DMT-optimal for the $4\times2$ MIDO system \cite{pav_rajan_dmt}. A generalization of fast-decodable STBC construction for higher number of transmit antennas has been proposed in \cite{roope}. STBCs from nonassociative division algebras have also been proposed in \cite{unger}.

The best performing code for the $4\times2$ MIDO system is the Srinath-Rajan code \cite{pav_rajan2} which has the least ML-decoding complexity (of the order of $M^{4.5}$ for a square $M$-QAM) among comparable codes and the best known normalized minimum determinant (see Definition \ref{min_det_stbc}) for $4$-$/16$-QAM. However, this code was constructed using an ad hoc technique and had not been proven to have a non-vanishing determinant for arbitrary QAM constellations. In this paper, we propose a novel construction scheme to obtain rate-$2$ STBCs which have full-diversity, and STBC-schemes that employ these codes have the NVD property. We then explicitly construct such STBCs for $n_t\times2$ MIDO systems, $n_t=4,6,8,12$, and these codes are fast-decodable and have large normalized minimum determinants.

\subsection{Contributions and paper organization}
The contributions of this paper may be summarized as follows.
\begin{enumerate}
 \item We propose a novel algebraic method to construct rate-$2$ STBCs with full-diversity. A highlight of our mathematical framework is that it is a generalization of the frameworks of \cite{markin} and \cite{unger}. 
\item Using our construction methodology, we construct rate-$2$, fast-decodable STBCs for $4\times2$, $6\times2$, $8\times2$ and $12\times2$ MIDO systems. All these four STBCs have large normalized minimum determinants and fast-decodability (see Table \ref{tab1}). In addition, STBC-schemes that consist of these STBCs have the NVD property making them DMT-optimal for their respective MIDO systems.
\item We show that the Srinath-Rajan (SR) code \cite{pav_rajan2} has the same underlying algebraic structure as the STBC constructed in this paper for the $4\times2$ system. This way, we prove the conjecture that the STBC-scheme based on the SR-code has the NVD property.
\end{enumerate}

The paper is organized as follows. Section \ref{sec_system_model} gives the system model, relevant definitions and a brief overview of CDAs. Section \ref{sec_gen_scheme} builds the theory needed to obtain rate-$2$ STBCs, while Section \ref{sec_STC} deals with the construction of fast-decodable STBCs for $4\times2$, $6\times2$, $8\times2$ and $12\times2$ systems. The property of the constructed STBCs that allows fast-decodability is explained in Section \ref{sec_ml_comp}, and simulation results are given in Section \ref{sec_sim}. Concluding remarks constitute Section \ref{sec_discussion}.

\subsection*{Notation}
\noindent Throughout the paper, the following notation is employed. 
\begin{itemize}
 \item Bold, lowercase letters denote vectors, and bold, uppercase letters denote matrices.
 \item $\textbf{X}^{H}$, $\textbf{X}^{T}$, $det(\textbf{X})$, $tr(\textbf{X})$ and $\Vert \textbf{X} \Vert$ denote the conjugate transpose, the transpose, the determinant, the trace and the Frobenius norm of $\textbf{X}$, respectively. 
\item $\textrm{diag}[\textbf{A}_1,\textbf{A}_2,\cdots,\textbf{A}_n]$ denotes a block diagonal matrix with matrices $\textbf{A}_1$, $\textbf{A}_2$, $\cdots$, $\textbf{A}_n$ on its main diagonal blocks.
\item The real and the imaginary parts of a complex-valued vector $\textbf{x}$ are denoted by $\mathcal{R}e(\textbf{x})$ and $\mathcal{I}m(\textbf{x})$, respectively.
\item $\vert \mathcal{S}\vert$ denotes the cardinality of the set $\mathcal{S}$, and for a set $\mathcal{T}$ such that $\vert \mathcal{T}\cap \mathcal{S}\vert \neq 0 $, $\mathcal{S} \setminus  \mathcal{T}$ denotes the set of elements of $\mathcal{S}$ not in $\mathcal{T}$. 
\item $\textbf{I}$ and $\textbf{O}$ denote the identity and the null matrix of appropriate dimensions.
\item $\mathbb{E}(X)$ denotes the expectation of the random variable $X$. 
\item $\mathbb{R}$, $\mathbb{C}$ and $\mathbb{Q}$ denote the field of real, complex and rational numbers, respectively, and $\mathbb{Z}$ denotes the ring of rational integers. 
\item Unless used as an index, a subscript or a superscript, $i$ denotes $\sqrt{-1}$ and $\omega$ denotes the primitive third root of unity. 
\item For fields $K$ and $F$, $K/F$ denotes that $K$ is an extension of $F$ (hence, $K$ is an algebra over $F$) and $[K:F ] = m$ indicates that $K$ is a finite extension of $F$ of degree $m$.
\item $M_n(K)$ denotes the ring of $n\times n$ sized matrices with entries from a field $K$. 
\item$Gal( K/F)$ denotes the Galois group of $K/F$, i.e., the group of $F$-linear automorphisms of $K$. If $\sigma$ is any $F$-linear automorphism of $K$, $\langle \sigma \rangle$ denotes the cyclic group generated by $\sigma$. 
\item The elements $1$ and $0$ are understood to be the multiplicative identity and the additive identity, respectively, of the unit ring $\mathcal{R}$ in context.
\item $\textrm{im}(\Phi)$ denotes the image of the map $\Phi$.
\end{itemize}

\section{System Model and definitions}\label{sec_system_model}
We consider an $n_t$ transmit antenna, $n_r$ receive antenna MIMO system ($n_t\times n_r$ system) with perfect channel-state information available at the receiver (CSIR) alone. The channel is assumed to be quasi-static with Rayleigh fading. The system model is 
\begin{equation}\label{model}
 \textbf{Y} = \sqrt{\rho}\textbf{HS} + \textbf{N},
\end{equation}

\noindent where $\textbf{Y} \in \mathbb{C}^{n_r\times \textrm{T}}$ is the received signal matrix, $\textbf{S} \in \mathbb{C}^{n_t\times \textrm{T}}$ is the codeword matrix that is transmitted over a block of $\textrm{T}$ channel uses, $\textbf{H} \in \mathbb{C}^{n_r\times n_t}$ and $\textbf{N} \in \mathbb{C}^{n_r\times \textrm{T}}$ are respectively the channel matrix and the noise matrix with entries independently and identically distributed (i.i.d.) circularly symmetric complex Gaussian random variables with zero mean and unit variance. The average signal-to-noise ratio (SNR) at each receive antenna is denoted by $\rho$. It follows that 
\begin{equation}\label{energy_con}
 \mathbb{E}(\Vert \textbf{S} \Vert^2) = \textrm{T}.
\end{equation}

A space-time block code (STBC) $\mathcal{S}$ of block-length $\textrm{T}$ for an $n_t$ transmit antenna MIMO system is a finite set of complex matrices of size $n_t \times \textrm{T}$. Throughout the paper, we consider linear STBCs \cite{HaH} encoding symbols from a complex constellation $\mathcal{A}_q$ which is QAM or HEX. An $M$-PAM, $M$-QAM and $M$-HEX constellation, with $M = 2^a$, $a$ even and positive, are respectively given as  
\begin{eqnarray*}
 M\textrm{-PAM} & = & \{ -M+1,-M+3, -M+5,\cdots,M-1\},
\end{eqnarray*}
\begin{eqnarray*}
 M\textrm{-QAM} & = & \left\{ a + ib, a,b \in \sqrt{M}\textrm{-PAM}\right\},\\
 M\textrm{-HEX} & = & \left\{ a + \omega b, a,b \in \sqrt{M}\textrm{-PAM}\right\}.
\end{eqnarray*}
Assuming that $\mathcal{A}_q$ is $M$-QAM or $M$-HEX, the symbols $s_i$ encoded by the STBC are of the form $s_i \triangleq \bar{s}_i + \beta\check{s}_i$, with $\bar{s}_i, \check{s}_i \in \sqrt{M}\textrm{-PAM}$ and $\beta = i$ or $\omega$ depending on whether $\mathcal{A}_q$ is $M$-QAM or $M$-HEX, respectively. Therefore, the STBC $\mathcal{S}$ is of the form 
\begin{equation}\label{form_stbc}
 \mathcal{S} = \left\{ \left. \textbf{S}_i = \sum_ {j=1}^k \left(\bar{s}_{ij}\bar{\textbf{A}}_{j} + \check{s}_{ij}\check{\textbf{A}}_{j}\right)~\right \vert ~s_{ij} \in \mathcal{A}_q \right \}
\end{equation}
 where $\textbf{S}_i$, $i=1,2,\cdots,\vert \mathcal{A}_q \vert^k$, are the codeword matrices, and $\bar{\textbf{A}}_{j}$ and $\check{\textbf{A}}_{j}$ are complex weight matrices of the STBC. We assume that the average energy of $\mathcal{A}_q$ is $E$ units. Noting the symmetry of both $M$-QAM and $M$-HEX, we have $\mathbb{E}(\vert \bar{s}_{ij} \vert^2) = \mathbb{E}(\vert \check{s}_{ij} \vert^2) = E/2$, $\mathbb{E}(\bar{s}_{ij}\check{s}_{ij}) = 0$. So, the energy constraint in \eqref{energy_con} translates to $ E\sum_{i=1}^{k}tr\left(\bar{\textbf{A}}_{i}\bar{\textbf{A}}_{i}^H + \check{\textbf{A}}_{i}\check{\textbf{A}}_{i}^H \right) = 2\textrm{T}$. 
\begin{definition}\label{gen_mat}
({\it Generator matrix}) An STBC of the form
\begin{equation*}
 \mathcal{S} = \left\{ \textbf{S}_i = \sum_ {j=1}^k \left(\bar{s}_{ij}\bar{\textbf{A}}_{j} + \check{s}_{ij}\check{\textbf{A}}_{j}\right)~\vert~\bar{s}_{ij},\check{s}_{ij} \in \textrm{PAM} \right \},
\end{equation*}
is said to be a space-time lattice code \cite{roope}, and its generator matrix $\textbf{G}$ is given as 
\begin{equation*}
 \textbf{G} =  \left[\widetilde{vec(\bar{\textbf{A}}_{1})} ~ \widetilde{vec(\check{\textbf{A}}_{1})} \cdots \widetilde{vec(\bar{\textbf{A}}_{k})} ~ \widetilde{vec(\check{\textbf{A}}_{k})} \right],
\end{equation*}
where for a matrix $\textbf{X} = [\textbf{x}_1 ~\textbf{x}_2 \cdots ~\textbf{x}_m]$ with $\textbf{x}_i$ being column vectors, $vec(\textbf{X}) \triangleq [\textbf{x}_1^T ~\textbf{x}_2^T \cdots \textbf{x}_m^T ]^T$, and for a complex column vector $\textbf{x} = [x_1,x_2,\cdots,x_n]^T$, $\tilde{\textbf{x}} \triangleq [\mathcal{R}e(x_1),\mathcal{I}m(x_1),\cdots,\mathcal{R}e(x_n),\mathcal{I}m(x_n)]^T$.
\end{definition}
\begin{definition}\label{code_rate}
 ({\it STBC Rate}) The rate of an STBC is $\frac{Rank(\textbf{G})}{2\textrm{T}}$ complex symbols per channel use, where $\textbf{G}$ is the generator matrix of the STBC. The STBC is said to encode $Rank(\textbf{G})/2$ independent complex symbols.
\end{definition}

An STBC having a rate of $\min(n_t,n_r)$ complex symbols per channel use is said to be a {\it full-rate} STBC.

\begin{definition}\label{cubic_shaping}
({\it Cubic Shaping}) If the generator matrix $\textbf{G}$ has column orthogonality, the space-time lattice code is said to have cubic shaping \cite{roope}. 
\end{definition}

Among STBCs transmitting at the same rate in bits per channel use (the bit rate of $\mathcal{S}$ is $\frac{\log_2\vert \mathcal{S} \vert}{\textrm{T}}$ bits per channel use), the metric for comparison that decides their error performance is the normalized minimum determinant which is defined as follows.

\begin{definition}\label{min_det_stbc}({\it Normalized minimum determinant})
 For an STBC $\mathcal{S} = \{ \textbf{S}_i, i = 1,\cdots,\vert \mathcal{S}\vert\}$ that satisfies \eqref{energy_con}, the normalized minimum determinant $\delta_{min}(\mathcal{S})$ is defined as
\begin{equation}\label{min_det_eq}
\delta_{min}(\mathcal{S}) = \min_{\textbf{S}_i,\textbf{S}_j \in \mathcal{S}, i\neq j} \left \vert det\left(\textbf{S}_i - \textbf{S}_j\right) \right\vert^2.  
\end{equation}
\end{definition}
For full-diversity STBCs, $\delta_{min}(\mathcal{S})$ defines the coding gain \cite{TSC}, with the coding gain given by $\delta_{min}(\mathcal{S})^{\frac{1}{n_t}}$. Between two competing STBCs, the one with the larger normalized minimum determinant is expected to have a better error performance. 
\begin{note}\label{note1}
When the average energy of transmission in each time slot is uniform, then the energy constraint given by \eqref{energy_con} implies that $\mathbb{E}(\Vert \textbf{s}_i \Vert^2) =1 $, $\forall i = 1,\cdots,\textrm{T}$, where $\textbf{s}_i$ denotes the $i^{th}$ column of a codeword matrix.
\end{note}

\begin{definition}\label{scheme} ({\it STBC-scheme} \cite{tse}) An STBC-scheme $\mathcal{X}_{scheme}$ is defined as a family of STBCs indexed by $\rho$, each STBC of block length $\textrm{T}$ so that $\mathcal{X}_{scheme} = \{ \mathcal{S}(\rho)\}$, where the STBC $\mathcal{S}(\rho)$ corresponds to an average signal-to-noise ratio of $\rho$ at each receive antenna.  
\end{definition}

For STBC-schemes that consist of linear STBCs employing complex lattice constellations, the weight matrices define the STBC-scheme. The weight matrices are fixed and the size and average energy of the signal constellation are allowed to vary in accordance with $\rho$. Associated with such linear STBC-schemes is the notion of non-vanishing determinant (NVD).    

\begin{definition}\label{nvd_def}({\it Non-vanishing determinant} \cite{BRV}) A linear STBC-scheme $\mathcal{X}_{scheme}$, whose STBCs are defined by weight matrices $\{ \bar{\textbf{A}}_{i}, \check{\textbf{A}}_{i},i=1,\cdots,k \}$ and employ complex constellations that are finite subsets of an infinite complex lattice $\mathcal{A}_L$, is said to have the non-vanishing determinant (NVD) property if $\mathcal{S}_{\infty}  \triangleq \left\{ \sum_{i=1}^k \left(\bar{s}_{i}\bar{\textbf{A}}_{i} + \check{s}_{i}\check{\textbf{A}}_{iQ}\right) \vert s_{i} \in \mathcal{A}_L \right\}$ is such that 
\begin{equation*}
 \inf_{\textbf{S} \in \mathcal{S}_{\infty} \setminus \textbf{O}} \left\{ \vert det(\textbf{S})  \vert^2\right\}  > 0.
\end{equation*}
\end{definition}

With respect to ML-decoding, if the STBC transmits $k$ complex symbols in $\textrm{T}$ channel uses where the symbols are encoded from a suitable complex constellation of size $M$, an exhaustive search requires performing $\mathcal{O}\left(M^k\right)$ operations ($\mathcal{O}()$ stands for ``big O of'') because the $k$ symbols have to be jointly evaluated. However, some STBCs allow fast-decodability which is defined as follows. 

\begin{definition}\label{fast_dec}({\it Fast-decodable STBC} \cite{BHV}) Consider an STBC encoding $k$ complex information symbols from a complex constellation of size $M$. If the ML-decoding of this STBC by an exhaustive search involves performing only $\mathcal{O}\left(M^p\right)$ computations, $p < k$, the STBC is said to be fast-decodable.  
\end{definition}

For more on fast-decodability, one can refer to \cite{pav_rajan2}, \cite{BHV}.

\subsection{Cyclic Division Algebras}
A cyclic division algebra (CDA) $\mathcal{A}$ of degree $n$ over a number field $F$ is a vector space over $F$ of dimension $n^2$. The centre of $\mathcal{A}$, denoted by $ Z(\mathcal{A}) $ and defined as 
\begin{equation*}
 Z(\mathcal{A}) = \{A \in \mathcal{A} ~\vert ~ AB = BA, ~ \forall B \in \mathcal{A} \},
\end{equation*}
 is the field $F$ itself, and there exists a maximal subfield $K$ of $\mathcal{A}$ such that $K$ is a Galois extension of degree $n$ over $F$ with a cyclic Galois group generated by a cyclic generator $\tau$. $\mathcal{A}$ is a right vector space over $K$ and can be expressed as $
 \mathcal{A} = K \oplus \textbf{i}K \oplus \textbf{i}^2K \oplus \cdots \oplus \textbf{i}^{n-1}K$, where $ a\textbf{i}  =  \textbf{i}\tau(a)$, $\forall a \in K$, $\textbf{i}^n  =  \gamma$ for some $\gamma \in F^{\times} = F\setminus \{0\}$ such that the norm $N_{K/F}(a) = \prod_{i=0}^{n-1}\tau^i(a)$ of any element $a \in K$ satisfies \cite[Proposition 2.4.5]{hollanti_thesis} 
\begin{equation}\label{non_norm_condition}
 N_{K/F}(a) \neq \gamma^{p} 
\end{equation}
for any divisor (in $\mathbb{Z}$) $p$ of $n$ with $1 \leq p < n$. The CDA $\mathcal{A}$ is denoted by $(K/F, \tau, \gamma)$. $\mathcal{A}$ has a matrix representation in $M_n(K)$. This means that there exists an injective ring homomorphism from $\mathcal{A}$ to the matrix ring $M_n(K)$, described as follows. The map 
\begin{equation}\label{left_regular}
 \lambda_a : d \mapsto ad , ~~~\forall d \in \mathcal{A},
\end{equation}
for $a \in \mathcal{A}$ is called the left regular map, and there exists an injective ring homomorphism (specifically, an isomorphism) $\Phi$ from $\mathcal{A}$ to the ring $ \Lambda = \{\lambda_a ~ \vert ~ a \in \mathcal{A} \}$, given by 
 \begin{eqnarray*}
  \Phi : \mathcal{A} & \longrightarrow & \Lambda \\
      a & \mapsto & \lambda_a.
 \end{eqnarray*}
Since every nonzero element of $\mathcal{A}$ is invertible, $\textrm{im}(\Phi)$ with the exception of the zero map consists of invertible maps from $\mathcal{A}$ to itself. Each $\lambda_a \in \Lambda$ is a $K$-linear transformation of the right $K$-vector space $\mathcal{A}$, and hence is associated with a matrix in $M_n(K)$. In particular, $\lambda_a$, where $a = a_0+{\bf i}a_1+\cdots+{\bf i}^{n-1}a_{n-1}  \in \mathcal{A}$ with $a_i \in K$, is associated with the matrix ${\bf F}_a \in M_n(K)$ which is the matrix representation of $\lambda_a$ and is given as 
\begin{equation}\label{form_div}
 {\bf F}_a = \left[ \begin{array}{cccc}
                      a_0 & \gamma\tau(a_{n-1}) & \cdots & \gamma\tau^{n-1}(a_{1})\\
                      a_1 & \tau(a_0)  & \cdots & \gamma\tau^{n-1}(a_2)\\
                      \vdots &  \vdots &  & \vdots\\
                      a_{n-1} & \tau(a_{n-2}) & \cdots & \tau^{n-1}(a_0)\\                      
                     \end{array} \right].
\end{equation}
It follows that 
\begin{eqnarray*}
 \lambda_a(x) = ax & = &[ 1,~{\bf i},~\cdots,~{\bf i}^{n-1}]{\bf F}_a{\bf x} 
\end{eqnarray*}
where $x = x_0+{\bf i}x_1+\cdots+{\bf i}^{n-1}x_{n-1} \in \mathcal{A}$ and ${\bf x} = [x_0, \cdots,x_{n-1}]^T \in K^{n\times1}$. The ring homomorphism (it can be easily checked that it is indeed a ring homomorphism)
\begin{eqnarray*}
\Psi : \Lambda & \longrightarrow & M_n(K) \\
          \lambda_a & \mapsto & F_a 
\end{eqnarray*}
is injective, and so is the ring homomorphism 
\begin{eqnarray*}
\Psi \circ \Phi : \mathcal{A} & \longrightarrow & M_n(K) \\
                     a & \mapsto & \textbf{F}_a .
\end{eqnarray*}
Since $\mathcal{A}$ is a division algebra, so is $\textrm{im}(\Psi \circ \Phi)$, and hence, every nonzero matrix of the form shown in \eqref{form_div} is invertible. It is known that \cite{jacobson} $
 det({\bf F}_a) \in F^{\times}$, ${\bf F}_a \neq {\bf O}$. For more on CDAs, one can refer to \cite{sethuraman}, \cite{jacobson}, and references therein. 

\subsection{STBCs from CDA}\label{stc_cda}

 In this section, we review some known techniques to obtain full-diversity STBC-schemes with a non-vanishing determinant and large coding gain. For the purpose of space-time coding, the signal constellation is generally $M$-QAM, $M$-HEX or $M$-PAM which are finite subsets of $\mathbb{Z}[i]$, $\mathbb{Z}[\omega]$ and $\mathbb{Z}$, respectively. So, $F$ is naturally chosen to be $\mathbb{Q}(i)$, $\mathbb{Q}(\omega)$, or simply $\mathbb{Q}$ for which the ring of integers are respectively $\mathbb{Z}[i]$, $\mathbb{Z}[\omega]$ and $\mathbb{Z}$. By $\mathcal{O}_F$ and $\mathcal{O}_K$, we denote the ring of integers of $F$ and $K$ respectively. When $[K:F]=n$, $K$ has an $F$-basis of cardinality $n$. Similarly, $\mathcal{O}_K$ has an $\mathcal{O}_F$-basis of cardinality $n$. An $F$-basis $\{\theta_i, i=1,2,\cdots,n \vert \theta_i \in \mathcal{O}_K \}$ is chosen and the $a_i \in K$ in \eqref{form_div} are expressed as linear combinations of elements of this basis over $\mathcal{O}_F$. The STBC which encodes symbols from a complex constellation $\mathcal{A}_q$ ($M$-QAM, $M$-HEX) is given by $\mathcal{S} = \{ \textbf{S}_i, i=1,\cdots,\vert \mathcal{S} \vert\}$, where the codewords $\textbf{S}_i$ have the form shown in \eqref{form_div} with $a_i = \sum_{j=1}^{n}s_{ij}\theta_j$, $s_{ij} \in \mathcal{A}_q \subset \mathcal{O}_{F}$ with $\mathcal{O}_{F} = \mathbb{Z}[i], \mathbb{Z}[\omega]$ or $\mathbb{Z}$. A codeword matrix of STBCs from CDA has $n_t$ layers \cite{ORBV}, with the $(i+1)^{th}$ layer transmitting the vector $\textbf{D}_{i}[a_{i}, \tau(a_{i}), \cdots, \tau^{n-1}(a_{i}) ]^T$, $i =0,\cdots,n_t-1$, where
\begin{equation*}
 \textbf{D}_{i} \triangleq \textrm{diag}[\underbrace{1,\cdots,1}_{n_t-i \textrm{ times}}, \underbrace{\gamma,\cdots,\gamma}_{i\textrm{ times}}].
\end{equation*}
The $F$-basis $\{\theta_i, i=1,2,\cdots,n \vert \theta_i \in \mathcal{O}_K\}$ is generally chosen such that the matrix 
\begin{equation}\label{integral_basis}
 \textbf{R} = \left[ \begin{array}{cccc}
                      \theta_1 & \theta_2 &  \cdots & \theta_n\\
                      \tau(\theta_1) & \tau(\theta_2) &  \cdots & \tau(\theta_n)\\
                      \vdots & \vdots & \vdots  & \vdots \\
                      \tau^{n-1}(\theta_1) & \tau^{n-1}(\theta_2) &  \cdots & \tau^{n-1}(\theta_n)\\                     
                     \end{array} \right]
\end{equation}
is scaled unitary, i.e., $\textbf{RR}^H = \lambda \textbf{I}$ for some $\lambda \in \mathbb{R}$ (the scalar $1/\sqrt{\lambda}$ is the normalizing factor for $\textbf{R}$ to ensure that the energy constraint is satisfied). Further $\gamma$ is generally chosen such that $\vert \gamma \vert^2 =1 $. The perfect codes, which employ these techniques, have among the largest known coding gains in their comparable class.
\begin{note}
 In literature, certain spherically-shaped codes built from non-orthogonal maximal orders \cite{VHLR}-\cite{HL} and shaping lattices \cite{KC} have larger coding gains than the perfect codes. Notably, the Golden$+$ code \cite{VHLR}, a spherically-shaped STBC, has a better coding gain than the well-known Golden code for 2 transmit antennas which is a linear STBC with cubic shaping. However, spherically-shaped codes are not sphere-decodable, since they are not linear STBCs.
\end{note}

\section{Mathematical Framework}\label{sec_gen_scheme}
Let $F$ and $L$ be two distinct number fields and $K$ a Galois extension of both $F$ and $L$ such that 
\begin{enumerate}
 \item $Gal(K/F) = \langle \sigma \rangle$ with $\vert Gal(K/F) \vert = m$,
\item $Gal(K/L) = \langle \tau \rangle$ with $\vert Gal(K/L) \vert = n$,
\item $\sigma$ and $\tau$ commute, i.e., $\sigma\tau(a) = \tau\sigma(a)$, $\forall a \in K$.
\end{enumerate}
Let $\mathcal{A} = (K/F, \sigma, \gamma)$ be a cyclic division algebra of degree $m$ over $F$ with $\{1, \textbf{j}, \textbf{j}^2, \cdots, \textbf{j}^{m-1}\}$ being its basis as an $m$-dimensional right vector space over $K$.

We consider a non-commutative ring $\mathcal{M}_{\mathcal{A}}$ which is an $n$-dimensional bimodule over $\mathcal{A}$ (i.e., both a left $\mathcal{A}$-module and a right $\mathcal{A}$-module), but we will treat $\mathcal{M}_{\mathcal{A}}$ as a right $\mathcal{A}$-module in this paper. The structure of $\mathcal{M}_{\mathcal{A}}$ is as follows. We denote the elements of the basis of $\mathcal{M}_{\mathcal{A}}$ over $\mathcal{A}$ by $1$, $\textbf{i}$, $\textbf{i}^2$, $\cdots$, $\textbf{i}^{n-1}$. The elements of $\mathcal{M}_{\mathcal{A}}$ are of the form $A_0+\textbf{i}A_1 + \cdots +\textbf{i}^{n-1}A_{n-1}$ with $A_i \in \mathcal{A}$ and 
\begin{eqnarray}\label{d01}
 A\textbf{i} & = & \textbf{i}\Upsilon(A), ~~ \forall A \in \mathcal{A},\\ \label{d02}
 \textbf{i}^n & = & \gamma_{{}_M} ~~ \textrm{for some}~\gamma_{{}_M} \in \mathcal{A}
\end{eqnarray}
where 
\begin{equation}\label{d1}
 \Upsilon(A) \triangleq \tau(a_0) + \textbf{j} \tau(a_1) + \cdots + \textbf{j}^{m-1} \tau(a_{m-1})
\end{equation}
for $A = a_0 + \textbf{j} a_1 + \cdots + \textbf{j}^{m-1} a_{m-1}$, $a_1,\cdots, a_{m-1} \in K$. We further assume that $\gamma \in L$ so that $\tau(\gamma) = \gamma$. With this assumption and the fact that $\sigma$ and $\tau$ commute, we have
\begin{equation}\label{d2}
 \Upsilon(A)\Upsilon(B) = \Upsilon(AB), ~~A,B \in \mathcal{A}.
\end{equation}

\noindent Now, forcing the relation $\textbf{i}^{a}\textbf{i}^b = \textbf{i}^{a+b}$ for positive integral values of $a$ and $b$, \eqref{d02} implies that $\gamma_{{}_M}\textbf{i} = \textbf{i}\gamma_{{}_M}$ so that $\gamma_{{}_M}$ is invariant under $\Upsilon$. Hence, we require $\gamma_{{}_M}$ to be of the form $a_0 + \textbf{j}a_1+\cdots+\textbf{j}^{m-1}a_{m-1}$, $a_i \in L$, $i=0,1,\cdots,m-1$. In this paper, we only consider the case where $\gamma_{{}_M} \in L \subset K$. 
\begin{example}
 Consider $\mathcal{A}$ to be $(\mathbb{Q}(i,\sqrt{2})/\mathbb{Q}(\sqrt{2}),\sigma:i\mapsto -i, -1)$ which is known to be a division algebra and is a subalgebra of Hamilton's quaternions. Next consider the Galois extension $\mathbb{Q}(i,\sqrt{2})/\mathbb{Q}(i)$ whose Galois group is $\{1, \tau\}$ with $\tau:\sqrt{2} \to -\sqrt{2}$. Now, $\mathcal{M}_\mathcal{A}=\{A_0 + \textbf{i}A_1 \vert A_0,A_1 \in \mathcal{A}, \textbf{i}^2 = i\}$. If $A = a_0 + \sqrt{2}a_1 + i(a_2+\sqrt{2}a_3) + \textbf{j}\left(b_0 + \sqrt{2}b_1 + i(b_2+\sqrt{2}b_3)\right)$ with $a_i,b_i \in \mathbb{Q}$, then $\Upsilon(A) = a_0 - \sqrt{2}a_1 + i(a_2-\sqrt{2}a_3) + \textbf{j}\left(b_0- \sqrt{2}b_1 + i(b_2-\sqrt{2}b_3)\right)$.
\end{example}

In $\mathcal{M}_{\mathcal{A}}$, we seek conditions under which every element of the form $A_0 + \textbf{i}A_1$ has a unique right inverse, i.e., for every element of the form $A_0 + \textbf{i}A_1$, $A_0,A_1 \in \mathcal{A}$, there exists a {\it unique} element $B \in \mathcal{M}_{\mathcal{A}}$ such that $(A_0 + \textbf{i}A_1)B = 1$. Towards this end, we make use of the following lemma.
\begin{lemma} \label{lem1}
 A nonzero element $A$ of $\mathcal{M}_{\mathcal{A}}$, when it has a right inverse, has a unique right inverse if and only if it is not a left zero divisor, i.e., there exists no nonzero element $B \in \mathcal{M}_{\mathcal{A}}$ such that $AB = 0$. 
\end{lemma}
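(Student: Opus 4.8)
The plan is to prove the biconditional directly in both directions, using nothing more than the ring structure of $\mathcal{M}_{\mathcal{A}}$ established above (in particular that $(\mathcal{M}_{\mathcal{A}},+)$ is an abelian group and that left multiplication distributes over addition). No feature special to the bimodule construction is needed; this is essentially a general statement about a ring with identity, and the hypothesis ``when it has a right inverse'' merely guarantees that at least one right inverse exists so that the question of uniqueness is meaningful.

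First I would handle the ``if'' direction: assume $A$ is not a left zero divisor. Suppose $B$ and $B'$ are both right inverses of $A$, so that $AB = 1$ and $AB' = 1$. Subtracting and using left distributivity gives $A(B - B') = AB - AB' = 1 - 1 = 0$. Since $A \neq 0$ is not a left zero divisor, the only element it annihilates on the right is the zero element, so $B - B' = 0$, i.e.\ $B = B'$. Hence the right inverse is unique.

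For the ``only if'' direction I would argue by contraposition. Suppose $A$ is a left zero divisor, so there exists a nonzero $C \in \mathcal{M}_{\mathcal{A}}$ with $AC = 0$. Let $B$ be any right inverse of $A$ (one exists by hypothesis), so $AB = 1$. Then $A(B + C) = AB + AC = 1 + 0 = 1$, so $B + C$ is also a right inverse of $A$; and since $C \neq 0$ we have $B + C \neq B$, exhibiting two distinct right inverses. Thus if $A$ is a left zero divisor its right inverse cannot be unique, which is the contrapositive of the claim.

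I do not expect any real obstacle here, as both implications reduce to one-line manipulations. The only points requiring a little care are bookkeeping ones: making sure to invoke left distributivity (rather than commutativity, which fails in $\mathcal{M}_{\mathcal{A}}$) at each step, and noting explicitly that the existence of at least one right inverse is what lets the contrapositive argument produce the second, distinct right inverse $B + C$.
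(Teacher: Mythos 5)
Your proof is correct and follows essentially the same argument as the paper: the uniqueness direction is identical, and your contrapositive for the converse (producing the second right inverse $B+C$) is just the contrapositive rephrasing of the paper's direct argument that $AB=1$ and $AC=0$ force $C=0$ via $A(B-C)=1$. Your explicit remark that only left distributivity (and not associativity or commutativity) is used is a worthwhile observation, since $\mathcal{M}_{\mathcal{A}}$ is later taken to be nonassociative.
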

\begin{proof}
 If $A$ is not a left zero divisor, the uniqueness of the inverse follows, for if $AB = 1$ and $AB^\prime = 1$, then $A(B-B^\prime) = 0 \Rightarrow$ $B = B^\prime$. Conversely, if $A$ has a unique right inverse, it is not a left zero divisor, for if $AB = 1$ and $AC = 0$ for some $C \in \mathcal{A}$, then $A(B-C) = 1 \Rightarrow C = 0$.
\end{proof}

In the following theorem which is a generalization of \cite[Lemma 7]{markin}, we establish conditions under which each element of $\mathcal{M}_{\mathcal{A}}$ of the form $A_0 + \textbf{i}A_1$, $A_0, A_1 \in \mathcal{A}$, has a unique right inverse.

\begin{theorem}\label{thm1}
 Every nonzero element of $\mathcal{M}_{\mathcal{A}}$ of the form $A_0 + \textbf{i}A_1$, $A_0, A_1 \in \mathcal{A}$, has a unique right inverse if and only if
\begin{equation}\label{condt1}
 C\Upsilon(C)\Upsilon^2(C)\cdots\Upsilon^{n-1}(C) \neq \gamma_{{}_M} ~\textrm{for every} ~C \in \mathcal{A}.
\end{equation}
 \end{theorem}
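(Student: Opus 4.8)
The plan is to use Lemma \ref{lem1} to convert the statement into a question about left zero divisors, and then to reduce the general element $A_0+\textbf{i}A_1$ to the normalized form $1+\textbf{i}C$. In the case $A_0=0$, $A_1\neq0$, the element $\textbf{i}A_1$ is a product of the invertible elements $\textbf{i}$ (with inverse $\gamma_{{}_M}^{-1}\textbf{i}^{n-1}$) and $A_1\in\mathcal{A}^\times$, hence is invertible and causes no difficulty. When $A_0\neq0$ it is invertible in the division algebra $\mathcal{A}$, and using $A_0\textbf{i}=\textbf{i}\Upsilon(A_0)$ I would factor $A_0+\textbf{i}A_1=A_0\bigl(1+\textbf{i}C\bigr)$ with $C=\Upsilon(A_0)^{-1}A_1$. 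Since left multiplication by the unit $A_0$ is a bijection, $A_0+\textbf{i}A_1$ has a unique right inverse if and only if $1+\textbf{i}C$ does, and as $A_0$ ranges over $\mathcal{A}^\times$ and $A_1$ over $\mathcal{A}$ the element $C$ ranges over all of $\mathcal{A}$ (take $A_0=1$). Thus the theorem becomes: every $1+\textbf{i}C$, $C\in\mathcal{A}$, has a unique right inverse if and only if \eqref{condt1} holds.

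The computational heart is the evaluation of $(\textbf{i}C)^n$. Repeatedly applying $C\textbf{i}=\textbf{i}\Upsilon(C)$ I expect $(\textbf{i}C)^k=\textbf{i}^k\,\Upsilon^{k-1}(C)\cdots\Upsilon(C)\,C$, so that $(\textbf{i}C)^n=\gamma_{{}_M}\,\Upsilon^{n-1}(C)\cdots\Upsilon(C)\,C\in\mathcal{A}$ (using $\Upsilon^n=\mathrm{id}$, since $\tau^n=\mathrm{id}$). I would then invoke the finite geometric identity $\bigl(1+\textbf{i}C\bigr)P=1-(-1)^n(\textbf{i}C)^n=:g$, where $P=\sum_{k=0}^{n-1}(-\textbf{i}C)^k$; this is legitimate in a noncommutative ring because it only involves powers of the single element $\textbf{i}C$, and for the same reason $P$ commutes with $1+\textbf{i}C$, and $g\in\mathcal{A}$. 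If $g\neq0$, then $g$ is invertible in $\mathcal{A}$, so $Pg^{-1}$ is a right inverse of $1+\textbf{i}C$, while left-multiplying $(1+\textbf{i}C)Z=0$ by $P$ gives $gZ=0$, whence $Z=0$; thus $1+\textbf{i}C$ is not a left zero divisor and, by Lemma \ref{lem1}, has a unique right inverse. If $g=0$, then $(1+\textbf{i}C)P=0$ with $P\neq0$ (its $\textbf{i}^0$-component is $1$), so $1+\textbf{i}C$ is a left zero divisor. Hence uniqueness for all $C$ is equivalent to $g\neq0$ for all $C$, i.e.\ to $(\textbf{i}C)^n\neq(-1)^n$ for all $C$.

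The step I expect to be the main obstacle is reconciling this intrinsic condition, $\gamma_{{}_M}\Upsilon^{n-1}(C)\cdots\Upsilon(C)C\neq(-1)^n$, with the clean form \eqref{condt1}, since the noncommutative product there appears in the \emph{opposite} order and carries an inverse and a sign. The reconciliation I would carry out is purely formal: the equality $\gamma_{{}_M}\Upsilon^{n-1}(C)\cdots\Upsilon(C)C=(-1)^n$ is equivalent, upon taking inverses and using $\Upsilon(C)^{-1}=\Upsilon(C^{-1})$, to $D\Upsilon(D)\cdots\Upsilon^{n-1}(D)=(-1)^n\gamma_{{}_M}$ with $D=C^{-1}$, which reverses the order into exactly the product of \eqref{condt1}. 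For even $n$ the sign is trivial, and for odd $n$ replacing $D$ by $-D$ multiplies the product by $(-1)^n=-1$ and absorbs it, so the existence of such a $C$ is equivalent to the existence of $D$ with $D\Upsilon(D)\cdots\Upsilon^{n-1}(D)=\gamma_{{}_M}$. Therefore a left zero divisor of the form $1+\textbf{i}C$ exists precisely when \eqref{condt1} is violated, which establishes the theorem. Throughout I would ensure that every inversion and cancellation is justified by $\mathcal{A}$ being a division algebra and by $\gamma_{{}_M}\in L^\times$.
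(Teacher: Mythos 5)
Your strategy (Lemma \ref{lem1}, normalize to $1+\textbf{i}C$, geometric series, then reconcile the order of the product) would be sound in an associative ring, but $\mathcal{M}_{\mathcal{A}}$ is \emph{not} associative in the case the theorem is actually used for: the paper shows in Case \ref{case_2} that $\gamma_{{}_M}\in L\setminus F$ forces $(\textbf{j}\textbf{i})\textbf{i}^{n-1}\neq\textbf{j}(\textbf{i}\textbf{i}^{n-1})$, and more generally any regrouping of a product that wraps around the reduction $\textbf{i}^{n}=\gamma_{{}_M}$ changes the answer unless the left factor commutes with $\gamma_{{}_M}$, i.e.\ lies in $K$. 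This invalidates at least two of your steps as written. First, the reduction: the identity $A_0+\textbf{i}A_1=A_0(1+\textbf{i}C)$ holds, but the inference that right-invertibility transfers needs $\bigl(A_0X\bigr)Y=A_0\bigl(XY\bigr)$, and already for $n=2$ one has $\bigl(A_0(\textbf{i}B)\bigr)(\textbf{i}B')=\gamma_{{}_M}A_0\Upsilon(B)B'$ while $A_0\bigl((\textbf{i}B)(\textbf{i}B')\bigr)=A_0\gamma_{{}_M}\Upsilon(B)B'$, which differ whenever $A_0\gamma_{{}_M}\neq\gamma_{{}_M}A_0$. Second, and more seriously, the step ``left-multiplying $(1+\textbf{i}C)Z=0$ by $P$ gives $gZ=0$'' asserts $P\bigl((1+\textbf{i}C)Z\bigr)=\bigl(P(1+\textbf{i}C)\bigr)Z$. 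For $n=2$ with $Z=Z_0+\textbf{i}Z_1$, a direct expansion gives the $\textbf{i}$-component of $P\bigl((1+\textbf{i}C)Z\bigr)$ as $\bigl(1-C\gamma_{{}_M}\Upsilon(C)\bigr)Z_1$, whereas the $\textbf{i}$-component of $gZ$ is $\bigl(1-\gamma_{{}_M}C\Upsilon(C)\bigr)Z_1$; these disagree whenever $C\notin K$, which is the generic situation since $\gamma_{{}_M}\notin F=Z(\mathcal{A})$. So the inference is false as stated. (The conclusion can still be rescued, because $C\gamma_{{}_M}\Upsilon(C)=1$ and $\gamma_{{}_M}\Upsilon(C)C=1$ are equivalent via left-inverse-equals-right-inverse in the division algebra $\mathcal{A}$ --- but establishing that requires exactly the component-by-component bookkeeping your argument is designed to avoid.)

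The paper's proof sidesteps all of this: it never multiplies two general elements of $\mathcal{M}_{\mathcal{A}}$ and regroups; it expands $(A_0'+\textbf{i})(B_0'+\textbf{i}B_1'+\cdots+\textbf{i}^{n-1}B_{n-1}')=0$ into its $n$ components using only the defining relations \eqref{d01}--\eqref{d02}, and then solves the resulting linear recursion entirely inside the associative algebra $\mathcal{A}$, where all cancellations are legitimate. If you want to retain the geometric-series viewpoint, you must either verify each regrouping by hand at the component level or track the displaced $\gamma_{{}_M}$ factors explicitly --- at which point you have essentially reproduced the paper's computation. As it stands, the proposal has a genuine gap: it silently assumes associativity of $\mathcal{M}_{\mathcal{A}}$, which the paper explicitly disproves for the relevant choice of $\gamma_{{}_M}$.
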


The proof of Theorem \ref{thm1} is given in Appendix \ref{app_1}.
\begin{remark}
The condition in \eqref{condt1} is also necessary and sufficient for any element of the form $\textbf{i}^kA_0 + \textbf{i}^{l}A_1$, $A_0,A_1 \in \mathcal{A}$, $0 \leq k < l \leq n-1$, to have a unique right inverse. The proof is on similar lines to the proof of Theorem \ref{thm1}. 
\end{remark}
 
It is to be noted from \eqref{d01} that $\textbf{ij}= \textbf{ji}$. So, we have the following possibilities for $\gamma_{{}_M}$ and $\gamma$ (recall that $\gamma = \textbf{j}^m \in F^\times \cap L$).

\begin{case}\label{case_1}
$\gamma_{{}_M} \in L \cap F$. In this case $\mathcal{M}_\mathcal{A}$ is an associative algebra over $L \cap F$. 
\end{case}

\begin{case}\label{case_2}
 $\gamma_{{}_M}  \in  L \setminus F$. In this case, $\mathcal{M}_\mathcal{A}$ is never an associative algebra over $L \cap F$ and hence does not have a matrix representation, for if $\mathcal{M}_\mathcal{A}$ is an associative algebra over $L \cap F$ with $\gamma_{{}_M} \notin L \cap F$, then we have $\textbf{ji}^n = \textbf{i}^n\textbf{j}$ due to commutativity of $\textbf{i}$ and $\textbf{j}$, but  $(\textbf{ji})\textbf{i}^{n-1} = \textbf{j}(\textbf{ii}^{n-1}) = \textbf{ji}^n = \textbf{j}\gamma_{{}_M} \neq \gamma_{{}_M}\textbf{j} = \textbf{i}^n\textbf{j}$, leading to a contradiction.
\end{case}
 
In this paper, we consider the case $\gamma_{{}_M} \in L \setminus F$, $\gamma \in L \cap F$. Even though $\mathcal{M}_\mathcal{A}$ is now nonassociative and does not have a matrix representation, we still can make use of Theorem \ref{thm1} to obtain invertible matrices, which are desirable from the point of view of constructing full-diversity STBCs. In this direction, we arrive at the following result.

\begin{lemma}\label{lem2}
If every element $C$ of a cyclic division algebra $\mathcal{A}$ of degree $n$ is such that $C\Upsilon(C)\Upsilon^2(C)\cdots\Upsilon^{n-1}(C) \neq \gamma_{{}_M}$ where $\gamma_{{}_M}$ is some nonzero field element of $\mathcal{A}$, then the matrix  $\textbf{C}\Upsilon(\textbf{C})\Upsilon^2(\textbf{C})\cdots\Upsilon^{n-1}(\textbf{C}) - \gamma_{{}_M}\textbf{I}$ is invertible, where $\textbf{C}$, $\Upsilon(\textbf{C})$, $\cdots$, $\Upsilon^{n-2}(\textbf{C})$ and $\Upsilon^{n-1}(\textbf{C}) \in K^{m\times m}$ are respectively the matrix representations\footnote{Throughout the paper we denote by $\Upsilon(\textbf{C})$ the matrix obtained by applying $\tau$ to each entry of $\textbf{C}$, and for the special case of $\textbf{C}$ being the matrix representation of $C \in \mathcal{A}$, $\Upsilon(\textbf{C})$ happens to be the matrix representation of $\Upsilon(C)$.} of $C$, $\Upsilon(C)$, $\cdots$, $\Upsilon^{n-2}(C)$ and $\Upsilon^{n-1}(C)$ in $M_m(K)$.
\end{lemma}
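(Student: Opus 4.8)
The plan is to transfer the problem from matrices back into the division algebra $\mathcal{A}$, to recognize the singularity of the matrix as a conjugacy condition, and then to eliminate that conjugacy by a twisted-norm substitution that reduces it to the hypothesis.

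First, I would invoke the fact established earlier that $C\mapsto \textbf{F}_C$ (the composite $\Psi\circ\Phi$) is an injective ring homomorphism, together with the footnote's remark that applying $\tau$ entrywise to $\textbf{F}_C$ produces $\textbf{F}_{\Upsilon(C)}$. Since products map to products, this gives $\textbf{C}\,\Upsilon(\textbf{C})\cdots\Upsilon^{n-1}(\textbf{C}) = \textbf{F}_P$, where $P \triangleq C\Upsilon(C)\cdots\Upsilon^{n-1}(C)\in\mathcal{A}$. Thus the matrix in question is $\textbf{F}_P-\gamma_{{}_M}\textbf{I}$, and proving it invertible is equivalent to showing that $\gamma_{{}_M}$ is not an eigenvalue of $\textbf{F}_P$.

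Next, I would translate the eigenvalue condition into conjugacy inside $\mathcal{A}$. Because $\textbf{F}_P$ is the matrix of left multiplication by $P$ on $\mathcal{A}$ regarded as a right $K$-vector space, and right multiplication by $\gamma_{{}_M}\in L\subset K$ merely scales a coordinate vector by $\gamma_{{}_M}$, the eigenvalue equation $\textbf{F}_P\textbf{x}=\gamma_{{}_M}\textbf{x}$ corresponds to $Px=x\gamma_{{}_M}$ for the associated nonzero $x\in\mathcal{A}$. As $\mathcal{A}$ is a division algebra, $x$ is invertible, so $\gamma_{{}_M}$ is an eigenvalue of $\textbf{F}_P$ if and only if $P=x\gamma_{{}_M}x^{-1}$ for some nonzero $x$, i.e. if and only if $P$ is conjugate to $\gamma_{{}_M}$ in $\mathcal{A}$.

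The decisive step, which I expect to be the main obstacle, is to rule out such conjugacy using only the hypothesis $N(C)\neq\gamma_{{}_M}$, where $N(C)\triangleq C\Upsilon(C)\cdots\Upsilon^{n-1}(C)$. The hypothesis forbids equality on the nose, whereas singularity corresponds to the weaker condition that $N(C)$ be merely conjugate to $\gamma_{{}_M}$; bridging this gap is the heart of the argument. The tool is a telescoping identity: using $\Upsilon^n=\mathrm{id}$ (valid since $\tau$ has order $n$), a direct computation shows $N\!\left(x^{-1}C\Upsilon(x)\right)=x^{-1}N(C)\,x$ for every nonzero $x\in\mathcal{A}$. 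Consequently, if $N(C)=x\gamma_{{}_M}x^{-1}$ for some $x$, then setting $C'\triangleq x^{-1}C\Upsilon(x)$ yields $N(C')=x^{-1}\!\left(x\gamma_{{}_M}x^{-1}\right)\!x=\gamma_{{}_M}$, contradicting the assumption that $N(C')\neq\gamma_{{}_M}$ for all $C'\in\mathcal{A}$. Hence $\gamma_{{}_M}$ is never an eigenvalue of $\textbf{F}_P$, and $\textbf{F}_P-\gamma_{{}_M}\textbf{I}$ is invertible as claimed.
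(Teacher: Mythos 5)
Your proposal is correct and follows essentially the same route as the paper: identify the matrix product with the left-regular representation of $N(C)$, recast singularity of $\textbf{N}_C-\gamma_{{}_M}\textbf{I}$ as the conjugacy $N(C)=x\gamma_{{}_M}x^{-1}$ in $\mathcal{A}$, and kill it via the telescoping identity $N\!\left(x^{-1}C\Upsilon(x)\right)=x^{-1}N(C)x$. The only point you gloss over, which the paper spells out, is that the relevant eigenvector can be taken with entries in $K$ (immediate since $\gamma_{{}_M}\in K$ and $\textbf{N}_C-\gamma_{{}_M}\textbf{I}$ is a $K$-matrix with nontrivial kernel), so that it genuinely corresponds to a nonzero $x\in\mathcal{A}$.
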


The proof of Lemma \ref{lem2} is given in Appendix \ref{app_2}. We make use of Lemma \ref{lem2} to obtain the following result. 

\begin{theorem}\label{Thm2}
 Let $\mathcal{M}_{\mathcal{A}}$ be such that any element of the form $A_0+\textbf{i}A_1$ has a unique right inverse, and let $\textbf{A}_0$ and $\textbf{A}_1$ be matrix representations of $A_0$ and $A_1$, respectively, in $M_{m}(K)$. Consider the matrix 
\begin{equation}\label{M_form1}
 \textbf{M} = \left[ \begin{array}{cccc}
                      \textbf{A}_0 & \textbf{O} &  \cdots & \gamma_{{}_M}\Upsilon^{n-1}(\textbf{A}_1)\\
                      \textbf{A}_1 & \Upsilon(\textbf{A}_0) & \cdots & \textbf{O}\\
                      \textbf{O} & \Upsilon(\textbf{A}_1) & \cdots & \textbf{O}\\
                      \textbf{O} & \textbf{O}  & \cdots & \textbf{O}\\
                      \vdots & \vdots & \ddots & \textbf{O}\\
                      \textbf{O} & \textbf{O} & \cdots & \Upsilon^{n-1}(\textbf{A}_0)\\                      
                     \end{array} \right].
\end{equation}
Then,
\begin{enumerate}
 \item $\textbf{M}$ is invertible.
\item $det(\textbf{M}) \in L$.
\end{enumerate}
 
\end{theorem}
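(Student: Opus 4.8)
The plan is to recognize that $\textbf{M}$ is nothing but the matrix, over $K$, of the left-multiplication map $\lambda_X\colon \mathcal{M}_{\mathcal{A}}\to\mathcal{M}_{\mathcal{A}}$, $Y\mapsto XY$, where $X=A_0+\textbf{i}A_1$. First I would compute, for a general $Y=\sum_{k=0}^{n-1}\textbf{i}^kB_k$ with $B_k\in\mathcal{A}$, the product $XY$ using the defining relations $A\textbf{i}=\textbf{i}\Upsilon(A)$ and $\textbf{i}^n=\gamma_{{}_M}$. Collecting coefficients gives: the coefficient of $\textbf{i}^0$ is $A_0B_0+\gamma_{{}_M}\Upsilon^{n-1}(A_1)B_{n-1}$, and the coefficient of $\textbf{i}^l$ for $1\le l\le n-1$ is $\Upsilon^{l-1}(A_1)B_{l-1}+\Upsilon^l(A_0)B_l$. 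Writing each $B_k$ in a fixed $K$-basis of $\mathcal{A}$ and invoking the footnote fact from Lemma \ref{lem2} that $\Upsilon^l(\textbf{A}_j)$ is the matrix of left multiplication by $\Upsilon^l(A_j)$, these coefficients are exactly the block rows of \eqref{M_form1} acting on $(\textbf{b}_0,\ldots,\textbf{b}_{n-1})^T$. Hence $\textbf{M}$ represents $\lambda_X$.

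For the first claim, invertibility of $\textbf{M}$ is equivalent to bijectivity of $\lambda_X$. Since in the product $Yc$ the right factor $c\in\mathcal{A}$ sits in the $\textbf{i}^0$-slot, right multiplication by elements of $\mathcal{A}$ never triggers the $\textbf{i}^n=\gamma_{{}_M}$ reduction, and a one-line check shows $(XY)c=X(Yc)$ for all $c\in\mathcal{A}$; thus $\lambda_X$ is right-$K$-linear (indeed right-$\mathcal{A}$-linear) and $\mathcal{M}_{\mathcal{A}}$ is an $nm$-dimensional $K$-vector space. By hypothesis the (necessarily nonzero) element $X$ has a unique right inverse, so by Lemma \ref{lem1} it is not a left zero divisor; hence $\lambda_X$ is injective, and an injective endomorphism of a finite-dimensional $K$-space is bijective. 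Therefore $\textbf{M}$ is invertible. Alternatively, and in the spirit of ``using Lemma \ref{lem2}'', one may clear the subdiagonal blocks by Gaussian elimination, which is legitimate because every $\Upsilon^l(\textbf{A}_0)$ is invertible ($\mathcal{A}$ being a division algebra); this expresses $\det(\textbf{M})$ as a nonzero multiple of the determinant of a matrix of the form $\textbf{C}\Upsilon(\textbf{C})\cdots\Upsilon^{n-1}(\textbf{C})-\gamma_{{}_M}\textbf{I}$, which is nonsingular by Lemma \ref{lem2} once the relevant element $C$ is identified (a short computation that is the only delicate point of this alternative route, since the elimination naturally produces the conjugates in reversed order).

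For the second claim, since the entries of $\textbf{M}$ lie in $K$ and $L$ is the fixed field of $Gal(K/L)=\langle\tau\rangle$, it suffices to prove $\tau(\det(\textbf{M}))=\det(\textbf{M})$. As the determinant is a polynomial in the entries and $\tau$ is a field automorphism, $\tau(\det(\textbf{M}))=\det(\Upsilon(\textbf{M}))$, where $\Upsilon(\textbf{M})$ denotes $\textbf{M}$ with $\tau$ applied entrywise. Using $\tau^n=\mathrm{id}$ (so $\Upsilon^n(\textbf{A}_j)=\textbf{A}_j$) and $\tau(\gamma_{{}_M})=\gamma_{{}_M}$ (valid because $\gamma_{{}_M}\in L$), one sees that $\Upsilon(\textbf{M})$ again has the shape \eqref{M_form1} but with every block index raised by one: its $(a,a)$ block is $\Upsilon^{a+1}(\textbf{A}_0)$, its $(a,a-1)$ block is $\Upsilon^{a}(\textbf{A}_1)$, and its $(0,n-1)$ corner block is $\gamma_{{}_M}\textbf{A}_1$.

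The key step, and the main obstacle, is to realize this index shift as a determinant-preserving similarity of $\textbf{M}$. I would conjugate by the block cyclic-shift permutation $\pi\colon j\mapsto j+1\pmod n$, with block-permutation matrix $\textbf{P}$, so that $(\textbf{P}^{-1}\textbf{M}\textbf{P})_{a,b}=\textbf{M}_{\pi(a),\pi(b)}$; this already agrees with $\Upsilon(\textbf{M})$ on the diagonal and on all but one subdiagonal block. The sole discrepancy is at the wrap-around, where conjugation places the factor $\gamma_{{}_M}$ on the $(n-1,n-2)$ block rather than on the $(0,n-1)$ block. This is corrected exactly by the diagonal scaling $\textbf{D}=\mathrm{diag}[\textbf{I},\ldots,\textbf{I},\gamma_{{}_M}\textbf{I}]$, with the $\gamma_{{}_M}$ in the last block, and a block-by-block verification then gives $\textbf{D}^{-1}\textbf{P}^{-1}\textbf{M}\textbf{P}\,\textbf{D}=\Upsilon(\textbf{M})$. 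Since conjugation by $\textbf{P}$ and by the invertible $\textbf{D}$ preserves the determinant, $\det(\Upsilon(\textbf{M}))=\det(\textbf{M})$, whence $\tau(\det(\textbf{M}))=\det(\textbf{M})$ and $\det(\textbf{M})\in L$, as claimed.
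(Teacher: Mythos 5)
Your part 2) is correct and is essentially the paper's own argument: the paper writes $\textbf{M} = \textbf{G}_1 \textbf{P}_1\Upsilon(\textbf{M})\textbf{P}_2 \textbf{G}_2$ with permutation matrices and diagonal matrices satisfying $det(\textbf{G}_1)det(\textbf{G}_2)=1$, which is your relation $\textbf{D}^{-1}\textbf{P}^{-1}\textbf{M}\textbf{P}\textbf{D}=\Upsilon(\textbf{M})$ in multiplicative form; your block-by-block verification checks out.

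Part 1), however, contains a genuine gap. Your main argument rests on the claim that $\textbf{M}$ in \eqref{M_form1} is the matrix of $\lambda_X\colon Y\mapsto XY$ on the right $K$-space $\mathcal{M}_{\mathcal{A}}$. In the regime where this theorem is actually needed ($\gamma_{{}_M}\in L\setminus F$ and $m\geq 2$) this is false. The $\textbf{i}^0$-coefficient of $XY$ is $A_0B_0+\gamma_{{}_M}\bigl(\Upsilon^{n-1}(A_1)B_{n-1}\bigr)$, where $\gamma_{{}_M}\cdot Z$ is a product inside $\mathcal{A}$ with $\gamma_{{}_M}\in K$ \emph{non-central}; in coordinates, left multiplication by $\gamma_{{}_M}$ is $\Gamma_M=\textrm{diag}[\gamma_{{}_M},\sigma(\gamma_{{}_M}),\cdots,\sigma^{m-1}(\gamma_{{}_M})]$, not $\gamma_{{}_M}\textbf{I}$. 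Hence the true matrix of $\lambda_X$ has corner block $\Gamma_M\Upsilon^{n-1}(\textbf{A}_1)$, whereas $\textbf{M}$ has the scalar multiple $\gamma_{{}_M}\Upsilon^{n-1}(\textbf{A}_1)$ (compare $\mathcal{S}_{4\times2}$, whose corner is $i\,\Upsilon(\textbf{A}_1)$ and not $\textrm{diag}[i,-i]\Upsilon(\textbf{A}_1)$). This is precisely the point the paper belabours in Section \ref{subsec_a1}: the map $\Phi$ is \emph{not} a homomorphism and $\textrm{im}(\Phi)$ is not the representation of any division structure, so injectivity of $\lambda_X$ tells you nothing directly about $\textbf{M}$. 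Your fallback by Gaussian elimination is the right instinct --- it is exactly where Lemma \ref{lem2} must carry the load --- but you stop at the step that contains all the content: the elimination produces the conjugates of the relevant $\textbf{C}$ in reversed order, which is not literally the product covered by the hypothesis and by Lemma \ref{lem2} and must still be conjugated back into the standard cyclic order (and the degenerate cases $A_0=0$ or $A_1=0$ need separate, if easy, treatment). The paper instead writes down the candidate inverse explicitly by transplanting \eqref{t1e4}--\eqref{t1e6} from the proof of Theorem \ref{thm1} into matrix form, invokes Theorem \ref{thm1} and then Lemma \ref{lem2} to guarantee that the block $\bigl[(-1)^{n-1}\prod_{i=0}^{n-1}\Upsilon^{i}(\textbf{A}_0\textbf{A}_1^{-1})+\gamma_{{}_M}\textbf{I}\bigr]$ is invertible, and verifies the product is the identity. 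You would need either to complete your elimination argument at that point or to switch to the explicit-inverse route.
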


The proof of Theorem \ref{Thm2} is given in Appendix \ref{app_3}.

\begin{table*}
\begin{equation}\label{A_str}
 \textbf{A}_k = \left[ \begin{array}{cccccc}
                      \sum_{i=1}^{n}s_{ki}\theta_i & \gamma \sigma \left( \sum_{i=1}^{n}s_{k(i+nm-n)}\theta_i\right) & \cdots  & \gamma\sigma^{m-1} \left( \sum_{i=1}^{n}s_{k(i+n)}\theta_i\right) \\
                      \sum_{i=1}^{n}s_{k(i+n)}\theta_i &  \sigma \left( \sum_{i=1}^{n}s_{ki}\theta_i\right) & \cdots  & \gamma\sigma^{m-1} \left( \sum_{i=1}^{n}s_{k(i+2n)}\theta_i\right) \\
                       \vdots & \vdots & \vdots & \vdots \\
                      \sum_{i=1}^{n}s_{k(i+nm-n)}\theta_i & \sigma \left( \sum_{i=1}^{n}s_{k(i+nm-2n)}\theta_i\right) & \cdots  & \sigma^{m-1} \left( \sum_{i=1}^{n}s_{ki}\theta_i\right) \\

                     \end{array} \right], ~~~ k =0 ,1.
\end{equation}
\hrule
\end{table*}

\begin{corollary}\label{cor1}
 If all the elements of $\textbf{M}$ are from $\mathcal{O}_{K}$, the ring of integers of $K$, then $det(\textbf{M}) \in L \cap \mathcal{O}_K = \mathcal{O}_{L}$. 
\end{corollary}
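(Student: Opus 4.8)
The plan is to leverage part (2) of Theorem \ref{Thm2}, which already guarantees that $det(\textbf{M}) \in L$, and to upgrade this to membership in $\mathcal{O}_L$ by establishing separately that $det(\textbf{M})$ is an algebraic integer. First I would invoke the hypothesis directly: every entry of $\textbf{M}$ --- including those of the corner block $\gamma_{{}_M}\Upsilon^{n-1}(\textbf{A}_1)$ --- is assumed to lie in $\mathcal{O}_K$. Viewing $\textbf{M}$ as an $nm \times nm$ matrix with scalar entries in $\mathcal{O}_K$, the Leibniz expansion
\[
 det(\textbf{M}) = \sum_{\pi} \textrm{sgn}(\pi) \prod_{j} M_{j,\pi(j)}
\]
exhibits $det(\textbf{M})$ as a $\mathbb{Z}$-linear combination of products of entries of $\textbf{M}$. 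Since $\mathcal{O}_K$ is a ring, hence closed under addition and multiplication, this forces $det(\textbf{M}) \in \mathcal{O}_K$.

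Combining this with Theorem \ref{Thm2}(2) gives $det(\textbf{M}) \in L \cap \mathcal{O}_K$, and the remaining step is the standard number-theoretic identity $L \cap \mathcal{O}_K = \mathcal{O}_L$. For the inclusion $\mathcal{O}_L \subseteq L \cap \mathcal{O}_K$, note that $\mathcal{O}_L \subseteq L$ by definition, and every element of $\mathcal{O}_L$ is integral over $\mathbb{Z}$, hence is an algebraic integer of the larger field $K$, so that $\mathcal{O}_L \subseteq \mathcal{O}_K$. For the reverse inclusion, an element $\alpha \in L \cap \mathcal{O}_K$ is an element of $L$ that is integral over $\mathbb{Z}$; since integrality over $\mathbb{Z}$ is intrinsic and does not depend on the ambient field, $\alpha$ is integral over $\mathbb{Z}$ as an element of $L$, i.e.\ $\alpha \in \mathcal{O}_L$. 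Putting these together yields $det(\textbf{M}) \in \mathcal{O}_L$, as claimed.

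I do not anticipate a genuine obstacle: all the substance resides in Theorem \ref{Thm2}, and the corollary is merely the short passage from an $L$-valued determinant to an $\mathcal{O}_L$-valued one. The only point requiring mild care is that the hypothesis is stated for the entries of $\textbf{M}$ exactly as written, so the products $\gamma_{{}_M}\Upsilon^{n-1}(\textbf{A}_1)$ in the corner block are already assumed integral and need no separate verification; this conveniently sidesteps any question of whether $\gamma_{{}_M}$ is itself an algebraic integer.
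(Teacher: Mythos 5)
Your proposal is correct and follows the only natural route: the paper states this corollary without proof, treating it as immediate from Theorem \ref{Thm2}(2) together with the facts that the determinant is an integer polynomial in the entries (hence lands in the ring $\mathcal{O}_K$) and that $L \cap \mathcal{O}_K = \mathcal{O}_L$ for the subfield $L$ of $K$. Your write-up simply makes these standard steps explicit, so there is nothing to add.
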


\subsection{Special Case: $n=2$}\label{subsec_a1}
For the case where $n = 2$ so that $\mathcal{M}_{\mathcal{A}} = \{A_0 + \textbf{i}A_1 ~\vert ~A_0,A_1 \in \mathcal{A}\}$, when $C\Upsilon(C) \neq \gamma_{{}_M}$, $\forall C\in\mathcal{A}$, from Theorem \ref{thm1}, there are no left zero divisors in $\mathcal{M}_\mathcal{A}$. It follows that there do not exist any right zero divisors in $\mathcal{M}_{\mathcal{A}}$. So, there are no zero divisors in $\mathcal{M}_{\mathcal{A}}$, i.e, if $AB = 0$ for $A , B \in \mathcal{M}_\mathcal{A}$, then either $A =0$ or $B =0$. It is well known that a nonassociative algebra is division if and only if it contains no zero divisors \cite[Page 15-16]{RDS}. So, $\mathcal{M}_\mathcal{A}$ is a nonassociative division algebra. However, the map   
\begin{eqnarray*}
 \Phi : \mathcal{M}_{\mathcal{A}} & \longrightarrow & M_{2m}(K) \\
        A_0 +\textbf{i}A_1 & \mapsto & \left[ \begin{array}{rr}
                          \textbf{A}_0& \gamma_{{}_M}\Upsilon(\textbf{A}_1) \\
                          \textbf{A}_1 & \Upsilon(\textbf{A}_0) \\
                         \end{array}\right]
\end{eqnarray*}
is not to be confused to be an injective homomorphism, even though $\textrm{im}(\Phi)$ excepting the zero matrix consists of invertible matrices. This is because $\mathcal{M}_{\mathcal{A}}$ is not an associative algebra. However, it is natural to wonder whether $\textrm{im}(\Phi)$ is obtainable as a matrix representation of some associative division algebra. We establish as follows that this is not the case. 

Firstly, $\textrm{im}(\Phi)$ is not an algebra since is not closed under multiplication. To see this, let $\textbf{A}$, $\textbf{B}$, $\textbf{C}$, and $\textbf{D}$ be matrix representations (in $M_m(K)$) of elements $A,B,C,D \in \mathcal{A}$. Then,
\begin{eqnarray*}
 &&\left[ \begin{array}{rr}
                          \textbf{A}& \gamma_{{}_M}\Upsilon(\textbf{B}) \\
                          \textbf{B} & \Upsilon(\textbf{A}) \\
                         \end{array}\right]\left[ \begin{array}{rr}
                          \textbf{C}& \gamma_{{}_M}\Upsilon(\textbf{D}) \\
                          \textbf{D} & \Upsilon(\textbf{C}) \\
                         \end{array}\right] = \\
 &&\left[ \begin{array}{rr}
                          \textbf{AC} +  \gamma_{{}_M}\Upsilon(\textbf{B})\textbf{D} &\gamma_{{}_M}\left(\Upsilon(\textbf{BC}) +  \textbf{A}\Upsilon(\textbf{D}) \right) \\
                          \textbf{BC} +  \Upsilon(\textbf{A})\textbf{D}  & \Upsilon(\textbf{AC}) +  \gamma_{{}_M}\textbf{B}\Upsilon(\textbf{D}) \\
                         \end{array}\right]
\end{eqnarray*}
which does not belong to $\textrm{im}(\Phi)$ since $\gamma_{{}_M}\textbf{I}$ is not the matrix representation (in $M_m(K)$) of $\gamma_{{}_M}$ as an element of $\mathcal{A}$. The matrix representation of $\gamma_{{}_M}$ in $M_m(K)$ is $\Gamma_M = \textrm{diag}[\gamma_{{}_M}, \sigma(\gamma_{{}_M}),\cdots,\sigma^{m-1}(\gamma_{{}_M}) ]\neq \gamma_{{}_M}\textbf{I}$ since $\gamma_{{}_M} \notin F$. 

Secondly, $\textrm{im}(\Phi)$ is not a subset of some division algebra. To see this, let
\begin{equation*}
 \textbf{E} = \left[\begin{array}{cc}
                     \textbf{O} & \gamma_{{}_M}\textbf{I} \\
                     \textbf{I} & \textbf{O}\\
                    \end{array}\right], ~~ \textbf{F} = \left[\begin{array}{cc}
                     \Gamma_M & \textbf{O} \\
                     \textbf{O} & \Gamma_M\\
                    \end{array}\right]
\end{equation*}
where $\Gamma_M$ is the matrix representation of $\gamma_{{}_M}$ in $M_m(K)$. Clearly, $\textbf{E},\textbf{F} \in \textrm{im}(\Phi)$, but $\textbf{E}^2 - \textbf{F}$ is nonzero and not invertible, and so $\textrm{im}(\Phi)$ cannot be a subset of a division algebra.

The mathematical frameworks considered in \cite{markin} and \cite{unger} to obtain full-diversity STBCs are special cases of our framework, and are briefly described in the following subsections.
 
\subsection{The full-diversity STBC construction technique of \cite{markin}}\label{subsec_a}
 In \cite{markin}, a CDA $\mathcal{A} = (K/F,\sigma,\gamma)$ ($m$ even), an Galois $L$-automorphism $\tau$ such that $\tau^2 = 1$, (with the possibility of $\tau$ belonging to $\langle \sigma \rangle$), and an element $\gamma_{{}_M} \in F$ such that $\tau(\gamma_{{}_M}) = \gamma_{{}_M}$, $A\Upsilon(A) \neq \gamma_{{}_M}$, $\forall A \in \mathcal{A}$ ($\Upsilon$ as defined in \eqref{d1}), are used to obtain full-diversity STBCs with fast-decodability for $2m\times2$ MIDO systems. The STBCs are obtained using the following map
\begin{eqnarray*}
 \Phi_{\gamma_{{}_M}} :  \mathcal{A} \times \mathcal{A} & \longrightarrow & M_{2m}(K) \\
  (A,B)& \mapsto &\left[ \begin{array}{rr}
                          \textbf{A} & \gamma_{{}_M}\Upsilon(\textbf{B}) \\
                          \textbf{B} & \Upsilon(\textbf{A}) \\
                         \end{array}
\right]
\end{eqnarray*}
where $\textbf{A}$ and $\textbf{B}$ are matrix representations of $A$ and $B$ in $M_{m}(K)$. The difference between this method and our method is in the choice of $\gamma_{{}_M}$ which for our case does not belong to $  F $. Though not explicitly mentioned in \cite{markin}, $\mathcal{M}_\mathcal{A} = \{A + \textbf{i}B ~\vert ~A,B \in \mathcal{A} \}$, with all the rules of multiplication as specified in the beginning of this section, is an {\it associative division algebra} since it contains no zero divisors. The centre $Z(\mathcal{M}_\mathcal{A})$ and maximal subfield $M$ of $\mathcal{M}_\mathcal{A}$ are as follows.
\begin{enumerate}
 \item When $\tau  = \sigma^{^{\frac{m}{2}}} \in \langle \sigma \rangle$, $Z(\mathcal{M}_\mathcal{A}) = \{a + \textbf{ij}^{\frac{m}{2}}b ~\vert~ a,b \in F\}$ and is isomorphic to $F(\sqrt{\gamma\gamma_{{}_M}})$, while $M = \{a + \textbf{ij}^{\frac{m}{2}}b~\vert~ a,b \in K\}$ and is isomorphic to $K(\sqrt{\gamma\gamma_{{}_M}})$. 
\item When $\tau \notin \langle \sigma \rangle$. $Z(\mathcal{M}_\mathcal{A}) = F \cap L$ and $M = K$ in which case, $\mathcal{M}_\mathcal{A}$ is a crossed product division algebra of degree $2m$ over $F \cap L$. 
\end{enumerate}
So, the map 
\begin{eqnarray*}
 \Phi_{\gamma_{{}_M}} :  \mathcal{M}_\mathcal{A} & \longrightarrow & M_{2m}(K) \\
  A+\textbf{i}B & \mapsto &\left[ \begin{array}{rr}
                          \textbf{A} & \gamma_{{}_M}\Upsilon(\textbf{B}) \\
                          \textbf{B} & \Upsilon(\textbf{A}) \\
                         \end{array}
\right]
\end{eqnarray*}
actually defines an injective homomorphism, and $\textrm{im}(\Phi_{\gamma_{{}_M}})$ is a matrix representation of the division algebra $\mathcal{M}_\mathcal{A}$ in $M_{2m}(K)$.

\subsection{The full-diversity STBC construction technique of \cite{unger}}\label{subsec_b}
In \cite{unger}, nonassociative quaternion division algebras are used to obtain rate-2 STBCs for 2 transmit antennas and rate-1 STBCs for 4 transmit antennas. The required nonassociative quaternion division algebra is constructed using a number field $L$ and its quadratic extension $K$ with $Gal(K/L) = \{1,\tau \}$. The algebra is given as $\{ a + \textbf{i}b ~ \vert ~a,b\in K, \textbf{i}^2 = \gamma_{{}_M} \in K\setminus L \}$, and is denoted by $Cay(K,\gamma_{{}_M})$. Since $\gamma_{{}_M}\notin L$, $Cay(K,\gamma_{{}_M})$ is nonassociative (When $\gamma_{{}_M} \in L$, the algebra is an associative quaternion algebra). Note that this is a special case of our proposed method, with $n=2$, $m=1$, $\mathcal{A} =K$, with the only difference being that $\gamma_{{}_M} \in K \setminus L$. Such a case is also included in our framework (see \eqref{d02}), but the resulting algebra is clearly also not power associative\footnote{An algebra $\mathcal{A}$ is said to be power associative if and only if $A^aA^b = A^{a+b}$, $\forall A \in \mathcal{A}$ and $\forall a,b \in  \{1,2,3,\cdots \}$.} since $\textbf{i}^2\textbf{i} = \gamma_{{}_M} \textbf{i} \neq \textbf{i}\gamma_{{}_M} =  \textbf{i}\left(\textbf{i}^2\right)$. A nonassociative algebra is division if and only if it has no zero divisors \cite{RDS}. It is easy to see that $Cay(K,\gamma_{{}_M})$ is a division algebra for if it were not, it would contain zero divisors so that $a \tau(a) = \gamma_{{}_M}$ for some $a \in K$ but $\gamma_{{}_M} \in K \setminus F$ and $a\tau(a) \in L$, contradicting the fact that $a \tau(a) = \gamma_{{}_M}$. The invertibility of the obtained matrices (nonzero) which are of the form in \eqref{M_form1} with $n=2,m=1$, follows from Theorem \ref{Thm2}.
 
\section{STBC construction}\label{sec_STC}
\subsection{General design procedure}

The general scheme to obtain invertible matrices as codewords of an STBC for $nm$ transmit antennas is as follows.
\begin{enumerate}
 \item $L$ is chosen to be either $\mathbb{Q}(i)$ or $\mathbb{Q}(\omega)$, the reason being that a finite subset of $\mathbb{Z}[i]$ is the QAM constellation and that of $\mathbb{Z}[\omega]$ is the HEX constellation, both of practical significance.
\item A cyclic division algebra $\mathcal{A} = (K/F,\sigma,\gamma)$ of degree $m$ over a number field $F$ with $F \neq L$, and an element $\gamma_{{}_M}$ are chosen such that 
\begin{enumerate}
\item $K/L$ is a Galois extension of degree $n$ with $Gal(K/L) = \langle \tau \rangle$.
\item $\sigma$ and $\tau$ commute.
\item $\gamma \in F \cap L $.
\item $\gamma_{{}_M} \in L \setminus F$.
\item $\prod_{i=0}^{n-1}\Upsilon^i(C) \neq \gamma_{{}_M}$, $\forall C \in \mathcal{A}$.
\end{enumerate}
\end{enumerate}
When $\mathcal{A}$ satisfies the above conditions, any nonzero matrix having the structure shown in \eqref{M_form1} is invertible, with $\textbf{A}_0$ and $\textbf{A}_1$ being matrix representations (in $M_{m}(K)$) of elements $A_0$ and $A_1$ of $\mathcal{A}$. If $\mathcal{A}$ is of degree $m$ over $F$ so that $[K:F] = m$, then $\textbf{A}_0, \textbf{ A}_1 \in K^{m\times m}$ and so, $\textbf{M} \in K^{nm\times nm}$. Each entry of $\textbf{A}_0$ and $\textbf{A}_1$ which belongs to $K$ can be viewed as a linear combination of $n$ independent elements over $L$ (since $[K:L] = n$). We express each element of $\textbf{A}_0$ and $\textbf{A}_1$ as a linear combination of some chosen $L$-basis $\{ \theta_i, i = 1,\cdots,n \vert \theta_i \in \mathcal{O}_{K}\}$ over $\mathcal{O}_{L}$. From the point of view of space-time coding, each codeword matrix of the STBC constructed using the proposed method has the structure shown in \eqref{M_form1} where $\textbf{A}_0$ and $\textbf{A}_1$ specifically have the structure given in \eqref{A_str} at the top of the page with $s_{ki}$, $k=0,1$, $i = 1,\cdots,nm$, being the complex information symbols taking values from QAM (a finite subset of $\mathbb{Z}[i]$) or HEX (a finite subset of $\mathbb{Z}[\omega]$) constellations. 
 
\begin{proposition}
 The rate of the STBC whose codeword matrices have the structure given in \eqref{M_form1} is $2$ complex symbols per channel use.
\end{proposition}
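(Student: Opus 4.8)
The plan is to compute the rate directly from Definition \ref{code_rate}, which requires determining the rank of the generator matrix $\textbf{G}$, or equivalently counting the number of independent complex symbols that the codeword structure in \eqref{M_form1} encodes. The key observation is that the matrix $\textbf{M} \in K^{nm \times nm}$ is determined entirely by the two blocks $\textbf{A}_0$ and $\textbf{A}_1$, each lying in $K^{m\times m}$ and having the structure in \eqref{A_str}. All other blocks of $\textbf{M}$ are either zero or are obtained by applying the automorphisms $\Upsilon^j$ (which act as $\tau$ entrywise) and multiplying by the fixed scalar $\gamma_{{}_M}$; these operations introduce no new degrees of freedom, so the independent symbols of the STBC are exactly those parametrizing $\textbf{A}_0$ and $\textbf{A}_1$.

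First I would count the free symbols. From \eqref{A_str}, each of $\textbf{A}_0$ and $\textbf{A}_1$ (indexed by $k=0,1$) is built from the $m$ algebra-elements $a_0,\ldots,a_{m-1} \in K$ that define an element $A_k \in \mathcal{A} = (K/F,\sigma,\gamma)$, and each such $a_\ell$ is written as $\sum_{i=1}^{n} s_{k(i+n\ell)}\theta_i$ with the $\theta_i$ forming an $L$-basis of (a subset of) $\mathcal{O}_K$ and the $s_{ki}$ being the QAM/HEX information symbols. Thus each $A_k$ carries $nm$ complex information symbols, and with $k \in \{0,1\}$ the total is $2nm$ complex symbols. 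Next I would verify these symbols are genuinely independent over $L$: since $\{\theta_i\}$ is an $L$-basis the map from the symbol vector to $\textbf{A}_0, \textbf{A}_1$ is injective, and because $F \neq L$ and $\gamma_{{}_M} \in L\setminus F$ the real and imaginary components are not collapsed by any $L$-linear relation — so $Rank(\textbf{G}) = 2 \cdot (2nm) = 4nm$ when counted as real dimensions, giving $Rank(\textbf{G})/2 = 2nm$ independent complex symbols.

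Then I would identify the block length $\textrm{T}$. Since $\textbf{M}$ is an $nm \times nm$ matrix and the number of transmit antennas is $n_t = nm$, the codeword is square and occupies $\textrm{T} = nm$ channel uses. Applying Definition \ref{code_rate}, the rate is
\begin{equation*}
 \frac{Rank(\textbf{G})}{2\textrm{T}} = \frac{2nm}{nm} = 2 \quad \textrm{complex symbols per channel use.}
\end{equation*}

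The main obstacle I anticipate is not the arithmetic but justifying the independence claim rigorously: one must confirm that the symmetry of $\textbf{M}$ (the repeated $\Upsilon^j$-images of $\textbf{A}_0$ and $\textbf{A}_1$, and the placement of zero blocks) does not force any $L$-linear dependence among the $2nm$ complex symbols, i.e.\ that the generator matrix truly attains full column rank $4nm$. This reduces to checking that the $L$-basis $\{\theta_i\}$ together with $\beta \in \{i,\omega\}$ yields $2n$ real-linearly-independent vectors, which follows from $\{\theta_i\}$ being an integral basis and $L = \mathbb{Q}(i)$ or $\mathbb{Q}(\omega)$ being an imaginary quadratic field; once this is established the rate computation is immediate.
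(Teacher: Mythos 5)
Your proposal is correct and follows essentially the same route as the paper, which simply observes that the codeword in \eqref{M_form1} encodes $2nm$ independent complex symbols (the $nm$ symbols in each of $\textbf{A}_0$ and $\textbf{A}_1$ per \eqref{A_str}) over $\textrm{T}=nm$ channel uses, giving rate $2$. Your additional care in justifying the $L$-linear independence of the symbols via the basis $\{\theta_i\}$ is a reasonable elaboration of a step the paper treats as immediate, but it does not constitute a different approach.
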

  \begin{proof}
   The STBC encodes $2nm$ independent complex symbols in $nm$ channel uses, hence allowing a rate of $2$ complex symbols per channel use. 
  \end{proof}

\begin{proposition}\label{prop_stbc_scheme}
 The STBC-scheme that is based on the STBCs constructed using the proposed method has the NVD property if $\gamma \in \mathcal{O}_F$ and $\gamma_{{}_M} \in \mathcal{O}_{L}$. 
\end{proposition}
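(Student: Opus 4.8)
The plan is to establish the NVD property directly from the definition by working with the unnormalized infinite code $\mathcal{S}_{\infty}$, in which the information symbols range over the full lattice $\mathcal{A}_L = \mathcal{O}_L$ (namely $\mathbb{Z}[i]$ when $L = \mathbb{Q}(i)$, or $\mathbb{Z}[\omega]$ when $L = \mathbb{Q}(\omega)$). Since the weight matrices defining the scheme are fixed and only the size and average energy of the constellation vary with $\rho$, it suffices to bound $\vert det(\textbf{M}) \vert^2$ away from zero over all nonzero codewords $\textbf{M}$ of the form \eqref{M_form1}, where $\textbf{A}_0, \textbf{A}_1$ have the structure \eqref{A_str} with $s_{ki} \in \mathcal{O}_L$.

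First I would verify that every entry of $\textbf{M}$ lies in $\mathcal{O}_K$; this is precisely where the hypotheses $\gamma \in \mathcal{O}_F$ and $\gamma_{{}_M} \in \mathcal{O}_L$ enter, and it is the only step that requires genuine care. The symbols satisfy $s_{ki} \in \mathcal{O}_L \subseteq \mathcal{O}_K$, and the chosen $L$-basis elements satisfy $\theta_i \in \mathcal{O}_K$ by construction, so each combination $\sum_i s_{ki}\theta_i$ lies in $\mathcal{O}_K$. Because $\sigma$ and $\tau$ are Galois automorphisms of $K$, they map $\mathcal{O}_K$ onto itself; together with $\gamma \in \mathcal{O}_F \subseteq \mathcal{O}_K$ and $\gamma_{{}_M} \in \mathcal{O}_L \subseteq \mathcal{O}_K$, this shows that every entry of $\textbf{A}_0$ and $\textbf{A}_1$ (each of the form $\gamma\,\sigma^j(\sum_i s_{ki}\theta_i)$ or $\sigma^j(\sum_i s_{ki}\theta_i)$) lies in $\mathcal{O}_K$, and hence so does every entry of $\textbf{M}$, which is obtained by applying powers of $\Upsilon$ (i.e.\ of $\tau$) and multiplying by $\gamma_{{}_M}$.

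Next I would invoke the results already established. By Theorem \ref{Thm2}, for any nonzero codeword $\textbf{M}$ (equivalently, any nonzero $A_0 + \textbf{i}A_1 \in \mathcal{M}_\mathcal{A}$, each of which has a unique right inverse under condition \eqref{condt1}) the matrix $\textbf{M}$ is invertible, so $det(\textbf{M}) \neq 0$. Combining the integrality established above with Corollary \ref{cor1} gives $det(\textbf{M}) \in L \cap \mathcal{O}_K = \mathcal{O}_L$. Thus, for every nonzero codeword, $det(\textbf{M})$ is a \emph{nonzero} element of $\mathcal{O}_L$.

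Finally I would use the discreteness of $\mathcal{O}_L$ to bound the determinant away from zero. For $z \in \mathcal{O}_L$, the quantity $\vert z \vert^2$ equals the field norm $N_{L/\mathbb{Q}}(z)$, a nonnegative rational integer that is strictly positive when $z \neq 0$; explicitly $\vert a+ib\vert^2 = a^2 + b^2$ for $\mathbb{Z}[i]$ and $\vert a+\omega b\vert^2 = a^2 - ab + b^2$ for $\mathbb{Z}[\omega]$, each an integer at least $1$ whenever $(a,b) \neq (0,0)$. Hence $\vert det(\textbf{M}) \vert^2 \geq 1$ for every nonzero $\textbf{M} \in \mathcal{S}_{\infty}$, so $\inf_{\textbf{M} \in \mathcal{S}_{\infty} \setminus \textbf{O}} \vert det(\textbf{M}) \vert^2 \geq 1 > 0$, which is the NVD property. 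The main (and essentially only) obstacle is the integrality check of the first step; once that is in place, everything follows formally from Theorem \ref{Thm2}, Corollary \ref{cor1}, and the arithmetic of the imaginary quadratic rings $\mathbb{Z}[i]$ and $\mathbb{Z}[\omega]$.
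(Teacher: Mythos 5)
Your proposal is correct and follows essentially the same route as the paper: verify that the hypotheses $\gamma \in \mathcal{O}_F$, $\gamma_{{}_M} \in \mathcal{O}_L$ place all entries of $\textbf{M}$ in $\mathcal{O}_K$, invoke Corollary \ref{cor1} to conclude $det(\textbf{M}) \in \mathcal{O}_L$, and use the discreteness of $\mathbb{Z}[i]$ and $\mathbb{Z}[\omega]$ (a nonzero element has squared modulus at least $1$) together with the invertibility from Theorem \ref{Thm2}. The paper's proof is just a terser version of the same argument; your explicit integrality check and the norm computation fill in details the paper leaves implicit.
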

\begin{proof}
 With $L = \mathbb{Q}(i)$ or $\mathbb{Q}(\omega)$, the ring of integers in $L$ is either $\mathbb{Z}[i]$ or $\mathbb{Z}[\omega]$. When $\gamma \in \mathcal{O}_F$ and $\gamma_{{}_M} \in \mathcal{O}_{L}$, application of Corollary \ref{cor1} establishes that for the infinite STBC whose codewords have the form shown in \eqref{M_form1} with $\textbf{A}_0$ and $\textbf{A}_1$ given by \eqref{A_str} and the symbols taking values from either $\mathbb{Z}[i]$ or $\mathbb{Z}[\omega]$ (see Definition \ref{nvd_def}), the determinant of any nonzero codeword matrix lies in $\mathbb{Z}[i]$ or $\mathbb{Z}[\omega]$. The result of the proposition follows.
\end{proof}

While our proposed scheme can be applied to a wide range of MIMO configurations, we illustrate its application to $4$ MIDO configurations\footnote{While the constructed STBCs can be used for arbitrary number of receive antennas, they are full-rate only for MIDO systems.} - $4\times2$, $6\times2$, $8\times2$ and $12 \times 2$ systems. The reason for choosing these $4$ configurations is easy to see - the existence of perfect codes \cite{ORBV} for $2,3,4,6$ transmit antennas and the Alamouti code for $2$ transmit antennas. The perfect codes of \cite{ORBV} are known for their large coding gain while the Alamouti code has the least ML-decoding complexity among STBCs from CDAs in addition to having the best coding gain among known rate-$1$ codes for the $2 \times 1$ system. We wish to combine the advantages of both these STBCs and so, we focus on the four mentioned MIDO systems. The STBC design procedure for these four MIMO configurations is briefly outlined as follows, and explicit code constructions are presented in the following subsections. For $n_t = 4, 8$, we choose $L$ to be $\mathbb{Q}(i)$ while for $n_t = 6,12$, we choose $L$ to be $\mathbb{Q}(\omega)$. $K$ and $\gamma_{{}_M}$ are respectively chosen to be the maximal subfield and the non-norm element of the division algebra used to construct the perfect codes for $n_t/2$ transmit antennas. So, $K$ is of the form $L(\theta)$, $ \theta \in \mathbb{R}$. Next, $\mathcal{A}$ is chosen to be $\mathcal{A} = (K/\mathbb{Q}(\theta),\sigma:i \mapsto -i,-1)$ which is a subalgebra of Hamilton's quaternion algebra $\mathcal{A}_{\mathbb{H}} = (\mathbb{C}/\mathbb{R},\sigma:i\mapsto -i,-1)$. The explicit code construction is illustrated in the following subsections.

\subsection{$4 \times 2$ MIDO system}\label{4x2}

We choose $L=\mathbb{Q}(i)$, $K=\mathbb{Q}(i,\sqrt{5})$ and $\gamma_{{}_M} = i$. The Galois group of $\mathbb{Q}(i,\sqrt{5})/\mathbb{Q}(i)$ is $\{1,\tau : \sqrt{5} \mapsto -\sqrt{5} \}$, and $(\mathbb{Q}(i,\sqrt{5})/\mathbb{Q}(i),\tau,i)$ is the CDA used to construct the Golden code for 2 transmit antennas. $\mathcal{A}$ is chosen to be $(\mathbb{Q}(i,\sqrt{5})/\mathbb{Q}(\sqrt{5}),\sigma,-1)$. Note that $\gamma_{{}_M} = i \notin \mathbb{Q}(\sqrt{5})$. The STBC for the $4\times2$ system (unnormalized with respect to SNR) obtained upon application of the construction scheme depicted in the previous section is given as
\begin{equation*}
 \mathcal{S}_{4\times2} = \left\{ \left[ \begin{array}{rrrr}
                                          a_0 & -\sigma(a_1) & i\tau(a_2) & -i\tau\sigma(a_3) \\
                                          a_1 & \sigma(a_0) & i\tau(a_3) & i\tau\sigma(a_2) \\
a_2 & -\sigma(a_3) & \tau(a_0) & -\tau\sigma(a_1) \\
a_3 & \sigma(a_2) & \tau(a_1) & \tau\sigma(a_0) \\

                                         \end{array} \right]
 \right\}
\end{equation*}
where $a_0 = s_{01}\theta_1 + s_{02}\theta_2$, $a_1 = s_{03}\theta_1 + s_{04}\theta_2$, $a_2 = s_{11}\theta_1 + s_{12}\theta_2$, $a_3 = s_{13}\theta_1 + s_{14}\theta_2$ with $s_{kj} \in M\textrm{-QAM} \subset \mathbb{Z}[i]$, and $\{\theta_1, \theta_2 \vert \theta_i \in \mathcal{O}_{K}\}$ is a suitable $\mathbb{Q}(i)$-basis of $\mathbb{Q}(i,\sqrt{5})$. From \cite{ORBV}, we pick $\theta_1 = \alpha, \theta_2 = \alpha \theta$ where $\alpha = 1 + i(1-\theta)$, $\theta = (1+\sqrt{5})/2$, and $\{ \alpha, \alpha\theta\}$ is now a basis of a principal ideal of $\mathcal{O}_{K}$ generated by $\alpha$. We now wish to prove that the STBC-scheme that is based on $\mathcal{S}_{4\times2}$ has the NVD property. To do so, it is sufficient from Proposition \ref{prop_stbc_scheme} to prove that $A\Upsilon(A) \neq i$, $\forall A \in \mathcal{A}$.
\begin{proposition}\label{lem4}
 Let $\mathcal{A} = (\mathbb{Q}(i,\sqrt{5})/\mathbb{Q}(\sqrt{5}),\sigma,-1)$. Then, $A\Upsilon(A) \neq i$, $\forall A \in \mathcal{A}$.
\end{proposition}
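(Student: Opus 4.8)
The plan is to reduce the inequality $A\Upsilon(A)\neq i$ to the single fact that $i$ is a non-norm for the extension $\mathbb{Q}(i,\sqrt5)/\mathbb{Q}(i)$, which holds because $(\mathbb{Q}(i,\sqrt5)/\mathbb{Q}(i),\tau,i)$ is the cyclic division algebra underlying the Golden code, as recalled immediately above the proposition. First I would record the three structural properties of $\Upsilon$ that drive the argument: by \eqref{d2} it is a ring homomorphism of $\mathcal{A}$; it fixes $i$ since $\tau(i)=i$; and $\Upsilon^2=\mathrm{id}$ since $\tau^2=1$ on $K=\mathbb{Q}(i,\sqrt5)$. I would also note that $\mathcal{A}=(\mathbb{Q}(i,\sqrt5)/\mathbb{Q}(\sqrt5),\sigma,-1)$ is a division algebra, being a subalgebra of Hamilton's quaternions over the real field $F=\mathbb{Q}(\sqrt5)$.

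Next I would argue by contradiction. Suppose $A\Upsilon(A)=i$ for some $A\in\mathcal{A}$. Applying $\Upsilon$ and using $\Upsilon^2=\mathrm{id}$ together with $\Upsilon(i)=i$ yields $\Upsilon(A)\,A=i$ as well, so $A$ and $\Upsilon(A)$ commute. If $A$ lies in the centre $F$, then $\Upsilon(A)=\tau(A)$ and $A\Upsilon(A)=N_{F/\mathbb{Q}}(A)\in\mathbb{Q}\subset\mathbb{R}$, which cannot equal $i$; hence $A\notin F$. For a non-central element of the quaternion division algebra $\mathcal{A}$, the centralizer $C_{\mathcal{A}}(A)=F(A)$ is a maximal subfield with $[F(A):F]=2$. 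Since $\Upsilon(A)$ commutes with $A$, it lies in this field, and therefore $i=A\Upsilon(A)\in F(A)$. Consequently $K=F(i)\subseteq F(A)$, and comparing degrees, $[K:F]=2=[F(A):F]$, forces $F(A)=K$, so $A\in K$.

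It then remains to treat the reduced case $A\in K$, i.e. $A=a_0$ with $a_1=0$. Here $\Upsilon(A)=\tau(a_0)$ and $A\Upsilon(A)=a_0\tau(a_0)=N_{K/\mathbb{Q}(i)}(a_0)$, so $A\Upsilon(A)=i$ would display $i$ as a norm from $K=\mathbb{Q}(i,\sqrt5)$ down to $L=\mathbb{Q}(i)$. But $(\mathbb{Q}(i,\sqrt5)/\mathbb{Q}(i),\tau,i)$ is a cyclic division algebra, so by the non-norm condition \eqref{non_norm_condition} (here $n=2$, with $p=1$ the only relevant divisor) we have $N_{K/\mathbb{Q}(i)}(a_0)\neq i$ for every $a_0\in K$. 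This contradiction completes the proof of $A\Upsilon(A)\neq i$.

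The conceptual heart is the middle step: the commutativity $A\Upsilon(A)=\Upsilon(A)A$, which is a free consequence of $\Upsilon$ being an involutive automorphism fixing the target value $i$, pins $A$ inside the maximal subfield $K$ and collapses the quaternionic problem to a scalar norm equation. I expect the main obstacle to be making the centralizer step airtight, namely justifying that a non-central element of the quaternion division algebra has $F(A)$ as its full centralizer and that this forces $F(A)=K$; a double-centralizer/dimension count renders this routine but it must be stated cleanly. As a safeguard I would keep in reserve the direct computation $A\Upsilon(A)=\bigl(a_0\tau(a_0)-\sigma(a_1)\tau(a_1)\bigr)+\textbf{j}\bigl(\sigma(a_0)\tau(a_1)+a_1\tau(a_0)\bigr)$: equating the $K$- and $\textbf{j}$-components to $i$ and $0$ and analyzing the resulting system in the Hamilton coordinates of $\mathcal{A}$ over $\mathbb{Q}$ forces $a_1=0$, again reducing to the same norm equation $a_0\tau(a_0)=i$.
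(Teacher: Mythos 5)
Your proposal is correct, but it takes a genuinely different route from the paper's proof. The paper argues entirely in coordinates: writing $A=a+\textbf{j}b$, it first disposes of the cases $a=0$ and $b=0$ using the non-norm property of $i$ (and a $\sigma$-vs-$\tau$ conjugation clash), then derives the same commutation relation $A\Upsilon(A)=\Upsilon(A)A$ you use, but exploits it by expanding both products to show that $b/\tau(b)$ is fixed by $\sigma$ (hence lies in $\mathbb{Q}(\sqrt{5})$), while the two component equations of $A\Upsilon(A)=i$ force $\tau(a)/\sigma(a)$, and hence $b/\tau(b)=-\sigma(a)/\tau(a)$, to be purely imaginary --- a contradiction. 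Your argument replaces that computation with a structural step: since $\Upsilon$ is an involutive ring automorphism fixing $i$, the element $\Upsilon(A)$ lies in the centralizer $C_{\mathcal{A}}(A)=F(A)$, so $i\in F(A)$, and the degree count $[F(A):F]=2=[K:F]$ pins $A$ inside $K$, collapsing the problem to the single norm equation $N_{K/\mathbb{Q}(i)}(a_0)=i$, which \eqref{non_norm_condition} forbids. The double-centralizer/dimension count you flag is indeed the only point needing care, and it is standard for quaternion division algebras. Your version is cleaner and makes transparent that the whole proposition rests on $i$ being a non-norm for the Golden-code algebra; the paper's version is more elementary (no centralizer theory) and its computational template is what the paper then reuses, with partial products playing the role of $A$, for the analogous Propositions \ref{prop_a}, \ref{prop_b} and \ref{prop_c} where $n>2$ and the direct commutativity-of-$A$-with-$\Upsilon(A)$ shortcut is no longer available in the same form.
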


The proof of Proposition \ref{lem4} is given in Appendix \ref{app_4a}. So, $\mathcal{S}_{4\times2}$ is a rate-2 STBC with full-diversity and equipped with the property of non-vanishing determinant. 

\subsubsection{Minimum determinant}
When $s_{ki}$, $k=0,1$, $i = 1,\cdots, 4$, take values from $\mathbb{Z}[i]$, from Corollary \ref{cor1} the determinant of each of the codewords of $\mathcal{S}_{4\times2}$ belongs to $\mathbb{Z}[i]$. So, the minimum determinant of the unnormalized code is at least $1$. However, noting that the entries of the $i^{th}$ column of a codeword matrix, $i=1,\cdots,4$, are all respectively multiples of $\alpha$, $\sigma(\alpha)$, $\tau(\alpha)$, and $\sigma\tau(\alpha)$, the minimum determinant\footnote{The entries of the Golden code are not just from $\mathcal{O}_K = \mathbb{Z}[i,\theta]$ but from a principal ideal in $\mathcal{O}_K$ generated by $\alpha$. For details on this theory, one can refer to \cite{ORBV}.} is a multiple of $\vert \alpha\sigma(\alpha)\tau(\alpha)\sigma\tau(\alpha) \vert ^2 = \vert N_{K/L}(\alpha) \vert^4 = 25$ ($\sigma$ is simply complex conjugation). When $s_{ki}$ take values from an $M$-QAM with average energy $E$ units, a normalization factor (see Note \ref{note1} in Section \ref{sec_system_model}) of $\frac{1}{\sqrt{4E\vert \alpha \vert ^2 (1+\theta^2)}} = \frac{1}{\sqrt{20E}}$ has to be taken into account. Further, since the difference between any two signal points in a QAM constellation is a multiple of $2$, the normalized minimum determinant of $\mathcal{S}_{4\times2}$ is $\delta_{min}(\mathcal{S}_{4\times2})=25\left(\frac{2}{\sqrt{20E}}\right)^8 = \frac{1}{25E^4}$. 

\begin{note}
 The STBC for the $4 \times 2$ MIDO system in \cite[Section IV-B]{markin} makes use of $\mathcal{A} = ( \mathbb{Q}(i, \sqrt{5})/ \mathbb{Q}(i),\sigma:\sqrt{5} \mapsto -\sqrt{5}, i)$, $\tau = \sigma$, and $\gamma_{{}_M} = 1- i$. The constructed STBC has NVD, but lower normalized minimum determinant than $\mathcal{S}_{4\times2}$. Its normalized minimum determinant can be calculated to be $\frac{256}{25^3E^4}$, which is $\left(\frac{4}{5}\right)^4 = 0.4096$ times the normalized minimum determinant of $\mathcal{S}_{4 \times 2}$, which corresponds to a coding gain that is $0.8$ times the coding gain of $\mathcal{S}_{4 \times 2}$. 
\end{note}
  
\subsubsection{Relation with Srinath-Rajan code}
A codeword matrix of the SR-code is given by 
\begin{equation*}
 \textbf{S} = \left[\begin{array}{rr}
                     \textbf{A} & e^{\frac{i\pi}{4}}\textbf{C} \\
                     e^{\frac{i\pi}{4}}\textbf{B} & \textbf{D} \\
                    \end{array}\right]
\end{equation*}
where 

\begin{eqnarray*}
 \textbf{A} & = & \left[\begin{array}{cc}
                     x_{1I}+ix_{3Q} & -x_{2I}+ix_{4Q} \\
                     x_{2I}+ix_{4Q} & x_{1I}-ix_{3Q} \\
                    \end{array}\right],\\
\textbf{B} & = & \left[\begin{array}{cc}
                     x_{5I}+ix_{7Q} & -x_{6I}+ix_{8Q} \\
                     x_{6I}+ix_{8Q} & x_{5I}-ix_{7Q} \\
                    \end{array}\right],
\end{eqnarray*}
\begin{eqnarray*}
\textbf{C} & = & \left[\begin{array}{cc}
                     x_{7I}+ix_{5Q} & -x_{8I}+ix_{6Q} \\
                     x_{8I}+ix_{6Q} & x_{7I}-ix_{5Q} \\
                    \end{array}\right],\\
\textbf{D} & = & \left[\begin{array}{cc}
                     x_{3I}+ix_{1Q} & -x_{4I}+ix_{2Q} \\
                     x_{4I}+ix_{2Q} & x_{3I}-ix_{1Q} \\
                    \end{array}\right]
\end{eqnarray*}
with $x_{iI}$ and $x_{iQ}$ being the real and imaginary parts respectively of the complex symbol $x_i$, $i=1,\cdots,8$, and $x_i = e^{\frac{i\tan^{-1}(2)}{2}}s_i$, $s_i \in \mathbb{Z}[i]$ (i.e., from a suitable QAM constellation). Denoting $\tan^{-1}(2)/2$ by $\theta_g$, we have
\begin{eqnarray*}
x_i & = & \cos \theta_gs_{iI} -\sin \theta_gs_{iQ} + i( \sin \theta_gs_{iI} +\cos \theta_gs_{iQ}) \\ 
& = & \sin \theta_g \left[ \cot \theta_gs_{iI} -s_{iQ} + i(s_{iI} +\cot \theta_gs_{iQ}) \right] \\
& = & \sin \theta_g \left[ \theta s_{iI} -s_{iQ} + i(s_{iI} +\theta s_{iQ}) \right] 
\end{eqnarray*}
where $\theta = (1+\sqrt{5})/2$. So, it is easy to work out that $\textbf{S} = \textbf{US}^\prime\textbf{U}^H\textbf{D}$ with $\textbf{U} = \textrm{diag}[1,1,e^{\frac{i\pi}{4}},e^{\frac{i\pi}{4}}]$, 

\begin{equation}\label{form_s}
\textbf{S}^\prime = \left[\begin{array}{rrrr}
                     f_0 & -\sigma(f_1) & i\tau(f_2) & -i\sigma\tau(f_3)  \\
                     f_1 &  \sigma(f_0) & i\tau(f_3) &  i\sigma\tau(f_2)\\
                     f_2 & -\sigma(f_3) & \tau(f_0) & -\sigma\tau(f_1) \\
                     f_3 &  \sigma(f_2) & \tau(f_1) &  \sigma\tau(f_0)
                    \end{array}\right] 
\end{equation}
where $f_0 = -s_{1Q}+is_{3I} + \theta(s_{1I}+is_{3Q})$, $f_1 = -s_{2Q}+is_{4I} + \theta(s_{2I}+is_{4Q})$, $f_2 = -s_{5Q}+is_{7I} + \theta(s_{5I}+is_{7Q})$, $f_3 = -s_{6Q}+is_{8I} + \theta(s_{6I}+is_{8Q})$, and $\textbf{D} = \textrm{diag}[\sin \theta_g,\sin \theta_g, \theta \sin \theta_g,\theta \sin \theta_g  ]$. So, $f_i \in \mathbb{Z}[i,\theta]$ and it is easy to observe that a codeword matrix $\textbf{S}$ of $\mathcal{S}_{4\times2}$ constructed in this subsection has the structure $\textbf{S} = \textbf{S}^{\prime \prime} \textbf{D}_1$ where $\textbf{S}^{\prime \prime}$ has the same algebraic structure as $\textbf{S}^\prime$ in \eqref{form_s} and $\textbf{D} = \textrm{diag}[\frac{\alpha}{\sqrt{5}},\frac{\sigma(\alpha)}{\sqrt{5}},\frac{\tau(\alpha)}{\sqrt{5}},\frac{\sigma\tau(\alpha)}{\sqrt{5}} ]$ (the scaling factor of $1/\sqrt{5}$ is for energy equalization). Clearly, the SR-code and $\mathcal{S}_{4\times2}$ have the same underlying algebraic structure and hence the same minimum determinant (this follows from the fact that $\vert det(\textbf{D}) \vert = \vert det(\textbf{D}_1) \vert = 1/5 $) and ML-decoding complexity. This also establishes that the STBC-scheme that is based on the SR-code has the NVD property, which had been previously only conjectured. 

\subsection{$6 \times 2$ MIDO system} 

For this MIDO configuration, we choose $L=\mathbb{Q}(\omega)$, $K=\mathbb{Q}(\omega,\theta)$ and $\gamma_{{}_M} = \omega$, where $\theta = \zeta_7 + \zeta_7^{-1} = 2\cos\left(\frac{2\pi}{7}\right)$ with $\zeta_7$ denoting the primitive $7^{th}$ root of unity. Note that $(\mathbb{Q}(\omega,\theta)/\mathbb{Q}(\omega),\tau, \omega)$ is the CDA used to construct the perfect code for 3 transmit antennas \cite{ORBV} with $\tau$ given by $\tau : \zeta_7 + \zeta_7^{-1} \mapsto \zeta_7^2 + \zeta_7^{-2}$. $\mathcal{A}$ is chosen to be $(\mathbb{Q}(\omega,\theta)/\mathbb{Q}(\theta),\sigma: i \mapsto -i, -1)$. Since $\omega = (-1+\sqrt{3}i)/2$, $\sigma(\omega) = \omega^2$. It is to be noted that $\gamma_{{}_M} = \omega \notin \mathbb{Q}(\theta)$. The rate-2 STBC (unnormalized with respect to SNR) for $6$ transmit antennas is given by
\begin{equation*}
 \mathcal{S}_{6\times2} = \left\{ \left[ \begin{array}{ccc}
                                          \textbf{A}_0 & \textbf{O} &  \omega \Upsilon^2(\textbf{A}_1)\\
                                          \textbf{A}_1 & \Upsilon(\textbf{A}_0) & \textbf{O} \\
                                          \textbf{O} & \Upsilon(\textbf{A}_1) & \Upsilon^2(\textbf{A}_0) \\
                                         \end{array} \right]
 \right\}
\end{equation*}
where
\begin{equation*}
 {\bf A}_k =  \left[ \begin{array}{cc}
                                         \sum_{i=1}^{3}s_{ki}\theta_i  & -\sigma\left(\sum_{i=1}^{3}s_{k(i+3)}\theta_i\right) \\
                                         \sum_{i=1}^{3}s_{k(i+3)}\theta_i & \sigma\left(\sum_{i=1}^{3}s_{ki}\theta_i\right) \\
                                         \end{array} \right] 
\end{equation*}
with $s_{kj} \in M\textrm{-HEX} \subset \mathbb{Z}[\omega]$, $k=0,1$, $i =1,\cdots,6$. Here, $\{ \theta_1, \theta_2, \theta_3\}$ is a basis of a principal ideal in $\mathcal{O}_{K}$ generated by $\theta_1$ with \cite{ORBV} $\theta_1 = 1+\omega+\theta$, $\theta_2 = -1-2\omega+\omega\theta^2$ and $\theta_3 = (-1-2\omega)+(1+\omega)\theta + (1+\omega)\theta^2$. To prove that the STBC-scheme which is based on $\mathcal{S}_{6\times2}$ has the NVD property, it is sufficient to prove that $A\Upsilon(A)\Upsilon^2(A) \neq \omega$, $\forall A \in \mathcal{A}$. 

\begin{proposition}\label{prop_a}
$A\Upsilon(A)\Upsilon^2(A) \neq \omega$, $\forall A \in (\mathbb{Q}(\omega,\theta)/\mathbb{Q}(\theta),$ $\sigma,-1)$. 
\end{proposition}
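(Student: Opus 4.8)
The plan is to argue by contradiction: I would suppose that some nonzero $A\in\mathcal{A}$ satisfies $A\Upsilon(A)\Upsilon^2(A)=\omega$ and show that this forces $A$ to lie in the maximal subfield $K$, where the identity degenerates into a norm equation that is excluded by the fact that $(\mathbb{Q}(\omega,\theta)/\mathbb{Q}(\omega),\tau,\omega)$ is a division algebra. First note that $A\neq 0$ (otherwise the product is $0\neq\omega$), so $A$ is invertible since $\mathcal{A}$ is a division algebra. I would also record the properties of $\Upsilon$ that drive the argument: it restricts to $\tau$ on $K$, fixes $\textbf{j}$, satisfies $\Upsilon^3=\mathrm{id}$ (because $|Gal(K/L)|=3$), and fixes $\omega$ (since $\omega\in L$ and $\tau$ fixes $L$); moreover $\Upsilon$ is multiplicative by \eqref{d2}.

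The key step is to exploit this cyclic symmetry. Setting $P\triangleq\Upsilon(A)\Upsilon^2(A)$, the hypothesis reads $AP=\omega$. Applying the automorphism $\Upsilon$ to both sides and using $\Upsilon^3=\mathrm{id}$ together with $\Upsilon(\omega)=\omega$ yields $\Upsilon(A)\Upsilon^2(A)A=\omega$, that is, $PA=\omega$. Hence $AP=PA=\omega$, and substituting $P=A^{-1}\omega$ into $PA=\omega$ gives $A^{-1}\omega A=\omega$, so $A$ commutes with $\omega$. This is the crux: the cyclic rotation by $\Upsilon$ pins $A$ inside the centralizer of $\omega$.

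It then remains to identify that centralizer. Writing $A=a+\textbf{j}b$ with $a,b\in K$ and using the relation $x\textbf{j}=\textbf{j}\sigma(x)$, a comparison of the $K$- and $\textbf{j}$-components of $A\omega=\omega A$ produces $b(\omega-\omega^2)=0$, whence $b=0$ because $\omega\neq\omega^2$ (equivalently, $K$ is a maximal subfield of the quaternion division algebra $\mathcal{A}$, so its centralizer is $K$ itself). Thus $A=a\in K$, and the hypothesis collapses to $a\tau(a)\tau^2(a)=N_{K/L}(a)=\omega$. But $(\mathbb{Q}(\omega,\theta)/\mathbb{Q}(\omega),\tau,\omega)$ is the division algebra underlying the perfect code for three transmit antennas, so by the non-norm condition \eqref{non_norm_condition} (with $n=3$, the only relevant divisor being $p=1$) one has $N_{K/L}(a)\neq\omega$ for every $a\in K$, a contradiction. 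I expect the only genuine obstacle to be spotting the $\Upsilon$-rotation that forces $A$ into the centralizer of $\omega$; the remaining steps are routine, and the same scheme (rotate by $\Upsilon$, descend into $K$, invoke the non-norm property of the associated CDA) should recover Proposition \ref{lem4} as the $n=2$ special case.
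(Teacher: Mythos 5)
Your proof is correct, but it takes a genuinely different and cleaner route than the paper's. Both arguments begin with the same move --- applying a power of $\Upsilon$ to the hypothesized identity $A\Upsilon(A)\Upsilon^2(A)=\omega$ and using $\Upsilon^3=\mathrm{id}$, $\Upsilon(\omega)=\omega$ to obtain a cyclically rotated version of it --- but they diverge immediately afterwards. The paper groups the rotated identity as a commutation relation between $A\Upsilon(A)$ and $\Upsilon^2(A)$, writes $A=a+\textbf{j}b$ and $A\Upsilon(A)=x+\textbf{j}\sigma(y)$, rules out $a=0$, $b=0$, $x=0$ by hand, and then extracts scalar equations whose compatibility forces a quantity to be simultaneously fixed by $\sigma$ (hence in $\mathbb{Q}(\theta)$) and complex-valued, a contradiction. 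You instead exploit invertibility in the division algebra: from $AP=PA=\omega$ with $P=\Upsilon(A)\Upsilon^2(A)$ you get $A^{-1}\omega A=\omega$, so $A$ lies in the centralizer of $\omega$, which your one-line computation (or the double-centralizer theorem, since $F(\omega)=K$ is maximal) identifies as $K$; the identity then collapses to $N_{K/L}(a)=\omega$, excluded by the non-norm condition \eqref{non_norm_condition} for the perfect-code CDA $(\mathbb{Q}(\omega,\theta)/\mathbb{Q}(\omega),\tau,\omega)$ with $p=1$. Your version subsumes the paper's separate case analysis ($a=0$ or $b=0$) automatically, avoids all component-level computation, and, as you note, applies verbatim (mutatis mutandis) to Propositions \ref{lem4}, \ref{prop_b} and \ref{prop_c}, whereas the paper repeats a bespoke calculation in each appendix. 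What the paper's approach buys is only that it stays entirely at the level of explicit field elements without invoking the centralizer structure; yours is the more conceptual and more general argument, and I see no gap in it.
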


The proof of Proposition \ref{prop_a} is on similar lines to that of Proposition \ref{lem4} and given in Appendix \ref{app_4}.

\subsubsection{Minimum determinant}
When $s_{ki}$, $k=0,1$, $i = 1,\cdots, 6$, take values from $\mathbb{Z}[\omega]$, from Corollary \ref{cor1} the determinant of each of the codewords of $\mathcal{S}_{6\times2}$ belongs to $\mathbb{Z}[\omega]$. So, the minimum determinant of the unnormalized code is at least $1$. However, the perfect code for 3 antennas has its entries from a principal ideal in $\mathcal{O}_K$ generated by $\theta_1$. So, the minimum determinant is $\vert N_{K/L}(\theta_1)\vert^4 = 7^2 = 49$. When the constellation used is $M$-HEX (so that the difference between any two signal points is a multiple of 2), after taking into account a normalization factor of $1/\sqrt{4E(\vert \theta_1\vert^2 +\vert \theta_2\vert^2 +\vert \theta_3\vert^2) } = 1/\sqrt{28E}$, the normalized minimum determinant of $\mathcal{S}_{6\times2}$ is $49\left(\frac{2}{\sqrt{28E}}\right)^{12} = \frac{1}{7^4E^6}$. 

\begin{table*}
\begin{equation}\label{12x2}
 \mathcal{S}_{12\times2} = \left\{ \left[ \begin{array}{cccccc}
                                          \textbf{A}_0 & \textbf{O} & \textbf{O}  & \textbf{O} & \textbf{O} & -\omega \Upsilon^5(\textbf{A}_1)\\
                                          \textbf{A}_1 & \Upsilon(\textbf{A}_0) & \textbf{O} &  \textbf{O} & \textbf{O} & \textbf{O}\\
                                          \textbf{O} & \Upsilon(\textbf{A}_1) & \Upsilon^2(\textbf{A}_0) & \textbf{O} & \textbf{O} & \textbf{O}\\
\textbf{O} & \textbf{O} & \Upsilon^2(\textbf{A}_1)&  \Upsilon^3(\textbf{A}_0) & \textbf{O} & \textbf{O}\\
                                          \textbf{O} & \textbf{O} & \textbf{O} &  \Upsilon^3(\textbf{A}_1) & \Upsilon^4(\textbf{A}_0) & \textbf{O}\\
                                          \textbf{O} & \textbf{O} & \textbf{O} &  \textbf{O} & \Upsilon^4(\textbf{A}_1) & \Upsilon^5(\textbf{A}_0)\\
                                         \end{array} \right], {\bf A}_k =  \left[ \begin{array}{cc}
                                         z_{k0}  & -\sigma\left(z_{k1}\right) \\
                                         z_{k1} & \sigma\left(z_{k0}\right) \\
                                         \end{array} \right] 
 \right\}.
\end{equation}
\end{table*}

\subsection{$8 \times 2$ MIDO system}

$L$ is chosen to be $\mathbb{Q}(i)$, and $K=\mathbb{Q}(i,\theta)$, $\gamma_{{}_M} = i$, where $\theta = \zeta_{15} + \zeta_{15}^{-1} = 2\cos\left(\frac{2\pi}{15}\right)$ with $\zeta_{15}$ denoting the primitive $15^{th}$ root of unity. Note that with $\tau$ given as $\tau : \zeta_{15} + \zeta_{15}^{-1} \mapsto \zeta_{15}^2 + \zeta_{15}^{-2}$, $\left(\mathbb{Q}(i,\theta)/\mathbb{Q}(i),\tau,i \right)$ is the CDA used to construct the perfect code for 4 transmit antennas \cite{ORBV}. Next, $\mathcal{A}$ is chosen to be $(\mathbb{Q}(i,\theta)/\mathbb{Q}(\theta),\sigma,-1)$. It is to be noted that $\gamma_{{}_M} = i \notin \mathbb{Q}(\theta)$. The rate-2 STBC (unnormalized with respect to SNR) for $8$ transmit antennas is given by
\begin{equation*}
 \mathcal{S}_{8\times2} = \left\{ \left[ \begin{array}{cccc}
                                          \textbf{A}_0 & \textbf{O} & \textbf{O} &  i\Upsilon^3(\textbf{A}_1)\\
                                          \textbf{A}_1 & \Upsilon(\textbf{A}_0) & \textbf{O}  & \textbf{O} \\
                                          \textbf{O} & \Upsilon(\textbf{A}_1) & \Upsilon^2(\textbf{A}_0) & \textbf{O} \\
\textbf{O} & \textbf{O} &  \Upsilon^2(\textbf{A}_1) & \Upsilon^3(\textbf{A}_0) \\
                                         \end{array} \right]
 \right\}
\end{equation*}
where
\begin{equation*}
 {\bf A}_k =  \left[ \begin{array}{cc}
                                         \sum_{i=1}^{4}s_{ki}\theta_i  & -\sigma\left(\sum_{i=1}^{4}s_{k(i+4)}\theta_i\right) \\
                                         \sum_{i=1}^{4}s_{k(i+4)}\theta_i & \sigma\left(\sum_{i=1}^{4}s_{ki}\theta_i\right) \\
                                         \end{array} \right] 
\end{equation*}
with $s_{kj} \in M\textrm{-QAM} \subset \mathbb{Z}[i]$, $k=0,1$, $j =1,\cdots,8$. Here, $\{ \theta_1, \theta_2, \theta_3, \theta_4\}$ is a basis \cite{ORBV} of a principal ideal in $\mathcal{O}_{K}$ generated by $\theta_1$, where $\theta_1   = \alpha$, $\theta_2   =\alpha\theta$, $\theta_3 = \alpha\theta(-3+\theta^2)$, $\theta_4  = \alpha(-1-3\theta+\theta^2+\theta^3)$ with $\alpha = 1-3i+i\theta^2$. To prove that the STBC-scheme which is based on $\mathcal{S}_{8\times2}$ has the NVD property, it is sufficient from Proposition \ref{prop_stbc_scheme} to prove that $A\Upsilon(A)\Upsilon^2(A)\Upsilon^3(A) \neq i$, $\forall A \in \mathcal{A}$. 
\begin{proposition}\label{prop_b}
 Let $\mathcal{A} = (\mathbb{Q}(i,\theta)/\mathbb{Q}(\theta),\sigma,-1)$. Then, $A\Upsilon(A)\Upsilon^2(A)\Upsilon^3(A) \neq i$, $\forall A \in \mathcal{A}$.
\end{proposition}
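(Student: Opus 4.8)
The plan is to argue by contradiction, reducing the statement to the classical non-norm condition for the degree-$4$ cyclic division algebra that underlies the perfect code for $4$ transmit antennas. Throughout I use that $m=2$, $n=4$, $F=\mathbb{Q}(\theta)$, $K=\mathbb{Q}(i,\theta)$, $L=\mathbb{Q}(i)$, and $\gamma_{{}_M}=i$.

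First I would suppose that $A\Upsilon(A)\Upsilon^2(A)\Upsilon^3(A)=i$ for some $A\in\mathcal{A}$. Since the right-hand side is nonzero, $A\neq0$, and as $\mathcal{A}$ is a division algebra, $A$ is invertible. The key first step is to exploit that $\gamma_{{}_M}=i$ lies in $L=\mathbb{Q}(i)$ and is therefore fixed by $\tau$, so $\Upsilon(i)=i$. Writing $P=A\Upsilon(A)\Upsilon^2(A)\Upsilon^3(A)$ and using that $\Upsilon$ is a ring homomorphism (by \eqref{d2}) with $\Upsilon^4=\mathrm{id}$ (since $\tau$ generates a group of order $n=4$), applying $\Upsilon$ to $P=i$ yields the cyclically shifted product $\Upsilon(A)\Upsilon^2(A)\Upsilon^3(A)\,A=i$. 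Setting $Q=\Upsilon(A)\Upsilon^2(A)\Upsilon^3(A)$, I then have both $AQ=i$ and $QA=i$, whence $A^{-1}iA=i$, i.e., $i$ and $A$ commute.

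Next I would use that the centralizer of $i$ in $\mathcal{A}$ is exactly the maximal subfield $K=\mathbb{Q}(i,\theta)$: writing $A=a_0+\textbf{j}a_1$ with $a_0,a_1\in K$, the relation $i\textbf{j}=\textbf{j}\sigma(i)=-\textbf{j}i$ forces $a_1=0$, so $A=a_0\in K$. For $A\in K$ one has $\Upsilon^k(A)=\tau^k(a_0)$, and therefore $P=a_0\tau(a_0)\tau^2(a_0)\tau^3(a_0)=N_{K/L}(a_0)$, the relative norm. The assumed equation thus collapses to $N_{K/L}(a_0)=i$.

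Finally, I would invoke that $(\mathbb{Q}(i,\theta)/\mathbb{Q}(i),\tau,i)$ is a cyclic division algebra of degree $4$ (the algebra used to build the perfect code for $4$ antennas). By the non-norm criterion \eqref{non_norm_condition}, taking the divisor $p=1$ of $n=4$, the element $i$ cannot be written as $N_{K/L}(a_0)$ for any $a_0\in K$; this contradicts the displayed equality and completes the argument. I expect the only genuinely delicate point to be the reduction to $A\in K$ via the $\Upsilon$-invariance of $\gamma_{{}_M}$ together with the commutation argument; once this is in place the conclusion is immediate from the known division-algebra property and requires no new computation. This also makes clear why the proof runs ``on similar lines'' to that of Proposition \ref{lem4}: the same cyclic-shift and centralizer reduction applies verbatim for every $n$, reducing each case to the corresponding non-norm condition for the relevant perfect-code algebra.
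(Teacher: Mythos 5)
Your proof is correct, but it takes a genuinely different and more conceptual route than the paper's. The paper applies $\Upsilon^2$ to the assumed identity, so that $A\Upsilon(A)$ commutes with $\Upsilon^2(A)\Upsilon^3(A)$, then writes $A\Upsilon(A)=x+\textbf{j}y$ and works entirely in coordinates: it disposes of the cases $x=0$ and $y=0$ separately (using that $i^t$ is not a norm from $\mathbb{Q}(i,\theta)$ for $t=1,2,3$), and finally derives a contradiction by showing that the ratio $\tau^2(x)/\sigma(x)$ is fixed by $\sigma$ (hence lies in $\mathbb{Q}(\theta)$ and is real) while the relation $\frac{\tau^2(x)}{\sigma(x)}\left[x\sigma(x)+y\sigma(y)\right]=i$ forces it to be non-real. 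You instead apply $\Upsilon$ once, which is the sharper move: since $\Upsilon$ is a ring automorphism fixing $\gamma_{{}_M}=i$ and $\Upsilon^n=\mathrm{id}$, you get $AQ=QA=i$ with $Q=\Upsilon(A)\Upsilon^2(A)\Upsilon^3(A)$, hence $A$ centralizes $i$; since $K=F(i)$ and $\sigma(i)=-i$, the centralizer computation forces $A\in K$, and the identity collapses to $N_{K/L}(A)=i$, contradicting the non-norm condition (the $p=1$ case of \eqref{non_norm_condition}) for the degree-$4$ perfect-code algebra — the same fact the paper itself invokes in its $y=0$ case. What your approach buys is brevity and uniformity: the identical cyclic-shift-plus-centralizer reduction proves Propositions \ref{lem4}, \ref{prop_a}, \ref{prop_b} and \ref{prop_c} simultaneously, whereas the paper repeats a bespoke coordinate computation for each; the paper's computation, on the other hand, stays entirely elementary and does not need the observation that the centralizer of $\gamma_{{}_M}$ in $\mathcal{A}$ is exactly $K$. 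The only point worth making explicit in your write-up is that $K=F(i)$ (so that commuting with $i$ really does pin $A$ down to $K$), which your direct computation with $A=a_0+\textbf{j}a_1$ already establishes.
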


The proof of Proposition \ref{prop_b} is given in Appendix \ref{app_5}.

\subsubsection{Minimum determinant}
When $s_{ki}$, $k=0,1$, $i = 1,\cdots, 8$, take values from $\mathbb{Z}[i]$, from Corollary \ref{cor1} the determinant of each of the codewords of $\mathcal{S}_{8\times2}$ belongs to $\mathbb{Z}[i]$ and hence the minimum determinant of the unnormalized code is at least $1$. However, the perfect code for 4 antennas has its entries from a principal ideal in $\mathcal{O}_K$ generated by $\theta_1$ whose norm $N_{K/L}(\theta_1)$ has modulus equal to $\sqrt{45}$. Hence, the minimum determinant is $  \vert N_{K/L}(\theta_1)\vert^4 = 45^2$. When the constellation used is $M$-QAM, after taking into account a normalization factor of $1/\sqrt{4E \sum_{i=1}^{4}\vert\theta_i \vert ^2 } = 1/\sqrt{60E}$, the normalized minimum determinant of $\mathcal{S}_{8\times2}$ is $(45^2)\left(\frac{2}{\sqrt{60E}}\right)^{16} = \frac{1}{25(15)^4E^8}$.

\subsection{$12 \times 2$ MIDO system}

We choose $L$ is chosen to be $\mathbb{Q}(\omega)$, and $K=\mathbb{Q}(\omega,\theta)$, $\gamma_{{}_M} = -\omega$, where $\theta = \zeta_{28} + \zeta_{28}^{-1} = 2\cos\left(\frac{\pi}{14}\right)$ with $\zeta_{28}$ denoting the primitive $28^{th}$ root of unity. With $\tau$ given as $\tau : \zeta_{28} + \zeta_{28}^{-1} \mapsto \zeta_{28}^2 + \zeta_{28}^{-2}$, $\left(\mathbb{Q}(\omega,\theta)/\mathbb{Q}(\omega),\tau, -\omega \right)$ is the CDA used to construct the perfect code for 6 transmit antennas \cite{ORBV}. $\mathcal{A}$ is chosen to be $(\mathbb{Q}(\omega,\theta)/\mathbb{Q}(\theta),\sigma,-1)$. It is clear that $\gamma_{{}_M} = -\omega \notin \mathbb{Q}(\theta)$. The rate-2 STBC (unnormalized with respect to SNR) for $12$ transmit antennas is given by \eqref{12x2} at the top of the next page with
\begin{equation*}
 \left[ \begin{array}{c}
         z_{ki}\\
         \tau(z_{ki})\\
         \tau^2(z_{ki})\\
         \tau^3(z_{ki})\\
         \tau^4(z_{ki})\\
         \tau^5(z_{ki})\\
        \end{array}
\right] =  \textbf{R}\left[ \begin{array}{c}
         s_{k(6i+1)}\\
         s_{k(6i+2)}\\
         s_{k(6i+3)}\\
         s_{k(6i+4)}\\
         s_{k(6i+5)}\\
         s_{k(6i+6)}\\
        \end{array}
\right], ~ k=0,1, ~ i=0,1,
\end{equation*}
where $s_{kj} \in M\textrm{-HEX} \subset \mathbb{Z}[\omega]$, $k=0,1$, $j =1,\cdots,24$, and $\textbf{R}$, defined by \eqref{integral_basis}, is obtained from \cite{ORBV} and shown in \eqref{R_12x2} at the top of the next page.

As done for the previous STBCs, to prove that the STBC-scheme that is based on $\mathcal{S}_{12\times2}$ has the NVD property, it is sufficient to show that $A\Upsilon(A)\Upsilon^2(A)\cdots \Upsilon^5(A) \neq -\omega$, $\forall A \in \mathcal{A}$.
\begin{proposition}\label{prop_c}
$A\Upsilon(A)\Upsilon^2(A)\cdots\Upsilon^5(A) \neq -\omega$, $\forall A \in$ $ (\mathbb{Q}(\omega,\theta)/\mathbb{Q}(\theta),\sigma,-1)$.
\end{proposition}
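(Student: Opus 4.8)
The plan is to argue by contradiction, exploiting the fact that the sixfold product is $\Upsilon$-invariant up to conjugation; this will force the argument $A$ to lie in the maximal subfield $K$, whereupon the claim reduces to the non-norm property of the degree-$6$ cyclic division algebra underlying the perfect code. Write $P(A) \triangleq A\,\Upsilon(A)\,\Upsilon^2(A)\cdots\Upsilon^5(A)$. Here $\mathcal{A} = (\mathbb{Q}(\omega,\theta)/\mathbb{Q}(\theta),\sigma,-1)$ is a quaternion division algebra of degree $m=2$ over $F = \mathbb{Q}(\theta)$, with $K = \mathbb{Q}(\omega,\theta)$, $L = \mathbb{Q}(\omega)$ and $[K:L] = n = 6$; by \eqref{d2} the map $\Upsilon$ is a ring homomorphism of $\mathcal{A}$, and since $\tau$ generates $Gal(K/L)$ and has order $6$, we have $\Upsilon^6 = \textrm{id}$ on $\mathcal{A}$. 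Suppose, for contradiction, that $P(A) = -\omega$ for some $A \in \mathcal{A}$; then $A \neq 0$, and since $\mathcal{A}$ is a division algebra, $A$ is invertible.

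First I would apply $\Upsilon$ to $P(A)$. Using multiplicativity of $\Upsilon$ together with $\Upsilon^6(A) = A$, I obtain $\Upsilon(P(A)) = \Upsilon(A)\cdots\Upsilon^5(A)\,A = A^{-1}P(A)A$; that is, $P(A)$ and $\Upsilon(P(A))$ are conjugate via $A$. On the other hand, $\tau$ fixes $L = \mathbb{Q}(\omega)$, so $\Upsilon(-\omega) = \tau(-\omega) = -\omega$. Substituting $P(A) = -\omega$ then yields $-\omega = A^{-1}(-\omega)A$, i.e. $\omega A = A\omega$. Writing $A = a_0 + \textbf{j}a_1$ with $a_0,a_1 \in K$ and using $\omega\textbf{j} = \textbf{j}\sigma(\omega) = \textbf{j}\omega^2$ together with $\omega \neq \omega^2$, the $\textbf{j}$-component of $\omega A = A\omega$ forces $a_1 = 0$. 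Equivalently, the centralizer of $\omega$ in $\mathcal{A}$ is exactly the maximal subfield $K = F(\omega)$, so $A \in K$.

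With $A = a_0 \in K$ we have $\Upsilon^k(A) = \tau^k(a_0)$, whence $P(A) = \prod_{k=0}^{5}\tau^k(a_0) = N_{K/L}(a_0)$, since $Gal(K/L) = \langle\tau\rangle$ has order $6$. The hypothesis $P(A) = -\omega$ thus reads $N_{K/L}(a_0) = -\omega$. But $(\mathbb{Q}(\omega,\theta)/\mathbb{Q}(\omega),\tau,-\omega)$ is the cyclic \emph{division} algebra used to build the perfect code for $6$ transmit antennas, so $-\omega$ is a non-norm: by the division criterion \eqref{non_norm_condition}, taking the divisor $p=1$ of $n=6$ gives $N_{K/L}(a) \neq -\omega$ for every $a \in K$. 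This contradiction proves the proposition.

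The step requiring the most care is the conjugation identity $\Upsilon(P(A)) = A^{-1}P(A)A$ together with the centralizer reduction $A \in K$; once these are established, the remainder simply invokes the already-known non-norm property of the perfect-code algebra. I expect no genuine obstacle beyond this, and the argument runs in exactly the same manner as Propositions \ref{lem4}, \ref{prop_a} and \ref{prop_b}, now with $n = 6$.
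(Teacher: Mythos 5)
Your proof is correct, but it takes a genuinely different route from the one in the paper. The paper's proof of Proposition \ref{prop_c} (Appendix \ref{app_6}) applies $\Upsilon^3$ to the hypothesized identity, deduces that $X \triangleq A\Upsilon(A)\Upsilon^2(A)$ commutes with $\Upsilon^3(X)$, writes $X = x + \textbf{j}y$, extracts the component equations, disposes of the degenerate cases $x=0$ and $y=0$ separately (the latter via the non-norm property of $(-\omega)^3$), and finally reaches a contradiction by showing that the ratio $\tau^3(x)/\sigma(x)$ would have to be simultaneously real-valued and complex-valued. Your argument instead applies $\Upsilon$ only once: multiplicativity \eqref{d2} plus $\Upsilon^6 = \mathrm{id}$ give the conjugation identity $\Upsilon(P(A)) = A^{-1}P(A)A$, and since $-\omega$ is fixed by $\tau$, the hypothesis forces $A$ to centralize $\omega$; because $\omega \in K\setminus F$ and $[K:F]=2$, the centralizer of $\omega$ in the quaternion algebra is exactly $K$, so $A = a_0 \in K$ and the product collapses to $N_{K/L}(a_0) = -\omega$, contradicting \eqref{non_norm_condition} with $p=1$ for the perfect-code algebra $(\mathbb{Q}(\omega,\theta)/\mathbb{Q}(\omega),\tau,-\omega)$. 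Every step checks out: the conjugation identity is a correct one-line computation, and the centralizer reduction ($\omega^2 a_1 = \omega a_1 \Rightarrow a_1 = 0$) is immediate. What your approach buys is uniformity and brevity: the identical argument, with $\gamma_{{}_M}$ in place of $-\omega$ and $n$ in place of $6$, proves Propositions \ref{lem4}, \ref{prop_a} and \ref{prop_b} as well, with no case analysis on vanishing components and no separate appeal to the non-norm property for higher powers $\gamma_{{}_M}^p$; the paper's component-wise method is more elementary but must be redone (with slightly different bookkeeping) for each value of $n$.
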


The proof of Proposition \ref{prop_c} is provided in Appendix \ref{app_6}.

\begin{figure*}
\begin{equation}\label{R_12x2}
\textbf{R} = \left[\begin{array}{rrrrrr}
 1.9498         &  1.3019-0.8660i & -0.0549-0.8660i &
-1.7469-0.8660i &  1.5636         &  0.8677  \\
 0.8677         & -1.7469-0.8660i &  1.3019-0.8660i &
-0.0549-0.8660i & -1.9498         &  1.5636 \\
 1.5636         & -0.0549-0.8660i & -1.7469-0.8660i &
 1.3019-0.8660i & -0.8677         & -1.9498 \\
-1.9498         &  1.3019-0.8660i & -0.0549-0.8660i &
-1.7469-0.8660i & -1.5636         & -0.8677 \\
-0.8677         & -1.7469-0.8660i &  1.3019-0.8660i &
-0.0549-0.8660i &  1.9498         & -1.5636 \\
-1.5636         & -0.0549-0.8660i & -1.7469-0.8660i &
 1.3019-0.8660i &  0.8677         &  1.9498\\
\end{array}\right].
\end{equation}
\hrule
\end{figure*}

\subsubsection{Minimum determinant}
From Corollary \ref{cor1}, the minimum determinant of the unnormalized code is $1$. Since the entries of the perfect code for 6 antennas are not in a principal ideal, a lower bound on the minimum determinant of the unnormalized code is 1. It can be checked that the norm of each row of $\textbf{R}$ is $\sqrt{14}$. So, taking into account a normalization factor of $1/\sqrt{(4)(14)E} = \sqrt{56E}$, the normalized minimum determinant of $\mathcal{S}_{12\times2}$ whose symbols take values from $M$-HEX is at least $\left(\frac{1}{\sqrt{14E}}\right)^{24} = \left(\frac{1}{14E}\right)^{12}$.

\begin{note}
Among the four STBCs constructed, $\mathcal{S}_{4\times2}$ and $\mathcal{S}_{8\times2}$ have orthogonal generator matrices (see Definition \ref{gen_mat} and Definition \ref{cubic_shaping}) and hence have cubic shaping while the other two codes do not. 
\end{note}

\section{ML-decoding Complexity}\label{sec_ml_comp}
In this section, we analyze the ML-decoding complexity of the constructed STBCs as a function of the constellation size $M$ which is assumed to be a square integer. Consider the ML-decoding metric given by $\Vert \textbf{Y}-\sqrt{\rho} \textbf{HS} \Vert^2$ which is to be minimized over all possible codewords $\textbf{S} \in \mathcal{S}$. We have 
\begin{eqnarray*}
 \Vert \textbf{Y}-\sqrt{\rho} \textbf{HS} \Vert^2& = &tr\left[ (\textbf{Y}-\sqrt{\rho}\textbf{HS})(\textbf{Y}-\sqrt{\rho} \textbf{HS})^H\right]\\
&  = & tr\left(\textbf{YY}^H - \sqrt{\rho}\textbf{YS}^H\textbf{H}^H \right.\\
&&  -\left. \sqrt{\rho}\textbf{HSY}^H + \rho\textbf{HSS}^H\textbf{H}^H \right)\\
& = &tr\left(\textbf{YY}^H\right)- 2\sqrt{\rho}\mathcal{R}e\left\{tr\left(\textbf{HSY}^H\right)\right\}\\
&&+ \rho~ tr\left(\textbf{HSS}^H\textbf{H}^H \right).
\end{eqnarray*}
Now, expressing a codeword $\textbf{S}$ as $\sum_ {i=1}^k \left(\bar{s}_{i}\bar{\textbf{A}}_{i} + \check{s}_{i}\check{\textbf{A}}_{i}\right)$ (see \eqref{form_stbc}), where $\bar{s}_{i},\check{s}_{i} \in \sqrt{M}$-PAM, we get 
\begin{eqnarray*}
tr\left(\textbf{HSY}^H\right) & = & \sum_{i=1}^k \left(\bar{s}_{i} tr\left({\bf H}\bar{\textbf{A}}_{i}{\bf Y}^H\right) + \check{s}_{i} tr\left({\bf H}\check{\textbf{A}}_{i}{\bf Y}^H\right) \right),
\end{eqnarray*}
\begin{eqnarray*}
 tr\left(\textbf{HSS}^H\textbf{H}^H \right) & = & \sum_{i=1}^{k}n_{ii}\bar{s}_{i}^2  + \sum_{i=1}^{k}m_{ii}\check{s}_{i}^2 \\ && + \sum_{i=1}^{k-1}\sum_{j=i+1}^{k}p_{ij}\bar{s}_{i}\bar{s}_{j}
+ \sum_{i=1}^{k-1}\sum_{j=i+1}^{k}q_{ij}\check{s}_{i}\check{s}_{j} \\
 && + \sum_{i=1}^{k}\sum_{j=1}^{k}r_{ij}\bar{s}_{i}\check{s}_{j} 
\end{eqnarray*}
where 
\begin{eqnarray*}
  && n_{ii} =   tr\left({\bf H}\bar{\textbf{A}}_{i}\bar{\textbf{A}}_{i}^H{\bf H}^H\right),~m_{ii} = tr\left({\bf H}\check{\textbf{A}}_{i}\check{\textbf{A}}_{i}^H{\bf H}^H\right),\\
 &&p_{ij} = tr\left({\bf H}\left(\bar{\textbf{A}}_{i}\bar{\textbf{A}}_{j}^H +\bar{\textbf{A}}_{j}\bar{\textbf{A}}_{i}^H\right){\bf H}^H\right),\\
&&q_{ij}=  tr\left({\bf H}\left(\check{\textbf{A}}_{i}\check{\textbf{A}}_{j}^H+\check{\textbf{A}}_{j}\check{\textbf{A}}_{i}^H\right){\bf H}^H\right),\\
&&r_{ij} = tr\left({\bf H}\left(\bar{\textbf{A}}_{i}\check{\textbf{A}}_{j}^H+\check{\textbf{A}}_{j}\bar{\textbf{A}}_{i}^H\right){\bf H}^H\right). 
\end{eqnarray*}

\noindent Therefore, the only term in the ML-decoding metric that has an entanglement of the information symbols is $\rho ~tr\left(\textbf{HSS}^H\textbf{H}^H \right)$. Hence, $\textbf{SS}^H$ defines the ML-decoding complexity of the STBC. Now, let $z_1 = \sum_{i=1}^{n}s_{1i}\theta_i$, $z_2 = \sum_{i=1}^{n}s_{2i}\theta_i$, where $s_{ki}$ take values from either QAM or HEX constellations, $\theta_i \in \mathbb{C}$. If the transmitted codeword is
\begin{equation*}
\textbf{S} = \left[ \begin{array}{rr}
                     z_1  & -\sigma(z_2) \\
                     z_2  &  \sigma(z_1) \\
                    \end{array}
\right]
\end{equation*}
where $\sigma$ performs complex conjugation, $\textbf{SS}^H = (\vert z_1 \vert^2 + \vert z_2 \vert^2)\textbf{I}$. Hence, the group of symbols $\{s_{1i},i=1,\cdots,n \}$ that $z_1$ consists of are disentangled in the decoding metric from $\{s_{2i},i=1,\cdots,n \}$ that $z_2$ consists of. So, $\{s_{1i},i=1,\cdots,n \}$ can be decoded independently of $\{s_{2i},i=1,\cdots,n \}$. In addition, if $s_{ki}$ take values from a square QAM constellation and $\theta_i$, $i=1,\cdots,n$, are of the form $\theta_i = \alpha \theta_i^\prime$ where $\alpha \in \mathbb{C}$ and $\theta_i^\prime \in \mathbb{R}$, then within each group $\{s_{ki},i=1,\cdots,n \}$, $k=1,2$, the group comprising the real parts of each symbol is separable from the group comprising the imaginary parts. Hence, 
\begin{enumerate}
 \item when $s_{ki}$ take values from a square QAM, the four groups - $\{\mathcal{R}e(s_{1i}),i=1,\cdots,n \}$, $\{\mathcal{I}m(s_{1i}),i=1,\cdots,n \}$, $\{\mathcal{R}e(s_{2i}),i=1,\cdots,n \}$ and $\{\mathcal{I}m(s_{2i}),i=1,\cdots,n \}$, are independently decodable of one another.
\item when $s_{ki}$ take values from a HEX constellation, the two groups  $\{s_{1i},i=1,\cdots,n \}$ and $\{s_{2i},i=1,\cdots,n \}$ are independently decodable of one another.
\end{enumerate}
So, we have the following proposition.
\begin{proposition}\label{ml_prop}
 Let the codeword matrices of an STBC $\mathcal{S}$ be block diagonal of the form $ \textbf{S} = \textrm{diag}\left[\textbf{A}, \Upsilon(\textbf{A}), \cdots, \Upsilon^{n-1}(\textbf{A}) \right]$ where 
\begin{eqnarray*}
\textbf{A} & = & \left[ \begin{array}{rr}
                     \sum_{i=1}^{n}s_{1i}\theta_i  & -\sigma(\sum_{i=1}^{n}s_{2i}\theta_i) \\
                     \sum_{i=1}^{n}s_{2i}\theta_i &  \sigma(\sum_{i=1}^{n}s_{1i}\theta_i) \\
                    \end{array}
\right], \end{eqnarray*}

\begin{eqnarray*}
\Upsilon^l(\textbf{A}) & = &\left[ \begin{array}{rr}
                     \tau^{l}\left(\sum_{i=1}^{n}s_{1i}\theta_i\right)  & -\sigma\tau^{l}\left(\sum_{i=1}^{n}s_{2i}\theta_i \right) \\
                     \tau^{l}\left(\sum_{i=1}^{n}s_{2i}\theta_i\right) &  \sigma\tau^{l}\left(\sum_{i=1}^{n}s_{1i}\theta_i\right)) \\
                    \end{array}
\right]
\end{eqnarray*}
and $\{\theta_i, i=1,\cdots,n \vert \theta_i \in \mathcal{O}_K\}$ is a $\mathbb{Q}(i)$-basis (or $\mathbb{Q}(\omega)$-basis) of a number field $K$ which is a Galois extension of degree $n$ over $\mathbb{Q}(i)$ (respectively $\mathbb{Q}(\omega)$) with Galois group $\langle \tau \rangle$. If $\theta_i$, $i=1,\cdots,n$, are of the form $\theta_i = \alpha \theta_i^\prime$ where $\alpha \in \mathbb{C}$ and $\theta_i^\prime \in \mathbb{R}$, then $\mathcal{S}$ is 
\begin{enumerate}
 \item four-group decodable if $s_{ki}$ take values from QAM.
\item two-group decodable if $s_{ki}$ take values from HEX. 
\end{enumerate}
\begin{proof}
 The proof is trivial and follows from the argument preceding the proposition.
\end{proof}
\end{proposition}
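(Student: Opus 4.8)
The plan is to reduce the whole statement to an analysis of the single coupling term in the ML metric. As established in the discussion immediately preceding the proposition, the expansion of $\Vert \textbf{Y}-\sqrt{\rho}\textbf{HS}\Vert^2$ has exactly one term, namely $\rho\,tr\!\left(\textbf{HSS}^H\textbf{H}^H\right)$, that entangles the information symbols; the linear term $tr\!\left(\textbf{HSY}^H\right)$ is a sum of contributions one per real symbol and therefore respects any grouping automatically. Hence it suffices to show that $tr\!\left(\textbf{HSS}^H\textbf{H}^H\right)$ splits additively into functions, one per claimed group. So the first step is simply to invoke this reduction and focus entirely on $\textbf{SS}^H$.

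Next I would use the block-diagonal form of $\textbf{S}$ to write $\textbf{SS}^H = \textrm{diag}\big[\textbf{A}\textbf{A}^H,\,\Upsilon(\textbf{A})\Upsilon(\textbf{A})^H,\,\cdots,\,\Upsilon^{n-1}(\textbf{A})\Upsilon^{n-1}(\textbf{A})^H\big]$. Each diagonal block is Alamouti-shaped, with $(1,1)$-entry $z_1^{(l)}=\tau^{l}\big(\sum_i s_{1i}\theta_i\big)$ and $(2,1)$-entry $z_2^{(l)}=\tau^{l}\big(\sum_i s_{2i}\theta_i\big)$, so the single-block computation recalled just before the proposition (using that $\sigma$ is complex conjugation) gives $\Upsilon^{l}(\textbf{A})\Upsilon^{l}(\textbf{A})^H=\big(|z_1^{(l)}|^2+|z_2^{(l)}|^2\big)\textbf{I}$. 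Thus $\textbf{SS}^H$ is diagonal and $tr\!\left(\textbf{HSS}^H\textbf{H}^H\right)=\sum_{l=0}^{n-1}\big(|z_1^{(l)}|^2+|z_2^{(l)}|^2\big)w_l$, where $w_l$ denotes the sum of the squared norms of the two columns of $\textbf{H}$ acting on the $l$-th block. Since $z_1^{(l)}$ depends only on $\{s_{1i}\}$ and $z_2^{(l)}$ only on $\{s_{2i}\}$, this expression separates additively into a $\{s_{1i}\}$-part and a $\{s_{2i}\}$-part. This already proves two-group decodability and settles the HEX case, which requires nothing further.

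For the QAM refinement I would invoke the hypothesis $\theta_i=\alpha\theta_i'$ with $\alpha\in\mathbb{C}$ and $\theta_i'\in\mathbb{R}$, and write $s_{1i}=\bar s_{1i}+i\check s_{1i}$. Then $z_1=\alpha\big(P'+iQ'\big)$ with $P'=\sum_i \bar s_{1i}\theta_i'$ and $Q'=\sum_i \check s_{1i}\theta_i'$ both real, whence $z_1^{(l)}=\tau^{l}(\alpha)\big(\tau^{l}(P')+i\,\tau^{l}(Q')\big)$ because $\tau$ fixes $i$. Provided $\tau^{l}(P')$ and $\tau^{l}(Q')$ are real, one gets $|z_1^{(l)}|^2=|\tau^{l}(\alpha)|^2\big(\tau^{l}(P')^2+\tau^{l}(Q')^2\big)$ with no cross term, so the real symbols $\{\bar s_{1i}\}$ separate from the imaginary symbols $\{\check s_{1i}\}$; the same applies to $z_2^{(l)}$. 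Summing over $l$ keeps these four groups intact because the grouping is identical for every block, giving four-group decodability.

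The only non-routine point, and the place I would be most careful, is the reality of $\tau^{l}(P')$ and $\tau^{l}(Q')$: the cross term between the real and imaginary groups is proportional to $\mathcal{I}m\!\big(\tau^{l}(Q')\,\overline{\tau^{l}(P')}\big)$, and it vanishes precisely when $\tau^{l}(\theta_i')\in\mathbb{R}$. I would justify this by noting that $\tau$ is a $\mathbb{Q}(i)$-automorphism of $K$ (so it fixes $i$) and maps the totally real generator $\theta$ to a real conjugate of itself, hence fixes the real subfield $\mathbb{Q}(\theta)\subset\mathbb{R}$ setwise; consequently each $\tau^{l}(\theta_i')$ remains real. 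Everything else is bookkeeping, which is why the authors can rightly call the proof trivial once the reduction to $\textbf{SS}^H$ and the Alamouti identity are in hand.
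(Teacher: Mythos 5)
Your proposal is correct and follows essentially the same route as the paper, which simply declares the proof to follow from the preceding argument: reduction to the entangling term $tr\left(\textbf{HSS}^H\textbf{H}^H\right)$, the Alamouti identity $\Upsilon^{l}(\textbf{A})\Upsilon^{l}(\textbf{A})^H=\bigl(\vert z_1^{(l)}\vert^2+\vert z_2^{(l)}\vert^2\bigr)\textbf{I}$ per diagonal block, and the real/imaginary split under $\theta_i=\alpha\theta_i^\prime$. Your extra care about the reality of $\tau^{l}(\theta_i^\prime)$ is a sound and worthwhile elaboration of a point the paper leaves implicit, but it does not change the approach.
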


Following Proposition \ref{ml_prop}, the ML-decoding complexity of the codes constructed in this paper is easy to analyze. We express a codeword matrix of the STBC as $\textbf{S} = \textbf{S}(\textbf{A}_0) + \textbf{S}(\textbf{A}_1)$, where $\textbf{S}(\textbf{A}_0) = \textrm{diag}\left[\textbf{A}_0, \Upsilon(\textbf{A}_0), \cdots, \Upsilon^{n-1}(\textbf{A}_0) \right]$, $n = n_t/2$, and 
\begin{eqnarray*}
 \textbf{S}(\textbf{A}_1) = \left[ \begin{array}{cccc}
                                    \textbf{O} & \textbf{O} & \ldots &  \gamma_{{}_M}\Upsilon^{n-1}(\textbf{A}_1)\\
                                          \textbf{A}_1 & \textbf{O} & \ldots  & \textbf{O} \\
                                          \textbf{O} & \Upsilon(\textbf{A}_1) & \ldots & \textbf{O} \\
 \vdots & \vdots & \ddots & \vdots \\
\textbf{O} & \textbf{O} &  \Upsilon^{n-2}(\textbf{A}_1) & \textbf{O}\\ 
                                   \end{array}
\right].
\end{eqnarray*}
Both $\textbf{S}(\textbf{A}_0)$ and $\textbf{S}(\textbf{A}_0)$ contain $n_t$ complex information symbols each. So,
{\small
 \begin{eqnarray*}
 \min_{\textbf{S} \in \mathcal{S}} \left\{ \Vert \textbf{Y}-\sqrt{\rho} \textbf{HS} \Vert^2\right\} & = & \min_{\textbf{S}(\textbf{A}_1)} \left\{ \min_{\textbf{S}(\textbf{A}_0)}\left \{ \Vert \textbf{Y}^\prime - \sqrt{\rho} \textbf{HS}(\textbf{A}_0) \Vert^2 \right \}  \right \} \\          \end{eqnarray*}
}

\noindent where $ \textbf{Y}^\prime  = \textbf{Y} - \rho\textbf{HS}(\textbf{A}_1)$. Assuming that the complex constellations used are $M$-QAM or $M$-HEX, from Proposition \ref{ml_prop}, it can be noted that calculating $\min_{\textbf{S}(\textbf{A}_0)}\left \{ \Vert \textbf{Y}^\prime - \sqrt{\rho} \textbf{HS}(\textbf{A}_0) \Vert^2 \right \}$ requires only $\mathcal{O}\left(M^{\frac{n_t}{4}}\right)$ calculations (for $n_t =4,8$) and $\mathcal{O}\left(M^{\frac{n_t}{2}}\right)$ calculations (for $n_t =6,12$). Therefore, the overall ML-decoding complexity of the STBCs is $\mathcal{O}\left(M^{n_t} ~ M^{\frac{n_t}{4}}\right) =  \mathcal{O}\left(M^{\frac{5n_t}{4}}\right)$ for $n_t = 4,8$, while for $n_t = 6,12$, it is $\mathcal{O}\left(M^{n_t} ~ M^{\frac{n_t}{2}}\right) =  \mathcal{O}\left(M^{\frac{3n_t}{2}}\right)$. In addition, hard-limiting (see \cite{pav_rajan2} for details) further reduces the overall ML-decoding complexity by a factor of $\sqrt{M}$. Table \ref{tab1} captures the salient features of the constructed codes along with their comparison with some of the best known STBCs.

  

\begin{figure}
\centering
\includegraphics[width=3in,height=2.6in]{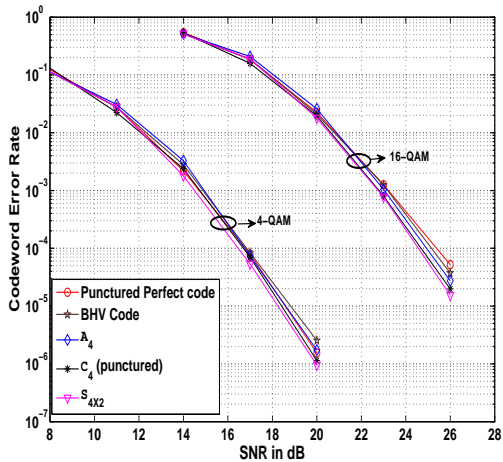}
\caption{CER performance of various rate-$2$ STBCs for the $4\times2$ system with $4$-$/16$-QAM}
\label{fig2}
\end{figure}

\begin{figure}
\centering
\includegraphics[width=3in,height=2.6in]{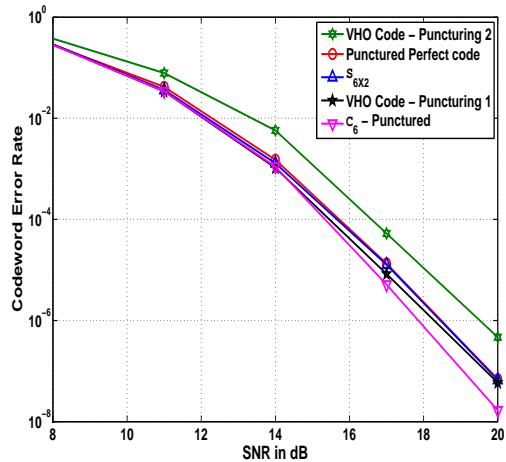}
\caption{CER performance of some well-known rate-$2$ STBCs for the $6\times2$ system at 4 bits per channel use}
\label{fig3}
\end{figure}

\section{Comparison with existing STBCs}\label{sec_sim}
We compare the performance of the STBCs constructed in this paper with some of the best known STBCs.
\subsection{$4\times2$ MIDO system}
As rival codes for $\mathcal{S}_{4\times2}$, we consider the following four STBCs - the punctured perfect code for $4$ transmit antennas (two of its layers have zero entries), the BHV code \cite{BHV}, the rate-$2$ STBC called $A_4$ code which is obtained in \cite[Section VIII-A]{roope}, and a new STBC obtained by puncturing $\mathcal{C}_4$ \cite{pav_modified}. $A_4$ has been shown \cite{roope} to be the best performing code among all the linear STBCs proposed in \cite{roope} for the $4\times 2$ MIDO system. The fourth rival code is obtained from $\mathcal{C}_4$ by simply puncturing the symbols corresponding to the basis elements $\zeta_5^2$ and $\zeta_5^3$, i.e., the entries of the first column of the codeword matrices are of the form $s_{i1} + s_{i2}\zeta_5$. This STBC has the best coding gain which can be explicitly calculated and is shown in Table \ref{tab1}. Even though the BHV code is not a full-diversity STBC, it is considered here since it is the first fast-decodable STBC proposed for the $4\times 2$ system, having an ML-decoding complexity of $\mathcal{O}(M^{4.5})$ for square $M$-QAM. We have not considered the other full-diversity STBCs proposed in \cite{FVC} - \cite{markin} since these codes have not been constructed with a focus on coding gain but only with an intention of having fast-decodability with a proven NVD. The constellations used in our simulations are $4$-QAM and $16$-QAM. 

Fig. \ref{fig2} reveals that $\mathcal{S}_{4\times2}$ has the best error performance among all codes under comparison, although punctured $\mathcal{C}_4$ has the best coding gain. This can possibly be attributed to the multiplicity of the minimum determinant - the number of codeword-difference matrices whose squared absolute value of determinant is the actual minimum determinant. We believe that punctured $\mathcal{C}_4$ has more such pairs since it is obtained by puncturing $\mathcal{C}_4$ and our method of puncturing might not be efficient. Punctured $\mathcal{C}_4$ loses only slightly to $\mathcal{S}_{4\times2}$ and these two codes beat the other three STBCs for both $4$- and $16$-QAM. 

\subsection{$6\times2$ MIDO system}
For this system, the rival codes for $\mathcal{S}_{6\times2}$ are the punctured perfect code for $6$ antennas \cite{ORBV} ($4$ layers punctured), punctured $\mathcal{C}_6$ \cite{pav_modified}, and two versions of the VHO-code for 6 transmit antennas \cite[Section X-C]{roope}. $\mathcal{C}_6$ is obtained from the CDA $(\mathbb{Q}(\omega,\zeta_7)/\mathbb{Q}(\omega), \tau:\zeta_7 \mapsto \zeta_7^3, -\omega)$, where $\zeta_7$ is the primitive $7^{th}$ root of unity. The entries of the first column of the codeword matrices of the punctured $\mathcal{C}_6$ are of the form $s_{i1} + s_{i2}\zeta_7$, $s_{ij} \in M\textrm{-HEX}$. The VHO-code for $6$ transmit antennas is a rate-3 STBC obtained from the CDA $\mathbb{Q}(\zeta_7/\mathbb{Q},\sigma:\zeta_7 \mapsto \zeta_7^3, -3/4)$. The first version of the VHO-code for the $6\times2$ system is obtained by puncturing the rate-$3$ VHO-code to obtain a rate-$2$ STBC, with the method of puncturing as depicted in \cite[Section X-C]{roope}. This STBC has an ML-decoding complexity of $\mathcal{O}\left(M^{8.5}\right)$. The second version of the rate-$2$ VHO-code is obtained by using the $\mathbb{Q}$-basis of $\mathbb{Q}(\zeta_7)$ to be $\{\zeta_7+\zeta_7^6,\zeta_7-\zeta_7^6,\zeta_7^2+\zeta_7^5,\frac{\zeta_7^2-\zeta_7^5}{2},\frac{\zeta_7^3+\zeta_7^4}{2},\zeta_7^3-\zeta_7^4 \}$ instead of the integral basis $\{1, \zeta_7,\zeta_7^2,\zeta_7^3,\zeta_7^4,\zeta_7^5\}$. This change of basis results in the overall ML-decoding complexity being only $\mathcal{O}(M^7)$ (Note that all the ML-decoding complexities are calculated after employing hard-limiting). Both versions of the VHO-code use $4$-QAM while the other STBCs use $4$-HEX constellation. 

Fig \ref{fig3} shows that $\mathcal{S}_{6\times2}$, the punctured perfect code, and the first version of the VHO-code (marked as ``VHO-code -Puncturing 1'' in the figure) have a very similar error performance. The second version of the VHO-code has poorer error performance but lower ML-decoding complexity. The best performance is that of punctured $\mathcal{C}_6$ which has the largest normalized minimum determinant.

\section{Discussion}\label{sec_discussion}
In this paper, we proposed a new method to obtain full-diversity, rate-$2$ STBCs from nonassociative algebras. We then constructed rate-$2$, fast-decodable STBCs for $4\times2$, $6 \times 2$, $8\times2$ and $12 \times 2$ systems which have large normalized minimum determinants, and STBC-schemes consisting of these STBCs have a non-vanishing determinant (NVD) so that they are DMT-optimal for their respective MIDO systems. We also showed that the Srinath-Rajan code has the same algebraic structure as the STBC constructed in this paper for the $4\times2$ system, thereby proving a previous conjecture that the STBC-scheme based on the Srinath-Rajan code has the NVD property and hence is DMT-optimal for the $4\times2$ system. However, there is still scope for improvement. Firstly, with the exception of the STBC for $4\times2$ MIDO system, the remaining STBCs in this paper have a lot of zero entries and naturally, there is the issue of high peak to average power ratio (PAPR) which needs to be lowered. Secondly, it is natural to seek conditions that enable the construction of higher rate codes (rate $>2$) with high coding gain and fast-decodability on the lines of the STBCs constructed in this paper. These are the possible directions for future research.

\appendices 

\section{Proof of Theorem \ref{thm1}}\label{app_1}
From Lemma \ref{lem1}, we know that $A_0 + \textbf{i}A_1$, when it has a right inverse, has a unique right inverse if and only if it is not a left zero divisor. To prove the theorem, we first show that any element $A_0 + \textbf{i}A_1 \in \mathcal{M}_{\mathcal{A}}$ is a left zero divisor if and only if equality holds in \eqref{condt1}. Following this, we show the existence of the right inverse to complete the proof of the theorem.

Suppose that $A_0 + \textbf{i}A_1$ is a left zero divisor of an element $B_0 + \textbf{i}B_1 + \cdots + \textbf{i}^{n-1}B_{n-1}$, $B_i \in \mathcal{A}$. Since $A_0 $ and $A_1$ are from $\mathcal{A}$ which is a CDA, we can assume that neither of $A_0$ and $A_1$ is zero since otherwise the unique right inverse always exists. Now, from our assumption,
\begin{eqnarray*}
 (A_0 + \textbf{i}A_1)(B_0 + \textbf{i}B_1 + \cdots + \textbf{i}^{n-1}B_{n-1}) &= & 0.
\end{eqnarray*}
So, noting that $A_1$ is invertible with its right inverse denoted by $A_1^{-1}$ (also its left inverse as elements of a CDA have the same left and right inverses), we have
\begin{eqnarray*}
 (A_0 ^\prime + \textbf{i})(B_0^\prime + \textbf{i}B_1^\prime + \cdots + \textbf{i}^{n-1}B_{n-1}^\prime) &= & 0
\end{eqnarray*}
where $A_0^\prime = A_0A_1^{-1}$, $B_i^\prime = \Upsilon^i(A_1)B_i $, $i=0,1,\cdots,n-1$. Due to the linear independence of $1$, $\textbf{i}$, $\cdots$, $\textbf{i}^{n-1}$ over $\mathcal{A}$, we have
\begin{eqnarray}
\label{t1e1}
 A_0^\prime B_0^\prime + \gamma_{{}_M}B_{n-1}^\prime & = &0,\\
\label{t1e2}
B_{k-1}^\prime + \Upsilon^k(A_0^\prime)B_k^\prime &  = &0, ~~k= 1,\cdots,n-1.
\end{eqnarray}
From \eqref{t1e1}, \eqref{t1e2} and the fact that $A_0^\prime \neq 0$, it is clear that $B_i^\prime \neq 0$, $i = 0,1,\cdots, n-1$. Solving \eqref{t1e2}, we arrive at 
$B_{0}^\prime  = (-1)^{n-1}\Upsilon(A_0^\prime )\Upsilon^{2}(A_0^\prime )\cdots\Upsilon^{n-1}(A_0^\prime )B_{n-1}^\prime $ using which in \eqref{t1e1}, we obtain
\begin{equation*}
 \left[(-1)^{n-1}A_0^\prime \Upsilon(A_0^\prime )\Upsilon^{2}(A_0^\prime )\cdots\Upsilon^{n-1}(A_0^\prime ) + \gamma_{{}_M}\right]B_{n-1}^\prime  = 0.
\end{equation*}
	Since $B_{n-1}^\prime  \neq 0$, we have $A_0^\prime \Upsilon(A_0^\prime)\Upsilon^{2}(A_0^\prime )\cdots\Upsilon^{n-1}(A_0^\prime ) = (-1)^{n}\gamma_{{}_M}$.  Taking $-A_0^\prime = C$, we have 
\begin{equation}\label{t1e3}
 C \Upsilon(C)\Upsilon^{2}(C )\cdots\Upsilon^{n-1}(C) = \gamma_{{}_M}.
\end{equation}
So, elements of the form $A_0 + \textbf{i}A_1$ are left zero divisors if and only if \eqref{t1e3} is satisfied. Therefore, if no $C \in \mathcal{A}$ satisfies \eqref{t1e3}, any element of the form $A_0 + \textbf{i}A_1$ has a unique right inverse which can be computed by equating the left hand side of \eqref{t1e1} with $1$. The resulting right inverse is obtained to be $B = B_0 + \textbf{i}B_1 + \cdots + \textbf{i}^{n-1}B_{n-1}$ where 
\begin{eqnarray}
\label{t1e4}
B_i & =  &\left[\Upsilon^i(A_1)\right]^{-1}B_i^\prime, ~i=0,1,\cdots,n-1,\\
\label{t1e5}
 B_{n-1}^\prime & = &\left[(-1)^{n-1}\prod_{i=0}^{n-1}\Upsilon^i(A_0^\prime)  + \gamma_{{}_M}\right]^{-1},\\
\label{t1e6}
B_{n-k}^\prime & = & (-1)^{k-1}\left(\prod_{i=n-k+1}^{n-1}\Upsilon^{i}(A_0^\prime)\right)B_{n-1}^\prime
\end{eqnarray}
\noindent and $A_0^\prime = A_0A_1^{-1}$. This completes the proof of Theorem \ref{thm1}.

\section{Proof of Lemma \ref{lem2}}\label{app_2}
For convenience, we denote $\textbf{C}\Upsilon(\textbf{C})\Upsilon^2(\textbf{C})\cdots\Upsilon^{n-1}(\textbf{C})$ by $\textbf{N}_C \in K^{m \times m}$. We first note that $\textbf{N}_C - \gamma_{{}_M}\textbf{I}$ is not invertible if and only if $\gamma_{{}_M}$ is an eigenvalue of $\textbf{N}_C$. This is because if $\gamma_{{}_M}$ is indeed an eigenvalue of $\textbf{N}_C$, then $\textbf{N}_C\textbf{x} = \gamma_{{}_M}\textbf{x}$ so that $\textbf{N}_C - \gamma_{{}_M}\textbf{I}$ is not full-ranked. Conversely, if $\textbf{N}_C - \gamma_{{}_M}\textbf{I}$ is not full-ranked, we have $\gamma_{{}_M}$ to be one of its eigenvalues. We now proceed to prove that $\gamma_{{}_M}$ is not an eigenvalue of $\textbf{N}_C$ when $C\Upsilon(C)\Upsilon^2(C)\cdots\Upsilon^{n-1}(C) \triangleq N_C \neq \gamma_{{}_M}$ for any $C \in \mathcal{A}$.

Suppose that $\gamma_{{}_M}$ is an eigenvalue of $\textbf{N}_C$. We first establish that the eigenvector of $\textbf{N}_C$ associated with $\gamma_{{}_M}$ has entries\footnote{In general, for any square matrix with entries from a field $K$, its eigenvalues and the entries of the associated eigenvectors need not be in $K$ but will be in the algebraic closure of $K$.} in $K$. Since $\gamma_{{}_M}$ is an element of the maximal subfield $K$, the entries of the rank-deficient matrix $\textbf{N}_C - \gamma_{{}_M}\textbf{I}$ are all elements of $K$. Hence, $\textbf{N}_C - \gamma_{{}_M}\textbf{I}$ can be viewed as the matrix of a linear transformation from the $m$-dimensional vector space $K^{m\times1}$ (over $K$) to itself with the kernel of the transformation being nontrivial and consisting of the eigenvectors of $\textbf{N}_C$ associated with $\gamma_{{}_M}$. We choose one such eigenvector and denote it by $\textbf{e}$. So, we have 
\begin{equation}\label{l2e1}
 \textbf{N}_C\textbf{e} = \gamma_{{}_M}\textbf{e}.
\end{equation}
Now, we note that $\textbf{N}_C$ is also obtained by left regular representation \cite{jacobson} as the matrix of the linear transformation $ \lambda_{N_C} : \mathcal{A} \rightarrow \mathcal{A}$, with $\lambda_{N_C}(B) = N_cB$, $\forall B \in \mathcal{A}$. Observing that any element of $\mathcal{A}$ can be expressed as $[ 1,~\textbf{j},~\cdots,~\textbf{j}^{m-1}]\textbf{k}$, $\textbf{k} \in K^{m\times1}$, let $E = [ 1,~\textbf{j},~\cdots,~\textbf{j}^{m-1}]\textbf{e}$, with $\textbf{e}$ defined in \eqref{l2e1}. So,
\begin{eqnarray}
\label{l2e2}
 \lambda_{N_C}(E) = N_cE & = &[ 1,~\textbf{j},~\cdots,~\textbf{j}^{m-1}]\textbf{N}_C\textbf{e} \\
\label{l2e3}
& = & [ 1,~\textbf{j},~\cdots,~\textbf{j}^{m-1}]\gamma_{{}_M}\textbf{e}\\
\label{l2e4}
& = & E\gamma_{{}_M},
\end{eqnarray}
where \eqref{l2e2} is by definition of left regular representation, \eqref{l2e3} is due to \eqref{l2e1}, and \eqref{l2e4} follows by noting that $\gamma_{{}_M}$ is an element of $K$. Hence, $E^{-1}N_CE = \gamma_{{}_M}$. Since we have denoted $\textbf{C}\Upsilon(\textbf{C})\Upsilon^2(\textbf{C})\cdots\Upsilon^{n-1}(\textbf{C})$ by $\textbf{N}_C$, we have
\begin{eqnarray}
 \nonumber
  \gamma_{{}_M} & = & E^{-1}C\Upsilon(C)\Upsilon^2(C)\cdots\Upsilon^{n-1}(C)E\\ \nonumber
& = & E^{-1}C\left(\Upsilon(E)\left(\Upsilon(E)\right)^{-1}\right)\Upsilon(C) \\ \nonumber
&& \times \left(\Upsilon^2(E)\left(\Upsilon^2(E)\right)^{-1}\right)\Upsilon^2(C)\\ \nonumber
&& \times \cdots \left(\Upsilon^{n-1}(E)\left(\Upsilon^{n-1}(E)\right)^{-1}\right)\Upsilon^{n-1}(C)E \\
\label{l2e5}
 & = & C^\prime \Upsilon(C^\prime)\Upsilon^2(C^\prime)\cdots\Upsilon^{n-1}(C^\prime)
\end{eqnarray}
where $C^\prime \triangleq E^{-1}C\Upsilon(E)$ and \eqref{l2e5} is obtained using \eqref{d1} and \eqref{d2} and also noting that $\left(\Upsilon(E)\right)^{-1} = \Upsilon(E^{-1})$ (since $\Upsilon(E)\Upsilon(E^{-1}) = \Upsilon(EE^{-1}) = 1$). But \eqref{l2e5} leads to a contradiction since there exists no $C \in \mathcal{A}$ such that $C\Upsilon(C)\Upsilon^2(C)\cdots\Upsilon^{n-1}(C) = \gamma_{{}_M}$. Therefore, $\gamma_{{}_M}$ is never an eigenvalue of $\textbf{C}\Upsilon(\textbf{C})\Upsilon^2(\textbf{C})\cdots\Upsilon^{n-1}(\textbf{C})$ which proves Lemma \ref{lem2}. 

\section{Proof of Theorem \ref{Thm2}}\label{app_3}

Let $\mathcal{A}_{mat}$ be the ring of $m\times m$ sized invertible matrices that are representations of elements of $\mathcal{A}$, i.e.,
\begin{equation*}
 \mathcal{A}_{mat} = \{\textbf{A} \vert \textbf{A} \textrm{ is the matrix representation of }A \in \mathcal{A} \}.
\end{equation*}
 We have already assumed that $\gamma_{{}_M}$ does not belong to the centre of $\mathcal{A}$ and so does not commute with every element of $\mathcal{A}$ because of which $\mathcal{M}_\mathcal{A}$ does not have a matrix representation in $M_{mn}(K)$. But every element of $\mathcal{A}_{mat}$ commutes with $\gamma_{{}_M}\textbf{I}$ (which is not the matrix representation of $\gamma_{{}_M}$ in $M_{m}(K)$ and does not belong to $\mathcal{A}_{mat}$). For any finite or infinite set $\mathcal{P}$ of $m \times m$ sized matrices, we use the notation $\mathcal{A}_{mat}[\mathcal{P}]$ to denote the ring of $m\times m$ matrices generated by $\mathcal{P}$ over $\mathcal{A}_{mat}$. We now consider the ring $\mathcal{A}_{mat}[\gamma_{{}_M}\textbf{I}]$ which is not a division ring (for example, if ${\bf \Gamma}_M$ denotes the matrix representation of $\gamma_{{}_M}$ in $M_{m}(K)$, then ${\bf \Gamma}_M-\gamma_{{}_M}\textbf{I}$ is not invertible). Let  $\mathcal{A}_{inv-mat}[\gamma_{{}_M}\textbf{I}] = \{\textbf{B} \vert \textbf{B}^{-1} \in \mathcal{A}_{mat}[\gamma_{{}_M}\textbf{I}] \}$, i.e., the set of inverses of all invertible matrices in $\mathcal{A}_{mat}[\gamma_{{}_M}\textbf{I}]$. Next, consider the infinite ring $\mathcal{M}$ whose elements are matrices of the form 
\begin{equation} \label{inv_M}
 \textbf{B} = \left[ \begin{array}{cccc}
                      \textbf{B}_0 & \gamma_{{}_M}\Upsilon(\textbf{B}_{n-1}) & \cdots & \gamma_{{}_M}\Upsilon^{n-1}(\textbf{B}_{1})\\
                      \textbf{B}_1 & \Upsilon(\textbf{B}_0) & \cdots & \gamma_{{}_M}\Upsilon^{n-1}(\textbf{B}_{2})\\
                      \textbf{B}_2 & \Upsilon(\textbf{B}_1) & \cdots & \gamma_{{}_M}\Upsilon^{n-1}(\textbf{B}_{3})\\
                      \textbf{B}_3 & \Upsilon(\textbf{B}_{2}) & \cdots & \gamma_{{}_M}\Upsilon^{n-1}(\textbf{B}_{4})\\
                      \vdots & \vdots & \ddots  & \vdots\\
                      \textbf{B}_{n-1} & \Upsilon(\textbf{B}_{n-2}) & \cdots & \Upsilon^{n-1}(\textbf{B}_0)\\                      
                     \end{array} \right]
\end{equation}
with $\textbf{B}_i \in \mathcal{A}_{mat}[\{\gamma_{{}_M}\textbf{I}\} \cup \mathcal{A}_{inv-mat}[\gamma_{{}_M}\textbf{I}]]$, $i=0,\cdots,n-1$. With these facts developed, we proceed with the proof of the theorem as follows.

{\bf 1)} {\it Proof that $\textbf{M}$ is invertible}: Let $B = B_0 +\textbf{i}B_1+\cdots+ \textbf{i}^{n-1}B_{n-1}$ be the unique right inverse of $A_0+\textbf{i}A_1$ given by \eqref{t1e4} - \eqref{t1e6} with $A_0,A_1,B_0,\cdots,B_{n-1} \in \mathcal{A}$. Let
\begin{eqnarray}
\label{l3e1}
 \textbf{A}_0^\prime & \triangleq &\textbf{A}_0\textbf{A}_1^{-1}, \\ 
\label{l3e2}
 \textbf{B}_{n-1}^\prime & \triangleq &\left[(-1)^{n-1}\prod_{i=0}^{n-1}\Upsilon^{i}(\textbf{A}_0^\prime ) + \gamma_{{}_M}\textbf{I}\right]^{-1},
\end{eqnarray}
\begin{eqnarray}
\label{l3e3}
\textbf{B}_{n-k}^\prime & \triangleq  & (-1)^{k-1}\left(\prod_{i=n-k+1}^{n-1}\Upsilon^{i}(\textbf{A}_0^\prime)\right)\textbf{B}_{n-1}^\prime,\\
\label{l3e4}
\textbf{B}_i & \triangleq  &\left[\Upsilon^i(\textbf{A}_1)\right]^{-1}\textbf{B}_i^\prime, ~i=0,1,\cdots,n-1.	
\end{eqnarray}
\noindent The existence of $\textbf{B}_{n-1}^\prime$ can be verified by applying Theorem \ref{thm1} and Lemma \ref{lem2} in that order. The inverse of $\textbf{M}$ has the form shown in \eqref{inv_M} with $\textbf{B}_i$ obtained using \eqref{l3e1}-\eqref{l3e4}. To check that this matrix, denoted by $\textbf{M}_{inv}$, is indeed the inverse of $\textbf{M}$, note that both $\textbf{M}$ and $\textbf{M}_{inv}$ belong to $\mathcal{M}$ and hence their product also is in $\mathcal{M}$. So, it only suffices to check that the first $m$ columns of the product of $\textbf{M}$ and $\textbf{M}_{inv}$ are $[\textbf{I},\textbf{O},\textbf{O},\cdots,\textbf{O}]^T$, which follows upon using \eqref{l3e1}-\eqref{l3e4}.

{\bf 2)} {\it Proof that $det(\textbf{M}) \in L$}: It can be noted that $\textbf{M} \in K^{nm \times nm}$ so that $det(\textbf{M}) \in K$. Also, $\tau^{i}\left(det(\textbf{M})\right) = det\left(\Upsilon^{i}(\textbf{M})\right)$ where, as mentioned before, $\Upsilon^{i}(\textbf{M})$ refers to the matrix obtained by applying $\tau^{i}$ to each entry of $\textbf{M}$, $i=0,1,\cdots,n-1$. To prove that $det(\textbf{M}) \in L$, it suffices to show that $det\left(\Upsilon(\textbf{M})\right) = det(\textbf{M})$ since the only elements fixed by $Gal(K/L) = \langle \tau \rangle$ are the elements of $L$. Let $\textbf{P}(i,j)$ denote the $(i,j)^{th}$ entry of a matrix $\textbf{P}$. Consider permutation matrices $\textbf{P}_1$ and $\textbf{P}_2$ whose nonzero elements are 
\begin{eqnarray*}
 \textbf{P}_1(k, (n-1)m+k) & = & 1,  ~~ k = 1,2,\cdots, m, \\
 \textbf{P}_1(k, k-m) & = & 1,  ~~ k = m+1,m+2,\cdots, nm,\\
 \textbf{P}_2(k, m+k) & = & 1,  ~~ k = 1,2,\cdots, (n-1)m, \\
 \textbf{P}_2(k, k-(n-1)m) & = & 1,  ~~ k = (n-1)m+1,\cdots, nm.
\end{eqnarray*}
Now, $\textbf{P}_1\Upsilon(\textbf{M})\textbf{P}_2$ has the following structure.
\begin{equation} \label{M_form}
 \left[ \begin{array}{cccccc}
                      \textbf{A}_0 & \textbf{O} & \cdots & \textbf{O} & \cdots & \Upsilon^{n-1}(\textbf{A}_1)\\
                      \gamma_{{}_M}\textbf{A}_1 & \Upsilon(\textbf{A}_0) & \cdots & \textbf{O} & \cdots & \textbf{O}\\
                      \textbf{O} & \Upsilon(\textbf{A}_1) & \cdots & \vdots & \cdots & \textbf{O}\\
                      \textbf{O} & \textbf{O} & \cdots & \Upsilon^{i-1}(\textbf{A}_0) & \cdots & \textbf{O}\\
                      \vdots & \vdots & \cdots &\Upsilon^{i-1}(\textbf{A}_1) & \cdots & \textbf{O}\\
                      \vdots & \vdots & \vdots & \cdots & \cdots & \textbf{O}\\
                      \textbf{O} & \textbf{O} & \cdots & \textbf{O} & \cdots & \Upsilon^{n-1}(\textbf{A}_0)\\                      
                     \end{array} \right].
\end{equation}
Therefore, with diagonal matrices $\textbf{G}_1$ and $\textbf{G}_2$ whose nonzero diagonal elements are defined as 
\begin{eqnarray*}
 \textbf{G}_1(k,k)&  = & \gamma_{{}M}, ~~ k = 1,2,\cdots,m, \\
 \textbf{G}_1(k,k)&  = & 1, ~~ k = m+1,\cdots,nm,\\
\textbf{G}_2(k,k)&  = & \gamma_{{}M}^{-1}, ~~ k = 1,2,\cdots,m,\\
\textbf{G}_2(k,k)&  = & 1, ~~ k = m+1,\cdots,nm,
\end{eqnarray*}
we observe that $\textbf{M} = \textbf{G}_1 \textbf{P}_1\Upsilon(\textbf{M})\textbf{P}_2 \textbf{G}_2$ so that $det(\textbf{M}) = det\left(\Upsilon(\textbf{M})\right)$ (for $det(\textbf{G}_1)det(\textbf{G}_2) = 1$ and $\textbf{P}_1$ and $\textbf{P}_2$ are permutation matrices). Therefore $det(\textbf{M}) \in L$. 

\section{Proof of Proposition \ref{lem4}}\label{app_4a}
\noindent Let $A = a +\textbf{j}b$, $a,b \in \mathbb{Q}(i,\sqrt{5})$. Suppose that 
\begin{equation}\label{l4e1}
 A\Upsilon(A) = i.
\end{equation}
 Now, if $b = 0$, then $a\tau(a) = i$ which is not a possibility in $(\mathbb{Q}(i,\sqrt{5})/\mathbb{Q}(i),\tau,i)$ (which has $i$ as its non-norm element). If $ a= 0$, we have $\textbf{j}b\textbf{j}\tau(b) = i$ so that 
\begin{equation}\label{l4e2}
 \sigma(b)\tau(b) = -i.
\end{equation}
Applying $\sigma$ throughout in \eqref{l4e2}, we get $b\sigma\tau(b) = i$. Next, applying $\tau$ throughout in \eqref{l4e2}, we get $b\sigma\tau(b) = -i$ which leads to a contradiction. So, \eqref{l4e2} is not true (Note that $\tau^2$ is identity and $\tau\sigma = \sigma\tau$) and we can assume that $a, b \neq 0$. Now, applying $\Upsilon$ throughout in \eqref{l4e1}, we obtain  $\Upsilon(A)A = i$. Hence,
\begin{equation*}
 (a+ \textbf{j}b)(\tau(a) + \textbf{j}\tau(b)) = (\tau(a) + \textbf{j}\tau(b))(a+ \textbf{j}b)
\end{equation*}
which leads to 
\begin{equation}\label{l4e3}
 \frac{\sigma(b)}{\sigma\tau(b)} = \sigma\left(\frac{b}{\tau(b)}\right) = \frac{b}{\tau(b)}
\end{equation}
Hence, $b/ \tau(b) $ is invariant under $\sigma$ and hence belongs to $\mathbb{Q}(\sqrt{5})$. Also, from \eqref{l4e1}, we have
\begin{eqnarray}\label{l4e4}
 a\tau(a) - \sigma(b)\tau(b) & = & i\\
\nonumber
b\tau(a) + \sigma(a)\tau(b) & = & 0
\end{eqnarray}
so that 
\begin{equation}\label{l4e6}
 \frac{b}{\tau(b)} = - \frac{\sigma(a)}{\tau(a)}.
\end{equation}
Using \eqref{l4e6} in \eqref{l4e4}, we obtain
\begin{equation*}
 \frac{\tau(a)}{\sigma(a)}\left( a\sigma(a) + b\sigma(b) \right) = i.
\end{equation*}
Now, $a\sigma(a) + b\sigma(b)$ is invariant under $\sigma$ and hence is in $\mathbb{Q}(\sqrt{5})$. So, $\tau(a)/\sigma(a)$ is imaginary and belongs to $\mathbb{Q}(i,\sqrt{5})$ using which in \eqref{l4e6}, we note that $b/ \tau(b)$ is also imaginary. This contradicts the earlier result obtained below \eqref{l4e3}. Therefore, our assumption that $A\Upsilon(A) = i$ is false which proves the lemma.

\section{Proof of Proposition \ref{prop_a} }\label{app_4}  

\noindent Let $A = a + \textbf{j}b$, $a,b \in \mathbb{Q}(\omega,\theta)$ such that 
\begin{equation}\label{a1e1}
 A\Upsilon(A)\Upsilon^2(A) = \omega.
\end{equation}
Firstly, $a \neq 0$ since otherwise $(\textbf{j}b)(\textbf{j}\tau(b))(\textbf{j}\tau^2(b)) = \omega $ which is not possible. Secondly, $b \neq 0$ since $a\tau(a)\tau^2(a) \neq \omega$ for any $a \in \mathbb{Q}(\omega,\theta)$ (for $\omega$ is a non-norm element in $(\mathbb{Q}(\omega,\theta)/\mathbb{Q}(\omega),\tau, \omega)$). Hence, we assume that $a,b\neq0$. Applying $\Upsilon^2$ throughout in $\eqref{a1e1}$, we obtain $\Upsilon^2(A)A\Upsilon(A) = \omega$ so that 
$A\Upsilon(A)\Upsilon^2(A) = \Upsilon^2(A)A\Upsilon(A)$. Now, $A\Upsilon(A) = x + \textbf{j}\sigma(y)$ where $x = a \tau(a) - \sigma(b)\tau(b)$, $\sigma(y) = b\tau(a) + \sigma(a)\tau(b)$. So, 
\begin{equation*}
 \left(x+\textbf{j}\sigma(y)\right)(\tau^2(a)+\textbf{j}\tau^2(b)) = (\tau^2(a)+\textbf{j}\tau^2(b))\left(x+\textbf{j}\sigma(y)\right)
\end{equation*}
from which we have
\begin{equation}\label{a1e2}
 \sigma(y)\tau^2(a) + \sigma(x)\tau^2(b) = \sigma\tau^{2}(a)\sigma(y)+x\tau^2(b).
\end{equation}
From \eqref{a1e1}, we obtain
\begin{eqnarray}
 \label{a1e3}
x\tau^2(a) - y\tau^2(b) &  = & \omega \\
\label{a1e4}
\sigma(x)\tau^2(b) + \sigma(y)\tau^2(a) & = &0.
\end{eqnarray}
If $x = 0$, then $y =0$ (since $a \neq 0$) and \eqref{a1e3} is not true. So, we can assume $x \neq 0$. Using \eqref{a1e2} and \eqref{a1e4}, we get 
\begin{equation*}\label{a1e5}
\frac{\sigma(y)}{\tau^2(b)} = -\frac{x}{\sigma\tau^2(a)} = -\frac{\sigma(x)}{\tau^2(a)} 
\end{equation*}
so that $\frac{x}{\sigma\tau^2(a)} = \sigma\left(\frac{x}{\sigma\tau^2(a)}\right)$. Therefore, $x/\sigma\tau^2(a)$ is real-valued and belongs to $\mathbb{Q}(\theta)$. Now, using \eqref{a1e4} in \eqref{a1e3}, we get
\begin{equation*}
 \frac{\tau^2(a)}{\sigma(x)}\big[ x\sigma(x) + y\sigma(y) \big] = \omega
\end{equation*}
Since $x\sigma(x) + y\sigma(y)$ is invariant under $\sigma$ and hence real-valued, $\frac{\tau^2(a)}{\sigma(x)}$ must be complex-valued which contradicts the previous result. Hence, \eqref{a1e1} is false and there exists no $A \in \mathcal{A}$ such that $A\Upsilon(A)\Upsilon^2(A) = \omega$.

\section{Proof of Proposition \ref{prop_b} }\label{app_5}  

\noindent Let $A = a + \textbf{j}b$, $a,b \in \mathbb{Q}(i,\theta)$ such that 
\begin{equation}\label{a2e1}
 A\Upsilon(A)\Upsilon^2(A)\Upsilon^3(A) = i.
\end{equation}
 Applying $\Upsilon^2$ throughout \eqref{a2e1}, we obtain $\Upsilon^2(A)\Upsilon^3(A)A\Upsilon(A) = i$ so that $A\Upsilon(A)\Upsilon^2(A)\Upsilon^3(A) = \Upsilon^2(A)\Upsilon^3(A)A\Upsilon(A)$. Let $A\Upsilon(A) = x + \textbf{j}y$ where $x = a \tau(a) - \sigma(b)\tau(b)$, $y = b\tau(a) + \sigma(a)\tau(b)$. So, we have
\begin{equation*}
 \left(x+\textbf{j}y\right)(\tau^2(x)+\textbf{j}\tau^2(y)) = (\tau^2(x)+\textbf{j}\tau^2(y))\left(x+\textbf{j}y\right)
\end{equation*}
from which we obtain
\begin{equation}\label{a2e3}
 y\tau^2(x) + \sigma(x)\tau^2(y) = \sigma\tau^{2}(x)y+ x\tau^2(y).
\end{equation}
From \eqref{a2e1}, we obtain
\begin{eqnarray}
 \label{a2e4}
x\tau^2(x) - \sigma(y)\tau^2(y) &  = & i\\
\label{a2e5}
\sigma(x)\tau^2(y) + y\tau^2(x) & = &0.
\end{eqnarray}
Now, if $x = 0$, 
\begin{equation}\label{temp}
 \sigma(y)\tau^2(y) = -i.
\end{equation}
By applying $\sigma$ and $\tau^2$ separately throughout \eqref{temp}, we obtain $y\sigma\tau^2(y) = i$ and $y\sigma\tau^2(y) = -i$ which contradict each other. Hence, $x \neq 0$. On the other hand, if $y = 0$, then $x\tau^2(x) = i$ and applying $\tau$ we get $\tau(x)\tau^3(x) = i$ so that $x\tau^2(x)\tau(x)\tau^3(x) = (i)(i) = -1$. Since $x \in \mathbb{Q}(i,\theta)$, this implies that there exists some $x \in \mathbb{Q}(i,\theta)$ such that $N_{\mathbb{Q}(i,\theta)/\mathbb{Q}(i)}(x) = -1$. This is not true since $i^t$ is not a norm of any field element of $\mathbb{Q}(i,\theta)$ for $t=1,2,3$. Hence, we can assume that $x,y \neq 0$. Using \eqref{a2e3} and \eqref{a2e5}, we get 
\begin{equation*}
-\frac{\sigma\tau^2(x)}{x} = \frac{\tau^2(y)}{y} = -\frac{\tau^2(x)}{\sigma(x)} 
\end{equation*}
so that $\frac{\tau^2(x)}{\sigma(x)} = \sigma\left(\frac{\tau^2(x)}{\sigma(x)}\right)$. Therefore, $\frac{\tau^2(x)}{\sigma(x)}$ is real-valued and belongs to $\mathbb{Q}(\theta)$. Now, using \eqref{a2e5} in \eqref{a2e4}, we get
\begin{equation*}
 \frac{\tau^2(x)}{\sigma(x)}\big[ x\sigma(x) + y\sigma(y) \big] = i.
\end{equation*}
Since $x\sigma(x) + y\sigma(y)$ is invariant under $\sigma$ and hence real-valued, it must be that $\frac{\tau^2(x)}{\sigma(x)}$ is complex-valued which contradicts the previous result. Hence, \eqref{a2e1} is false and there exists no $A \in \mathcal{A}$ such that $A\Upsilon(A)\Upsilon^2(A)\Upsilon^3(A) = i$.

\section{Proof of Proposition \ref{prop_c} }\label{app_6} 

\noindent Let $A = a + \textbf{j}b$, $a,b \in \mathbb{Q}(\omega,\theta)$ such that 
\begin{equation}\label{a3e1}
 A\Upsilon(A)\Upsilon^2(A)\cdots\Upsilon^5(A) = -\omega.
\end{equation}
Applying $\Upsilon^3$ throughout \eqref{a3e1}, we observe that $A\Upsilon(A)\Upsilon^2(A)$ and $\Upsilon^3(A)\Upsilon^4(A)\Upsilon^5(A)$ commute. Let $A\Upsilon(A)\Upsilon^2(A) = x + \textbf{j}y$ where $x = x^\prime\tau^2(a) - \sigma(y^\prime)\tau^2(b)$, $y = y^\prime\tau^2(a)+\sigma(x^\prime)\tau^2(b)$ with $x^\prime = a \tau(a) - \sigma(b)\tau(b)$, $y^\prime = b\tau(a) + \sigma(a)\tau(b)$. So, we have
\begin{equation*}
 \left(x+\textbf{j}y\right)(\tau^3(x)+\textbf{j}\tau^3(y)) = (\tau^3(x)+\textbf{j}\tau^3(y))\left(x+\textbf{j}y\right)
\end{equation*}
from which we obtain
\begin{equation}\label{a3e3}
 y\tau^3(x) + \sigma(x)\tau^3(y) = \sigma\tau^{3}(x)y+ x\tau^3(y).
\end{equation}
From \eqref{a3e1}, we obtain
\begin{eqnarray}
 \label{a3e4}
x\tau^3(x) - \sigma(y)\tau^3(y) &  = & -\omega\\
\label{a3e5}
\sigma(x)\tau^3(y) + y\tau^3(x) & = &0.
\end{eqnarray}
Now, if $x = 0$, 
\begin{equation}\label{temp1}
 \sigma(y)\tau^3(y) = \omega.
\end{equation}
By applying $\sigma$ and $\tau^3$ separately throughout \eqref{temp1}, we obtain $y\sigma\tau^3(y) = \omega^2$ (for $\sigma(\omega) = \omega^2$) and $y\sigma\tau^3(y) = \omega$, which contradict each other. Hence, $x \neq 0$. On the other hand, if $y = 0$, then $x\tau^3(x) = -\omega$ and therefore, $\tau(x)\tau^4(x) = -\omega$, $\tau^2(x)\tau^5(x) = -\omega$. Using these results, we arrive at
\begin{equation}\label{add_eq}
\left(x\tau^3(x)\right)\left(\tau(x)\tau^4(x)\right)\left( \tau^2(x)\tau^5(x)\right) = (-\omega)^3 = -1. 
\end{equation}
  Since $x \in \mathbb{Q}(\omega,\theta)$, \eqref{add_eq} implies that there exists some $x \in \mathbb{Q}(\omega,\theta)$ such that $N_{\mathbb{Q}(\omega,\theta)/\mathbb{Q}(\omega)}(x) = -1$. This is not true since $(-\omega)^t$ is not a norm of any field element of $\mathbb{Q}(\omega,\theta)$ for $t=1,\cdots,5$. Hence, we can assume that $x,y \neq 0$. Using \eqref{a3e3} and \eqref{a3e5}, we get 
\begin{equation*}
-\frac{\sigma\tau^3(x)}{x} = \frac{\tau^3(y)}{y} = -\frac{\tau^3(x)}{\sigma(x)} 
\end{equation*}
so that $\frac{\tau^3(x)}{\sigma(x)} = \sigma\left(\frac{\tau^3(x)}{\sigma(x)}\right)$. Therefore, $\frac{\tau^3(x)}{\sigma(x)}$ is real-valued and belongs to $\mathbb{Q}(\theta)$. Now, using \eqref{a3e5} in \eqref{a3e4}, we get
\begin{equation*}
 \frac{\tau^3(x)}{\sigma(x)}\big[ x\sigma(x) + y\sigma(y) \big] = \omega.
\end{equation*}
Since $x\sigma(x) + y\sigma(y)$ is invariant under $\sigma$ and hence real-valued, $\frac{\tau^3(x)}{\sigma(x)}$ has to be complex-valued which contradicts the previous result. Hence, \eqref{a3e1} is false and there exists no $A \in \mathcal{A}$ such that $A\Upsilon(A)\Upsilon^2(A)\cdots \Upsilon^5(A) = -\omega$.

\section*{Acknowledgements}
We thank Dr. Nadya Markin for useful discussions on the topic. We also thank the anonymous reviewers for their constructive suggestions that have greatly helped in improving the quality of the manuscript.

\end{document}